\documentclass[12pt]{article}
\usepackage[T1]{fontenc}
\usepackage[
left=0.6in,
right=0.6in,
top=0.6in,
bottom=0.6in
]{geometry}   
\usepackage{setspace}
\usepackage{amsmath,amssymb} 
\usepackage{amsthm}
\usepackage{algorithm} 
\usepackage{algpseudocode}
\usepackage{float}
\usepackage{placeins}

\usepackage{amsmath}
\usepackage{amsfonts}
\usepackage{amssymb}
\usepackage{amsthm}
\usepackage{nicefrac}
\usepackage{mathrsfs}

\usepackage{graphicx}
\graphicspath{{media/}}

\usepackage{booktabs}
\usepackage{array}
\usepackage{tabularx}
\usepackage{makecell}
\usepackage{adjustbox}
\usepackage{subcaption}

\usepackage{algorithm}
\usepackage{algpseudocode}

\usepackage{xcolor}
\usepackage{soul}
\usepackage{enumerate}
\usepackage{changepage}
\usepackage{microtype}

\usepackage{url}
\usepackage{hyperref}

\usepackage{parskip}
\usepackage{mathrsfs}
\usepackage{graphicx}
\usepackage{xcolor}
\usepackage{subcaption}
\usepackage{forest}
\usepackage{amsmath}
\usepackage{amssymb}
\usepackage{hyperref}
\usepackage{booktabs}
\usepackage{array}
\usepackage{tabularx}
\usepackage{makecell}
\usepackage{adjustbox}

\newcolumntype{Y}{>{\centering\arraybackslash}X}
\newcolumntype{C}{>{\centering\arraybackslash}X}
\usepackage{algorithm}
\usepackage{algpseudocode}

\algnewcommand{\algorithmicswitch}{\textbf{switch}}
\algnewcommand{\algorithmiccase}{\textbf{case}}

\algdef{SE}[SWITCH]{Switch}{EndSwitch}[1]
  {\algorithmicswitch\ #1\ \algorithmicdo}
  {\algorithmicend\ \algorithmicswitch}

\algdef{SE}[CASE]{Case}{EndCase}[1]
  {\algorithmiccase\ #1:}
  {\algorithmicend\ \algorithmiccase}

\usepackage{amssymb}
\usepackage{amsmath}
\usepackage{array}
\usepackage{changepage}
\usepackage{array}
\usepackage{booktabs}
\usepackage{tabularx}

\newtheoremstyle{myplain}
  {8pt}        
  {8pt}        
  {\itshape}   
  {}           
  {\bfseries}  
  {.}          
  {0.5em}      
  {}           

\theoremstyle{myplain}

\newtheorem{result}{Result}
\newtheorem{observation}{Observation}
\newtheorem{lemma}{Lemma}
\newtheorem{theorem}{Theorem}
\newtheorem{corollary}{Corollary}

\theoremstyle{definition}
\newtheorem{definition}{Definition}

\title{\textsc{Min-Max Uniform Circle Formation by Asynchronous Mobile Robots}}

\author{
Animesh Maiti\thanks{Corresponding author: p22ma201@iitj.ac.in}, Prakhar Shukla, Subhash Bhagat
\\[1ex]
Department of Mathematics\\
Indian Institute of Technology Jodhpur\\
Jodhpur, Rajasthan, India\\[1ex]
}

 \date{}

\begin{document}

\maketitle

\begin{abstract}
 Given a set of point robots $\mathcal{R}$ in the Euclidean plane and a target circle $\mathbf C$ enclosing all robot positions, the \textsc{Min-Max Uniform Circle Formation (MMUCF)} problem requires the robots to move to distinct positions on $\mathbf C$ such that the final configuration forms a regular $n$-gon while minimizing the maximum distance traveled by any robot. Uniform circle formation is a fundamental coordination task in swarm robotics with applications in perimeter monitoring, surveillance, boundary coverage, and pattern formation. The literature does not address the optimization of the maximum individual displacement during the formation process. 
    
In this work, we study the min--max versions of the circle formation and uniform circle formation problems, where the goal is to minimize the maximum distance traveled by any robot. We consider these problems under the $\mathcal{ASYNC}$ model, where robots are autonomous, anonymous, identical, homogeneous, oblivious, and silent, and operate under the \textit{Look--Compute--Move} model with non-rigid motion. We first give necessary conditions for a deterministic solution and then present deterministic, distributed, and collision-free algorithms that form a circle and a uniform circle in finite time while minimizing the maximum movement. The algorithms ensure that robots reach distinct positions on the circle and, in the uniform case, equally spaced positions on $\mathbf C$ under the considered model.   
\end{abstract}
\noindent
\textbf{Keywords:} Uniform Circle Formation, Autonomous Mobile Robots,
Asynchronous Model,
Non-Rigid Movements,
Persistent Lights,
Min--Max Travel-Distance Optimization,
Optimal Assignment.

\section{Introduction}  
In the past two decades, multi-robot systems have become an important research focus because of their robustness, scalability, and versatility in diverse applications~\cite{albiero2021swarm, chun2016robot, kegeleirs2019random, parunak1996applications, steinberg2006intelligent}. A swarm of mobile robots is a distributed multi-agent system consisting of numerous small, \textit{autonomous}, \textit{homogeneous}, and \textit{anonymous} robots that cooperate to accomplish a common task. Each robot operates as an independent computational unit with limited sensing, processing, and actuation capabilities, and coordinated interactions among robots lead to complex collective behavior. Inspired by the self-organizing mechanisms observed in social insects, swarm-based systems are particularly suited for hazardous or inaccessible environments where human intervention is impractical. Their fully decentralized nature enables deployment in unknown and dynamic scenarios and provides robustness against individual failures. In addition, the relatively simple design of individual robots makes such systems cost-effective and fault-tolerant compared to relying on a single sophisticated robot. Consequently, multi-robot systems have been utilized in disaster response and search-and-rescue missions~\cite{brambilla2013swarm}, autonomous exploration and mapping~\cite{corke2005networked}, surveillance and environmental monitoring~\cite{schwager2009decentralized}, agriculture and mining in hazardous environments~\cite{bayindir2016review}, education and construction~\cite{brambilla2013swarm}, and coordinated geometric formation tasks requiring robots to surround a region or target~\cite{schwager2009decentralized, brambilla2013swarm}.

In distributed computing, theoretical research on multi-robot systems has focused on coordination and complexity issues arising in systems of autonomous mobile entities operating without centralized control. Several computational models emphasizing different capabilities have been proposed, including models with explicit communication~\cite{santorobook2}, memory~\cite{flocchini2013rendezvous}, or identifiers~\cite{terai2018gathering}.

\noindent\textbf{Our Model.} In this paper, we consider the \textit{oblivious} (no memory of past actions) mobile robots model. In this model, robots are simple computational entities represented as points in the Euclidean plane and operate in repeated \textit{Look-Compute-Move} cycles. During each cycle, an active robot observes the positions of other robots in its local coordinate system (Look), computes a destination based on the observed configuration (Compute), and moves toward that destination (Move). Movements may be \textit{rigid}, meaning the robot always reaches its destination, or \textit{non-rigid}, meaning the robot may be stopped before reaching the destination but is guaranteed to move at least a minimum distance $\delta$ toward it.

The robots are \textit{autonomous} (operate without centralized coordination), \textit{homogeneous} (identical in capabilities and behavior), \textit{anonymous} (without unique identifiers), \textit{oblivious} (no memory of past actions), and \textit{silent} (no explicit communication). Robots do not retain any memory between computational cycles. In extended models, they may be equipped with a constant amount of persistent memory~\cite{flocchini2013rendezvous}, usually implemented through externally visible lights, commonly known as the \textit{luminous model}, with a finite set of colors~\cite{sharma2017constant}. Such persistent memory can be used for communication and state retention (FALL model)~\cite{di2014robots}, only for remembering the robot’s internal state (FSTATE model)~\cite{di2017mutual}, or only for communication without storing past states (FCOMM model)~\cite{flocchini2013rendezvous}. Each robot uses its own local Cartesian coordinate system centreed at its current position, with its own unit of length and orientation of the coordinate axes. Depending on the model~\cite{Suzuki1999}, different levels of agreement among robots may be assumed. In the strongest case, robots agree on the directions and orientations of all axes, forming a common global reference frame. In weaker settings, robots may agree only on the direction and orientation of one axis. Robots may also share chirality, meaning they have a common notion of clockwise direction even if their axes are not aligned. In the weakest case, robots may have no agreement on axis directions or orientations, and their coordinate systems are completely independent. In this paper, we consider robots with one axis of agreement and equipped with visible lights.

Robot activations are controlled by an adversarial but fair scheduler under one of the standard models. In the fully synchronous model ($\mathcal{FSYNC}$), all robots execute their cycles simultaneously in discrete rounds. In the semi-synchronous model ($\mathcal{SSYNC}$), an arbitrary subset of robots is activated in each round. In the asynchronous model ($\mathcal{ASYNC}$), there is no common notion of time, and the duration of each phase is finite but unpredictable. Fairness guarantees that each robot is activated infinitely often, allowing progress to be measured in rounds or epochs depending on the synchronization model.

\noindent\textbf{Problem Setting.}
Let $\mathbf{C}$ be a circle of finite radius $h$, and let $\mathcal R=\{r_1,\ldots,r_n\}$ be a set of $n$ robots initially placed at distinct positions inside $\mathbf{C}$. Let $\mathcal P=\{p_1,\ldots,p_n\}$ be a set of $n$ target points on the boundary of $\mathbf{C}$. The goal is to move the robots so that each robot occupies a distinct point of $\mathcal P$ while minimizing the maximum distance traveled by any robot. In the \textsc{MMUCF} problem, $\mathcal P$ consists of the vertices of a regular $n$-gon inscribed in $\mathbf{C}$, i.e., the points are equally spaced along the circumference. Bhattacharya et al.~\cite{bhattacharya2009optimal} studied a centralized version of this problem in which robot movements are computed centrally. In contrast, we consider a distributed setting where robots act autonomously and make decisions based only on their local observations.

\noindent\textbf{Our Contributions.} In this problem, we prove three impossibility results in Section~\ref{impossibility} (Theorems~\ref{theorem-3}–\ref{theorem-5}), showing that it cannot be solved without additional assumptions about the robots. In Section~\ref{Minmaxunifrom}, we then design a deterministic algorithm \textsc{MinMaxUniformCircle()} for robots equipped with visible lights and one-axis agreement that achieves uniform circle formation under the min--max objective. These results indicate that the uniform variant is strictly more difficult than the non-uniform variant and requires additional capabilities to be solvable.


\noindent\textbf{Related Works.}
\noindent Geometric formation problems for autonomous mobile robots have been extensively studied in the distributed computing literature since the seminal work of Suzuki and Yamashita~\cite{Suzuki1999}, which introduced the framework for pattern formation by anonymous robots operating without centralized control. Subsequent research by Flocchini et al.~\cite{flocchini1999hard}, Défago and Konagaya~\cite{Defago2002}, Défago and Souissi~\cite{defago2008non}, and Flocchini et al.~\cite{DBLP:series/synthesis/2012Flocchini} established feasibility and impossibility results for circle formation and related tasks under different synchronization models and visibility constraints. In the circle formation problem, robots must move to the boundary of a common circle, and the problem has been studied under asynchronous settings, limited visibility, and weak agreement on coordinate systems. A stricter variant, Uniform Circle Formation ($\mathcal {UCF}$), requires robots to occupy equally spaced positions on the circle and has received significant attention due to its symmetry requirements. Deterministic solutions for $\mathcal {UCF}$ have been obtained even under the fully asynchronous model by Flocchini et al.~\cite{flocchini2017distributed} and Viglietta~\cite{vigl2016}, resolving long-standing open cases. Several extensions also consider additional capabilities such as persistent visible lights, obstructed visibility, fat robots, and rigid movements; for example, $\mathcal {UCF}$ with persistent lights and rigid motion has been solved by Pattanayak et al.~\cite{pattanayak2024time}. Other works study trade-offs between time, memory, and communication complexity, including results by Feletti et al.~\cite{feletti2018uniform, feletti2024optimal} and Mondal et al.~\cite{mondal2019uniform}. However, most existing works focus on feasibility and symmetry breaking rather than optimizing movement costs, and relatively little attention has been given to minimizing the maximum distance traveled by any robot in fully distributed and oblivious settings. Bhagat et al.~\cite{bhagat2018optimum} study this objective in a distributed setting for circle formation, whereas Bhattacharya et al.~\cite{bhattacharya2009optimal} consider a centralized formulation. Table~\ref{tab:earlier-work} summarizes the existing results on circle formation and uniform circle formation under different assumptions.

\begin{table}[htbp]
\caption{Related work on circle formation and uniform circle formation.}
\label{tab:earlier-work}
 \centering

\begingroup
\renewcommand{\arraystretch}{1.15}
\setlength{\tabcolsep}{3pt}
\small

\begin{tabularx}{\textwidth}{
    @{}
    >{ \centering\arraybackslash}p{1.05cm}
    C
    >{ \centering\arraybackslash}p{2.45cm}
    C
    C
    >{ \centering\arraybackslash}p{2.65cm}
    @{}
}
\toprule
\textbf{Problem}
&
\textbf{Input circle}
&
\textbf{Additional capabilities}
&
\textbf{Movement}
&
\textbf{Optimization criterion}
&
\textbf{Reference}
\\
\midrule

$\mathcal{CF}$
&
No
&
Lights
&
Non-rigid
&
Yes
&
Solved in~\cite{bhagat2018optimum}
\\

$\mathcal{UCF}$
&
No
&
None
&
Non-rigid
&
No
&
Solved in~\cite{flocchini2017distributed}
\\

$\mathcal{UCF}$
&
No
&
Persistent lights
&
Rigid
&
No
&
Solved in~\cite{pattanayak2024time}
\\

\textbf{$\mathcal{UCF}$}
&
\textbf{Yes}
&
\textbf{One axis and lights}
&
\textbf{Non-rigid}
&
\textbf{Yes}
&
\textbf{This paper}
\\

\bottomrule
\end{tabularx}
\endgroup
\end{table}

\section{Preliminaries}

Let $r(t)$ denote the location of robot $r_i\in\mathcal R$ at time $t$, and let $\mathcal  R(t)$ denote the configuration of robots at time $t$, defined as the multiset of occupied positions, i.e., $\mathcal  R(t)=\{r_1(t),r_2(t),\cdots,r_n(t)\}$. Let $\widetilde{\mathcal R}$ denote the set of all such robot configurations. Robots operate in repeated {\it Look-Compute-Move} cycles under the $\mathcal{ASYNC}$ model, and all robots execute the same algorithm. For each robot $r_i\in\mathcal R$, let $rad_ R(t)$ denote the half-line originating at $\mathcal O$ and passing through $r(t)$ at time $t$ , where $\mathcal O$ is the centre of $\mathbf C$. Let $\widetilde{\mathbf{C}}$ denote the set of all circles with finite positive radius, and let $\widetilde{\mathcal G}$ denote the set of all algorithms that solve the {\it uniform circle formation} problem over any given circle ${\mathbf{C}}\in\widetilde{\mathbf{C}}$. Note that $\widetilde{\mathcal G}\neq\emptyset$ \cite{feletti2018uniform}. Consider an algorithm $\mathcal A\in\widetilde{\mathcal G}$, a circle $\mathbf C$, and an initial robot configuration $\mathcal  R(t_0)$. Let $p_i$ be the final position of robot $r_i$ on $\mathbf C$ during the execution of $\mathcal A$ starting from $\mathcal  R(t_0)$, so that the points $\{p_1,p_2,\ldots,p_n\}$ form a regular $n$-gon. Then there exists $t^*\ge t_0$ such that $\mathcal R(t^*)=\{p_1,p_2,\ldots,p_n\}$. Let $l_i(\mathcal A, {\mathbf{C}})$ denote the path traversed by robot $r_i$ from $r_i(t_0)$ to $p_i$, and let $|l_i(\mathcal A, {\mathbf{C}})|$ denote its length. Define $\lambda_{\mathbf{C}}(\mathcal A,\mathcal  R(t_0))$ as the maximum distance traveled by any robot during the execution of $\mathcal A$, i.e., $\lambda_{\mathbf{C}}(\mathcal A,\mathcal  R(t_0))=\max\{|l_i(\mathcal A, {\mathbf{C}})|: r_i\in\mathcal R\}.$ A uniform circle formation algorithm $\bar{\mathcal A}\in\widetilde{\mathcal G}$ solves the \textsc{MMUCF} problem if $\lambda_{\mathbf{C}}(\bar{\mathcal A},\mathcal  R(t_0))
=\min\{\lambda_{\mathbf{C}}(\mathcal A, \mathcal  R(t_0)):\mathcal A\in\widetilde{\mathcal G}\}$ for all ${\mathbf{C}}\in\widetilde{\mathbf{C}}$ and $\mathcal  R(t_0)\in\widetilde{\mathcal R}$. When there is no ambiguity, we denote $\lambda_{\mathbf{C}}(\bar{\mathcal A},\mathcal  R(t_0))$ by $\lambda^*_{\mathbf{C}}$. Let $\widetilde{\mathcal G^*}$ denote the set of all \textsc{MMUCF} algorithms over $\widetilde{\mathbf{C}}$.

Bhattacharya et al.~\cite{bhattacharya2009optimal}, and Chen et al.~\cite{chen2015optimal} presented centralized deterministic algorithms that solve the \textsc{MMUCF} problem over $\widetilde{\mathbf{C}}$. These algorithms also solve the problem under the $\mathcal{FSYNC}$ model with unlimited visibility, agreement on both coordinate axes, and rigid movements, implying $\widetilde{\mathcal G^*}\neq\emptyset$. According to these results, there exists at least one robot position $r_i(t_0)$ such that $l_i(\mathcal A, {\mathbf{C}})$ is a straight-line segment $\overline{r_i(t_0)p_i}$ and $\lambda^*_{\mathbf{C}}=|l_i(\mathcal A, {\mathbf{C}})|$. Let $u_ R(t)$ and $v_ R(t)$ be the nearest and farthest points on $\mathbf C$ from $r(t)$, respectively; these are the intersection points of $\mathbf C$ with the line joining $r(t)$ and $\mathcal O$. We next state two known results that characterize the optimal value $\lambda^*_{\mathbf{C}}$ and are used later in our analysis.

\begin{result}\cite{bhattacharya2009optimal} 
\label{result-1}
For a given circle $\mathbf C$ and initial configuration $\mathcal  R(t_0)$, the value of $\lambda^*_{\mathbf{C}}$ is unique.
\end{result}

\begin{result}\cite{chen2015optimal}  
\label{result-2}
Let $\bar{\mathcal A}$ be an algorithm for the \textsc{MMUCF} problem with $\lambda^*_{\mathbf{C}}=\overline{r_i(t_0)p_i}$. Then either $p_i=u_i(t_0)$ or there exists another robot $r_j(t_0)$, $j\neq i$, such that $\lambda^*_{\mathbf{C}}=\overline{r_j(t_0)p_j}$.
\end{result}

\begin{figure}[ht]
    \centering
    \vspace{-0.3cm}
    
    \includegraphics[width=0.4\textwidth]{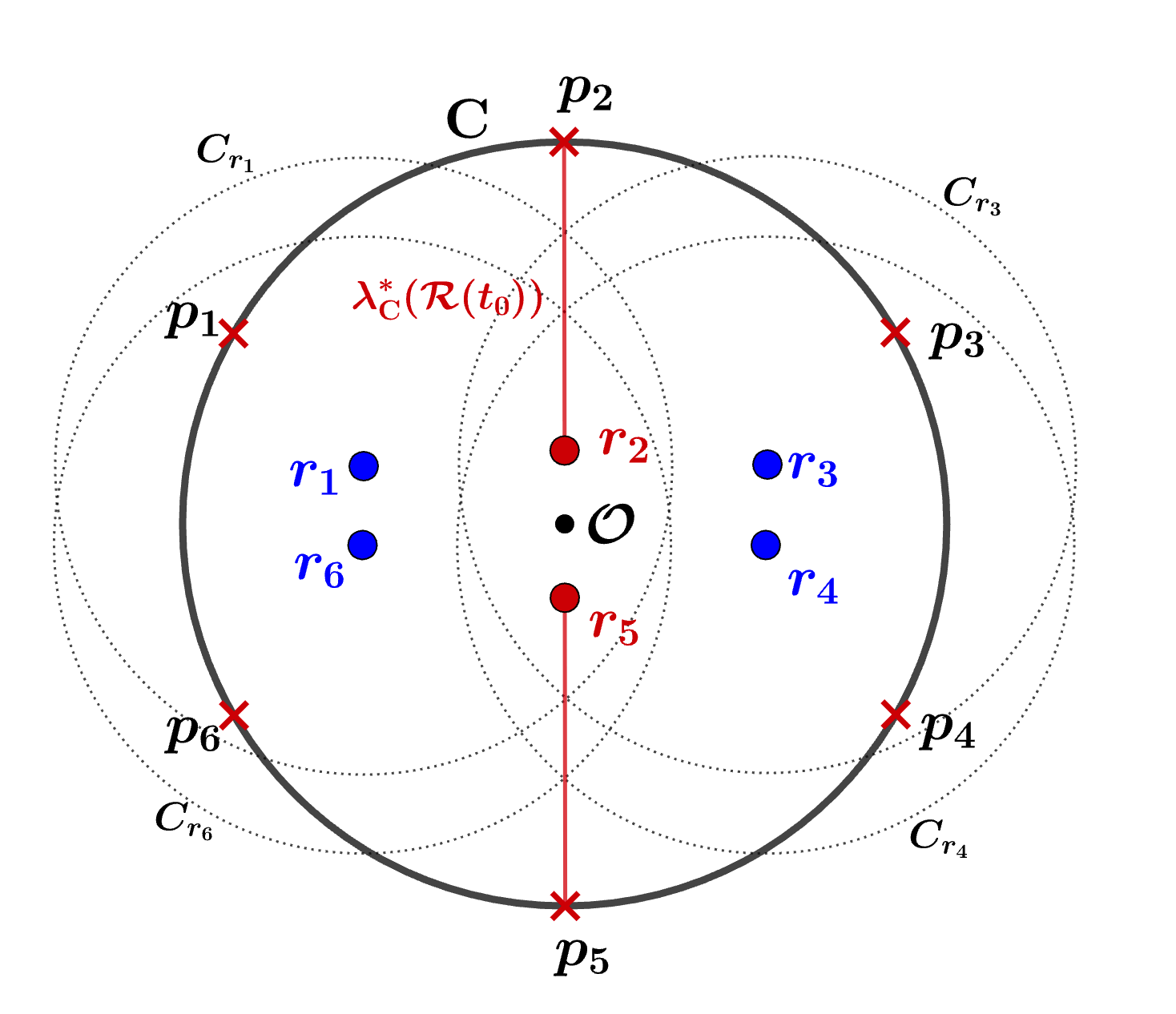}
   
     \caption{ An illustration of target point set and corresponding assignments: initial positions of six robots $r_1,\ldots,r_6$ and the corresponding target set is $\mathcal P=\{p_1,\ldots,p_6\}$. Here, $\lambda^*_{\mathbf{C}}(\mathcal R(t_0))=|\overline{r_2p_2}|=|\overline{r_5p_5}|$ and $\{r_1,r_5\}$ is the {\it critical set}. There are two different assignments for the given configuration: (i) $\{r_1\rightarrow p_1, r_2\rightarrow p_2,r_3\rightarrow p_3,r_4\rightarrow p_4,r_5\rightarrow p_5,r_6\rightarrow p_6\}$ and (ii) $\{r_1\rightarrow p_6, r_2\rightarrow p_2,r_3\rightarrow p_4,r_4\rightarrow p_3,r_5\rightarrow p_5,r_6\rightarrow p_1\}$}
     \label{def-1}
\end{figure}

Let $C_{r}(t)$ denote the circle centreed at $r(t)$ with radius $\lambda^*_{\mathbf{C}}$. Let $p(t)$ and $q(t)$ be the intersection points of $C_{r}(t)$ with $\mathbf C$, and let $\mathcal  R(t)$ be the set of all such intersection points for all robots. A robot position $r(t)$ is called {\it pivotal} if $\lambda^*_{\mathbf{C}}=\overline{r(t)p_i}$. The set of all pivotal robot positions is denoted $\mathcal R'(t)$. A subset $\mathcal R^*(t)\subseteq \mathcal R'(t)$ is called a {\it critical set} if it is minimal such that $\lambda^*_{\mathbf{C}}(\mathcal  R(t))> \lambda^*_{\mathbf{C}}(\mathcal  R(t)\setminus \mathcal R^*(t))$. A set of co-circular points $\mathcal P$ on $\mathbf C$ is a {\it regular point set} if its convex hull forms a regular $k$-gon where $|\mathcal P|=k$. An assignment of robots to $\mathcal P$ is a bijection $f:\mathcal  R(t)\rightarrow \mathcal P$. The set $\mathcal P$ is called the {\it target point set} if there exists an assignment $f$ such that $\lambda^*_{\mathbf{C}}(\mathcal  R(t))=\max\{dist(r(t),f(r(t))):r(t)\in\mathcal  R(t)\}.$ Such an assignment is called an {\it optimal assignment}. The target point set is unique, though the optimal assignment may not be.

\noindent\textbf{Computing the target point set~\cite{bhattacharya2009optimal}.}
\label{app:target-computation}
We compute the {\it target point set} $P^*$, corresponding to  $\lambda^*_C(\mathcal R(t))$ in the following way:
\begin{itemize}
 \item For each robot position $r_i(t)\in\mathcal R(t)$, draw circle $C_i(t)$ having centre at $r_i(t)$ and radius $\lambda^*_C(\mathcal R(t))$. The circle $C_i(t)$ intersect the circle $C$ at least one point. Let $\mathcal C$ be the set of all such circles. Let $p_i(t)$ and $q_i(t)$ be the intersection points between circles $C_i(t)$ and $C$ (these two points may be the same if these two circles intersect exactly at one point). Let $B$ be the set of all such intersection points for the robots positions in $\mathcal R(t)$.
 \item For each $a_x\in B$, do the following:
 \begin{itemize}
  \item compute the point set $W=\{a_1,a_2,\ldots,a_n\}$ such that all $a_i$ lies on $C$, $a_1=a_x$ and these points form a regular $n$-gon.
  \item compute a bipartite graph $G_x=(V_1\cup V_2, E)$ where $V_1=B$, $V_2=\mathcal C$ and there is an edge in $E$ between $a_l\in V_1$ and $C_j\in V_2$ iff $a_l$ lies within or on the circle $C_j$.
  \item If there is a perfect matching in the graph $G_x$, then by lemma 2, we can conclude that $W=P^*$ where the perfect matching provides us an {\it optimum assignment}.
 \end{itemize}

\end{itemize}

\section{Impossibility Results for \textsc{MMUCF} Problem} 
\label{impossibility}
In this section we present two impossibility results for the \textsc{MMUCF} problem. These impossibility results are with respect to the minimal capabilities of the robots required to solve the problem. According to the \textbf{Theorem} \ref{result-2}, during the execution of each algorithm $\mathcal A\in\widetilde{\mathcal G^*}$, first robots require to compute their final destination points on the circle $\mathbf C$ and then accordingly move the towards respective destination points. A robot can use the algorithm presented in \cite{bhattacharya2009optimal} or in \cite{chen2015optimal} to compute its final destination point. However, this simple strategy can not be adopted directly to a swarm of robots. There are many factors that forbid the direct adaptation of this strategy, like the activation schedule of the robots, the non-rigid movements of the robots, the lack of a global coordinate system, the obliviousness of the robots, etc.

\begin{observation}
\label{obs-centre-ambiguity}
Let $\mathcal{R}(t)$ be a configuration for the \textsc{MMUCF} problem on a target circle $\mathbf C$ with centre $\mathcal O$. Suppose there exists a robot $r_c(t)\in\mathcal{R}(t)$ such that $r_c(t)=\mathcal O$, and $r_c$ is a critical robot. Then the corresponding regular target point set on $\mathbf C$ may not be unique, although the optimal min--max value $\lambda_{\mathbf{C}}^*(\mathcal{R}(t))$ is unique, because every point on $\mathbf C$ is at the same distance from $\mathcal O$.
\end{observation}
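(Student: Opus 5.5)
The observation has two parts, and I will treat them separately. The uniqueness of $\lambda^*_{\mathbf C}(\mathcal R(t))$ needs no new argument: it is Result~\ref{result-1}, which holds for every circle and every initial configuration, including one with a robot at $\mathcal O$. The real content is the claim that the regular target point set may fail to be unique. Since the statement says ``may not be,'' it suffices to exhibit a mechanism, ideally with one explicit configuration, in which two distinct regular $n$-gons both realize $\lambda^*_{\mathbf C}$.

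The key step is the rotational-invariance argument. Every point of $\mathbf C$ lies at distance exactly $h$ from $\mathcal O$, so $dist(r_c(t),p)=h$ for every $p\in\mathbf C$, whatever slot $r_c$ is assigned to. If $r_c$ is critical, it is pivotal, so $\lambda^*_{\mathbf C}=h$. In particular $C_{r_c}(t)$ coincides with $\mathbf C$, and $r_c$ imposes no angular constraint on where the $n$-gon sits. Now let $W$ be a target point set with an optimal assignment $f$, and let $\rho_\theta$ denote rotation about $\mathcal O$ by angle $\theta$. I then ask whether, for some $\theta\neq 0$ that is not a multiple of $2\pi/n$, the rotated set $\rho_\theta(W)$ still admits a bijection in which every robot $r\neq r_c$ lies within distance $h$ of its assigned point. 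By the perfect-matching criterion in the target-point computation, that is exactly the condition for $\rho_\theta(W)$ to also be a target point set.

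To make this concrete, I would choose the remaining robots so that they give slack under rotation. The cleanest case is $n$ robots with $r_c=\mathcal O$ and all the others placed close to $\mathcal O$, within some small $\varepsilon$. Any robot at distance $\varepsilon$ from $\mathcal O$ has maximal distance $h+\varepsilon$ to points of $\mathbf C$, so that alone is not enough. Instead I would place the other robots on the radii of the vertices of one fixed regular $n$-gon, at distance $d<h$ from $\mathcal O$. Their distances to their own vertices are $h-d<h$, and by continuity these stay $<h$ under small rotations $\theta$. Hence $\lambda^*_{\mathbf C}=h$, which is forced by $r_c$ and cannot be lower since any assignment gives $r_c$ a cost of $h$. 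It is attained by every $\rho_\theta(W)$ for all small $\theta$, giving infinitely many distinct regular target sets.

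The step I expect to need the most care is reconciling this with the paper's earlier claim that ``the target point set is unique.'' That claim implicitly uses the fact that a critical robot off the centre pins down an intersection point of $C_i(t)$ with $\mathbf C$. I would point out explicitly that when the only critical robot is at $\mathcal O$, the set $C_{r_c}(t)\cap\mathbf C$ is all of $\mathbf C$ rather than at most two points, so the candidate set $B$ is infinite and the construction no longer singles out one $W$. This is why the observation can only say ``may not,'' and why later algorithms must break this ambiguity separately, for instance using the one-axis agreement.
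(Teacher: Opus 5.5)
Your proposal is correct and rests on the reason the paper itself gives in one line: a robot at $\mathcal O$ is at distance $h$ from every point of $\mathbf C$, so it forces $\lambda^*_{\mathbf C}=h$ while imposing no angular constraint on the regular $n$-gon; your configuration with the other $n-1$ robots on vertex radii at distance $d<h$ supplies the explicit witness of non-uniqueness that the paper leaves implicit. One small remark on your last paragraph: the hedge ``may not'' is needed because additional critical robots besides $r_c$ can still pin the orientation (whenever $r_c$ is the only robot attaining $h$ under some optimal assignment, continuity always yields an interval of optimal rotations), not merely because the candidate set $B$ becomes infinite.
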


\begin{theorem}
 \label{theorem-3}
In the absence of any agreement on the local coordinate axes and common {\it chirality}, robots cannot agree on a common {\it optimal assignment}.
\end{theorem}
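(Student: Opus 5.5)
We would argue by a symmetry construction: exhibit an initial configuration whose optimal assignment is not unique, and in which no deterministic rule can let all robots settle on the same one. Take $\mathbf C$ centred at $\mathcal O$ and place the robots so that $\mathcal R(t_0)$ is invariant under a nontrivial isometry $\sigma$ fixing $\mathcal O$. The first choice would be a reflection about a line $L$ through $\mathcal O$ with no robot on $L$; the alternative is a rotation by $2\pi/k$ with $k\ge 2$, with no robot at $\mathcal O$. Example: $n=2$, with the two robots at $(\pm a,b)$, $b\neq 0$, and $L$ the $y$-axis. In that case $\lambda^*_{\mathbf C}$ is unique by Result~\ref{result-1}, and the optimal target point sets are exactly the diametral pairs minimizing the maximum distance. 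If $\{p_1,p_2\}$ is optimal, then so is $\sigma(\{p_1,p_2\})$, because $\sigma$ preserves $\mathcal R(t_0)$ and all distances. We would choose the parameters so that the optimal regular set is not itself $\sigma$-invariant, or, if it is, so that the optimal bijection pairs $r$ with $p$ and $\sigma(r)$ with $\sigma(p)$ in a way that is swapped by $\sigma$. The optimal assignments then form a set on which $\sigma$ acts without fixed points.

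The second step is the indistinguishability argument. Let the adversary give robot $r$ a local frame $\Phi_r$ and give $\sigma(r)$ the frame $\Phi_r\circ\sigma$. When $\sigma$ is a reflection, this means the opposite handedness, which is allowed because there is no common chirality; when $\sigma$ is a rotation, it means rotated axes, which is allowed because there is no axis agreement. Both robots then obtain identical local snapshots, so any deterministic algorithm outputs the same local destination for both. In global coordinates, the destination of $\sigma(r)$ is therefore the image under $\sigma$ of the destination of $r$. By induction under the $\mathcal{FSYNC}$ schedule, which an adversarial scheduler may impose inside $\mathcal{ASYNC}$, with rigid moves, the configuration stays $\sigma$-symmetric forever. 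Any assignment the robots implicitly compute is then $\sigma$-equivariant.

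Third, we would compare this with the first step. A $\sigma$-equivariant assignment must map the symmetric pair $\{r,\sigma(r)\}$ to a $\sigma$-symmetric pair of targets. Because $\sigma$ has no fixed points on the set of optimal assignments, no $\sigma$-equivariant assignment is optimal. Either the robots follow different, incompatible assignments, or the one they follow is not optimal. In both cases they fail to agree on a common optimal assignment.

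I expect the main obstacle to be the first step: exhibiting concretely a configuration in which the symmetry genuinely breaks optimality. This requires computing $\lambda^*_{\mathbf C}$, for instance via Result~\ref{result-2} and the target-set computation of \cite{bhattacharya2009optimal}, and checking that every symmetric choice is strictly worse. For $n=2$ with robots at $(\pm a,b)$ and $a$ small relative to $b$, the symmetric diametral pair is $(\pm h,0)$, which is far from both robots. The optimal pairs are tilted diameters, and they come in two mirror-image copies. This is the explicit computation I would carry out first.
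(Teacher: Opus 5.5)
Your strategy is the same as the paper's. You take a reflection-symmetric configuration and give the two robots of a mirrored pair mirror-image frames, so their views coincide and their chosen targets are mirror images. You then show that no optimal assignment has this property.

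The gap is that all the content of the theorem lies in your first step, and the witness you propose does not work. Take $n=2$ with $r_1=(a,b)$, $r_2=(-a,b)$ and $a>0$. Every assignment has the form $r_1\to p_\theta=h(\cos\theta,\sin\theta)$, $r_2\to -p_\theta$, and
\[
\max\bigl\{d(r_1,p_\theta)^2,\ d(r_2,-p_\theta)^2\bigr\}
= a^2+b^2+h^2-2ha\cos\theta+2h\,|b\sin\theta|
\ \geq\ (h-a)^2+b^2,
\]
with equality only at $\theta=0$.

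So the optimal assignment is unique, namely $r_1\to(h,0)$, $r_2\to(-h,0)$, and it is $\sigma$-equivariant. Tilting the diameter brings one endpoint closer to one robot but pushes the other endpoint away from the other robot. Your claim that the optimum is a tilted diameter occurring in two mirror copies is therefore false. The rotational variant with $n=2$ fails as well, since each robot then simply goes to its own nearest boundary point.

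Your statement of the requirement is also garbled. An assignment pairing $r$ with $p$ and $\sigma(r)$ with $\sigma(p)$ is exactly an equivariant one, which is what must be excluded. The correct condition is that no optimal $f$ satisfies $f(\sigma(r))=\sigma(f(r))$ on the mirrored robots.

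What you need is a configuration with a genuine combinatorial obstruction, which no two-robot instance can provide. The paper uses eight robots:
\begin{itemize}
\item four critical robots pin $p_1,p_3,p_5,p_7$;
\item the targets within $\lambda^*_{\mathbf C}$ available to the non-critical robots $r_1,r_2,r_3,r_4$ are $\{p_2,p_8\}$, $\{p_2,p_4\}$, $\{p_4,p_6\}$, $\{p_6,p_8\}$.
\end{itemize}
These sets form an even cycle with exactly two perfect matchings, and each matching is determined by the target of $r_1$. The reflection about $\mathcal L_1$ exchanges $r_1$ with $r_3$, and $p_2,p_8$ with $p_4,p_6$. Mirrored choices therefore force $(r_1,r_3)\to(p_2,p_4)$ or $(p_8,p_6)$, leaving $r_2$ or $r_4$ without an admissible target.

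This argument constrains only one mirrored pair, so robots on the axis are harmless and you need not control the others. Until you exhibit and verify such a configuration, your proposal is a template rather than a proof.
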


\begin{figure}[ht!]
     \centering
     \vspace{-0.3 cm}
     
     \includegraphics[width=0.4\textwidth]{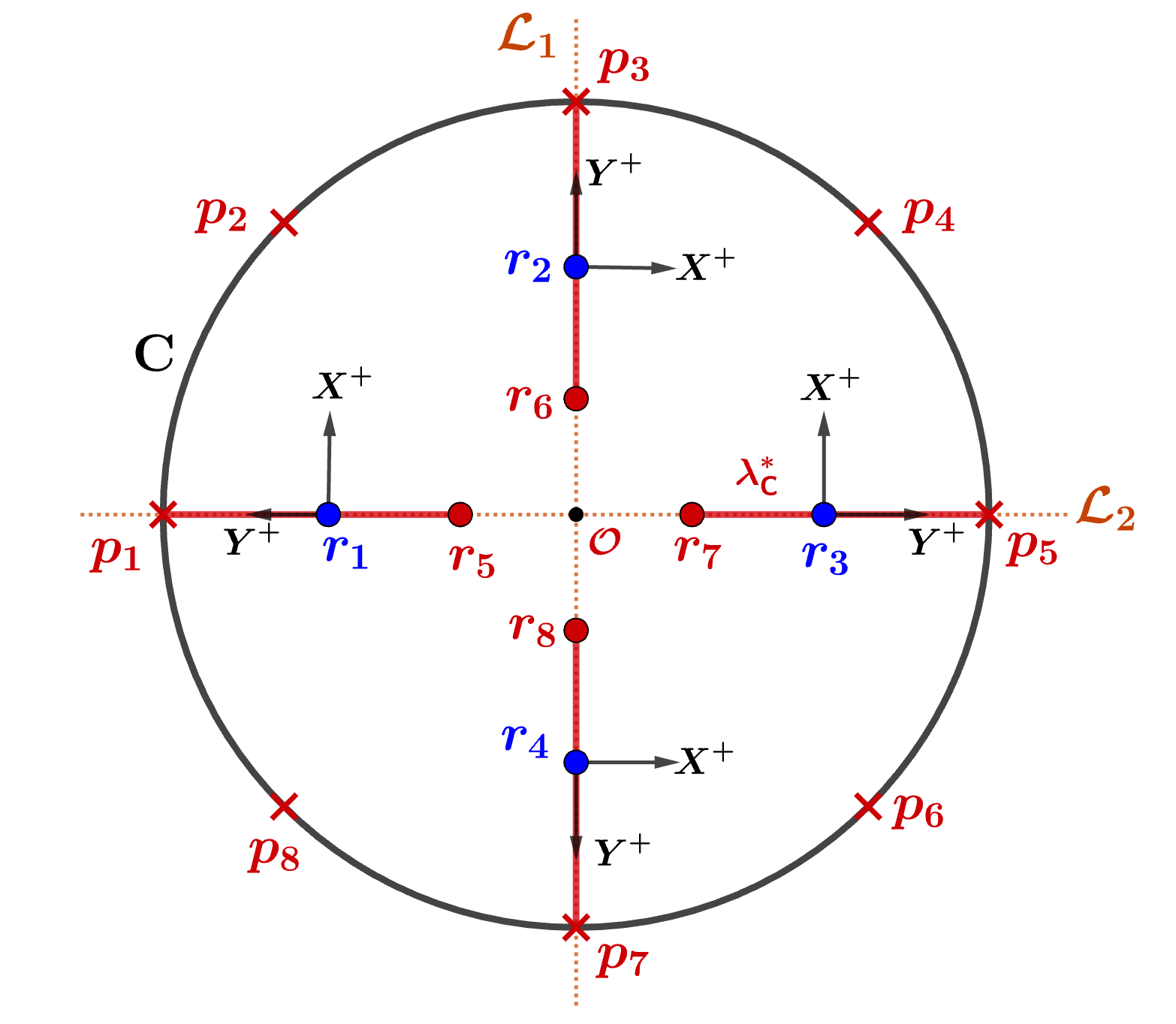}
     
     \caption{Illustration for the proof of \textbf{Theorem}~\ref{theorem-3}. In this symmetric configuration of $\mathcal{R}=\{r_1, r_2, \ldots, r_8\}$, the min-max distance is $\lambda_{\mathbf{C}}^*(t_0) = d(r_5, p_1) = d(r_6, p_3) = d(r_7, p_5) = d(r_8, p_7)$. Consequently, the critical robots are $r_5, r_6, r_7,$ and $r_8$.}
     \label{count-01}
 \end{figure}

\begin{proof}
 Let $\mathcal R=\{r_1,\ldots,r_8\}$ be a set of robots operating under {\it $\mathcal{FSYNC}$} model. Let $\mathbf C$  be a given circle with $\mathcal O$ as its centre. Consider an initial configuration $\mathcal  R(t_0)=\{r_1(t_0),\ldots,r_8(t_0)\}$ of robots in $\mathcal R$ as depicted in Figure \ref{count-01} such that  $\mathcal  R(t_0)$ has two lines of symmetry $\mathcal L_1$ and $\mathcal L_2$. In this case, $\lambda^*_{\mathbf{C}}(\mathcal  R(t_0))=|\overline{r_5p_1}|$. Let $\mathcal A$ be an {\it optimal assignment} algorithm. Algorithm $\mathcal A$ should have following assignments: $r_5\rightarrow p_1$, $r_6\rightarrow p_3$, $r_7\rightarrow p_5$ and $r_8\rightarrow p_7$. Now, we argue about the rest of the assignments. Since algorithm $\mathcal A$ outputs an {\it optimal assignment}, the assignments of the robots $r_1,r_2,r_3$ and $r_4$ should belong to the sets $\{p_2,p_8\}$, $\{p_2,p_4\}$,$\{p_4,p_6\}$, and $\{p_6,p_8\}$ respectively. For the robot configuration $\mathcal  R(t_0)$, the local views of the robots $r_1$ and $r_3$ are symmetric about the line $\mathcal L_1$ and the local views of the robots $r_2$ and $r_4$ are symmetric about the line $\mathcal L_2$. This implies that whatever assignment algorithm $\mathcal A$ makes for the robots $r_1$ and $r_3$, these two points are images of each other about the line $\mathcal L_1$. A similar argument can be made for the assignments of $r_2$ and $r_4$ about the line $\mathcal L_2$. Thus, algorithm $\mathcal A$ assigns  $r_1$ and $r_3$ to either $p_2$ and $p_4$ respectively or $p_8$ and $p_6$ respectively. Without loss of generality, suppose $r_1$ and $r_3$ are assigned to $p_2$ and $p_4$ respectively. Then, algorithm $\mathcal A$ can not assign robot $r_2$ any of the points in $\{p_2,p_4\}$. This contradicts that $\mathcal A$ is an {\it optimal assignment} algorithm. Hence, the robots can not agree on a common {\it optimal assignment}. 
 \end{proof}

\begin{definition}[$k$-fold rotational symmetry]
A configuration $\mathcal R(t)$ has \emph{$k$-fold rotational symmetry}
about the centre $\mathcal O$ of $\mathbf C$ if a rotation
$\rho$ by angle $2\pi/k$, $k\geq2$, maps $\mathcal R(t)$ onto itself.
For a robot position $r_i\in\mathcal R(t)$, the set
$\mathbf{Orbit}_{r_i}
=
\{r_i,\rho(r_i),\rho^2(r_i),\ldots,\rho^{k-1}(r_i)\}$
is called the \emph{rotational orbit} of $r_i$ under $\rho$.
\end{definition}

 \begin{theorem}
  \label{theorem-4}
 In absence of any agreement on the local coordinate systems, the \textsc{MMUCF} problem is unsolvable in $\mathcal{FSYNC}$ model for the robots having non -rigid movements, even if robots have common chirality and lights (\textsc{Fall} light model).
 \end{theorem}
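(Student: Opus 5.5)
We argue by contradiction, using a rotationally symmetric configuration together with the $k$-fold rotational symmetry definition just introduced. Suppose an algorithm $\mathcal A\in\widetilde{\mathcal G^*}$ solves \textsc{MMUCF} in $\mathcal{FSYNC}$ for robots with common chirality and \textsc{Fall} lights but no axis agreement. The first step is to choose an initial configuration $\mathcal R(t_0)$ with $k$-fold rotational symmetry about $\mathcal O$, with $k\ge 2$ and no robot at $\mathcal O$. All robots start with the same light colour, so the coloured configuration is also symmetric. The configuration should be such that the unique optimal target set (Result~\ref{result-1}) forces a specific, asymmetric-looking min--max geometry. A concrete choice is $n=2k$ robots arranged in $k$ rotational orbits of size $2$, or more simply $n=k$ robots forming a rotated, shrunk regular $k$-gon. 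We take the local coordinate systems to be images of one another under $\rho$; since $\rho$ is a rotation, this is consistent with common chirality. Then robots in the same orbit have identical views.

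The second step is a standard symmetry-preservation induction in $\mathcal{FSYNC}$. All robots are activated together, robots in an orbit compute $\rho$-images of the same destination and set the same colour, and the adversary controls non-rigid stops. It stops every robot of an orbit after the same travelled distance (at least $\delta$), so the next configuration is again $\rho$-symmetric. Hence every reachable configuration, including the final one, is invariant under $\rho$. This step is routine.

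The third step is the heart of the argument: showing that symmetry preservation is incompatible with optimality. There are two routes, and we would try (a) first.
\begin{enumerate}
\item[(a)] The symmetry-preserving moves are valid, but non-rigidity destroys optimality. The adversary stops the robots at intermediate points. By Result~\ref{result-2} and the target-computation procedure, the optimal target set of the intermediate configuration, together with the distance already travelled, can exceed $\lambda^*_{\mathbf C}(\mathcal R(t_0))$. Recomputing from scratch makes a pivotal robot's total path non-straight, whereas the optimum requires the pivotal robot to travel the straight segment $\overline{r_i(t_0)p_i}$. The lights would be used to remember the original target, but without a common frame the stored information must be recoverable from the symmetric configuration alone.
\item[(b)] Pick $\mathcal R(t_0)$ so that a $\rho$-invariant optimal assignment is impossible. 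For instance, take $n$ not divisible by $k$, or a configuration in which robots of one orbit all have the same unique nearest point $u_i$. Then any $\rho$-invariant final configuration would need two robots at the same point, or would not be a regular $n$-gon, or would force some robot to travel more than $\lambda^*_{\mathbf C}$. Note that for $n$ not divisible by $k$, an orbit-free centre robot is needed, which is exactly the situation of Observation~\ref{obs-centre-ambiguity}.
\end{enumerate}

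The main obstacle is this third step: exhibiting an explicit configuration in which every optimal assignment breaks the $\rho$-symmetry, i.e.\ no optimal assignment is $\rho$-equivariant. We would construct it along the lines of Theorem~\ref{theorem-3}. Place pivotal robots so that the critical distances $d(r,p)$ are attained only by an assignment shifted by an offset incompatible with $\rho$. For example, use a regular $n$-gon target set whose rotation by $2\pi/k$ does not map targets to the targets matched to the rotated robots. Then verify with the bipartite-matching characterization that the perfect matching at radius $\lambda^*_{\mathbf C}$ exists but none is rotation-equivariant.
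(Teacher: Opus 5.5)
\textbf{There is a genuine gap.} You leave the decisive Step~3 open, and route (b), the one you say you would pursue, cannot work.

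\textbf{Why (b) fails.} With common chirality, the only symmetries that give robots identical views are rotations about $\mathcal O$. If no robot is at $\mathcal O$, then $\rho$ acts freely on the robots, so $k\mid n$ and every regular $n$-gon inscribed in $\mathbf C$ is $\rho$-invariant. For any optimal target set $\mathcal P$, the bipartite graph joining robots to targets within $\lambda^*_{\mathbf C}$ is then $\rho$-invariant, with $\rho$ acting freely on both sides. For a set $S$ of robot orbits, let $N_{\mathrm{orb}}(S)$ be the set of adjacent target orbits. Hall's condition in the original graph gives $k\,|N_{\mathrm{orb}}(S)|\ge |N(\bigcup S)|\ge k|S|$. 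So the orbit-quotient graph has a perfect matching, and it lifts to a $\rho$-equivariant optimal assignment. Hence a $\rho$-symmetric optimal final configuration always exists, and Step~2 yields no contradiction. Robots of one orbit also cannot share a nearest point $u_i$, since $\rho$ moves it. The obstruction in Theorem~\ref{theorem-3} comes from reflections, which common chirality removes, so that construction does not transfer. If instead a robot sits at $\mathcal O$ (your case $k\nmid n$), Step~2 itself fails. That robot's view is $\rho$-invariant but its own frame is not, so the algorithm may move it off $\mathcal O$ in a frame-dependent direction and break the symmetry.

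\textbf{What (a) is missing.} Route (a) is essentially the paper's route, but you have not supplied what makes it work. In the paper, symmetry does not constrain the final configuration. It forces a whole orbit to move at once, so no stationary subset can keep pinning $\lambda_0$ and $\mathcal P_0$; keeping such anchors stationary is how the paper's one-axis algorithm operates. The paper's construction runs as follows.
\begin{itemize}
\item Take six robots in two $120^\circ$-orbits, all equidistant from $\mathcal O$ and all critical.
\item Whichever orbit moves, say $\{r_2,r_4,r_6\}$, the adversary stops it partway. A slight rotation of the hexagon now brings every robot strictly within $\lambda_0$, so $\lambda^*_{\mathbf C}(\mathcal R(t))<\lambda_0$ and the recomputed target set differs from $\mathcal P_0$.
\item $\mathcal P_0$ is the only target set feasible at $\lambda_0$ for $\mathcal R(t_0)$. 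By the triangle inequality, following any other target set makes some robot's total travel exceed $\lambda_0$.
\item Avoiding this requires recovering $\lambda_0$, the original targets, or $\mathcal R(t_0)$. These are real-valued data that a constant number of \textsc{Fall} colours cannot carry.
\end{itemize}

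Your sketch of (a) misses these pieces. A non-straight path is not by itself a violation, since a robot with slack can take a bent path of length at most $\lambda_0$; this is why an all-critical configuration is used. The obstacle is also not that information must be recoverable ``from the symmetric configuration alone.'' It is that lights store only constantly many bits while the lost parameter is a real number. A clean way to finish is an indistinguishability argument: exhibit two initial configurations that yield the same coloured snapshot after the stop but require different target sets.
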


\begin{figure*}[ht!]
    \centering
    \captionsetup[subfigure]{justification=centering}

    \begin{subfigure}[t]{0.44\textwidth}
        \centering
        \includegraphics[width=0.9\linewidth]{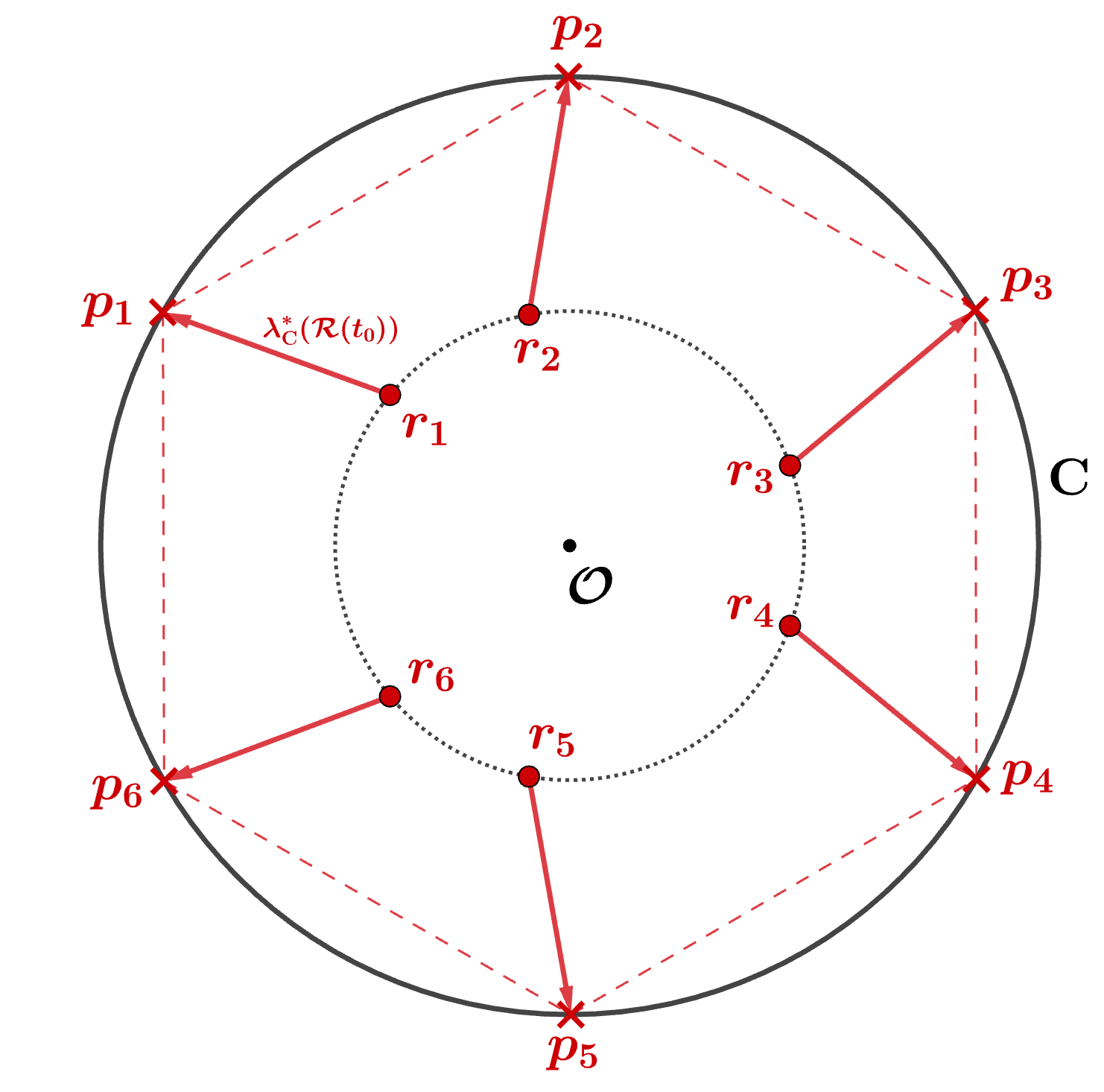}
        \caption{}
        \label{fig:without-agreement-a}
    \end{subfigure}
    \hfill
    \begin{subfigure}[t]{0.44\textwidth}
        \centering
        \includegraphics[width=0.9\linewidth]{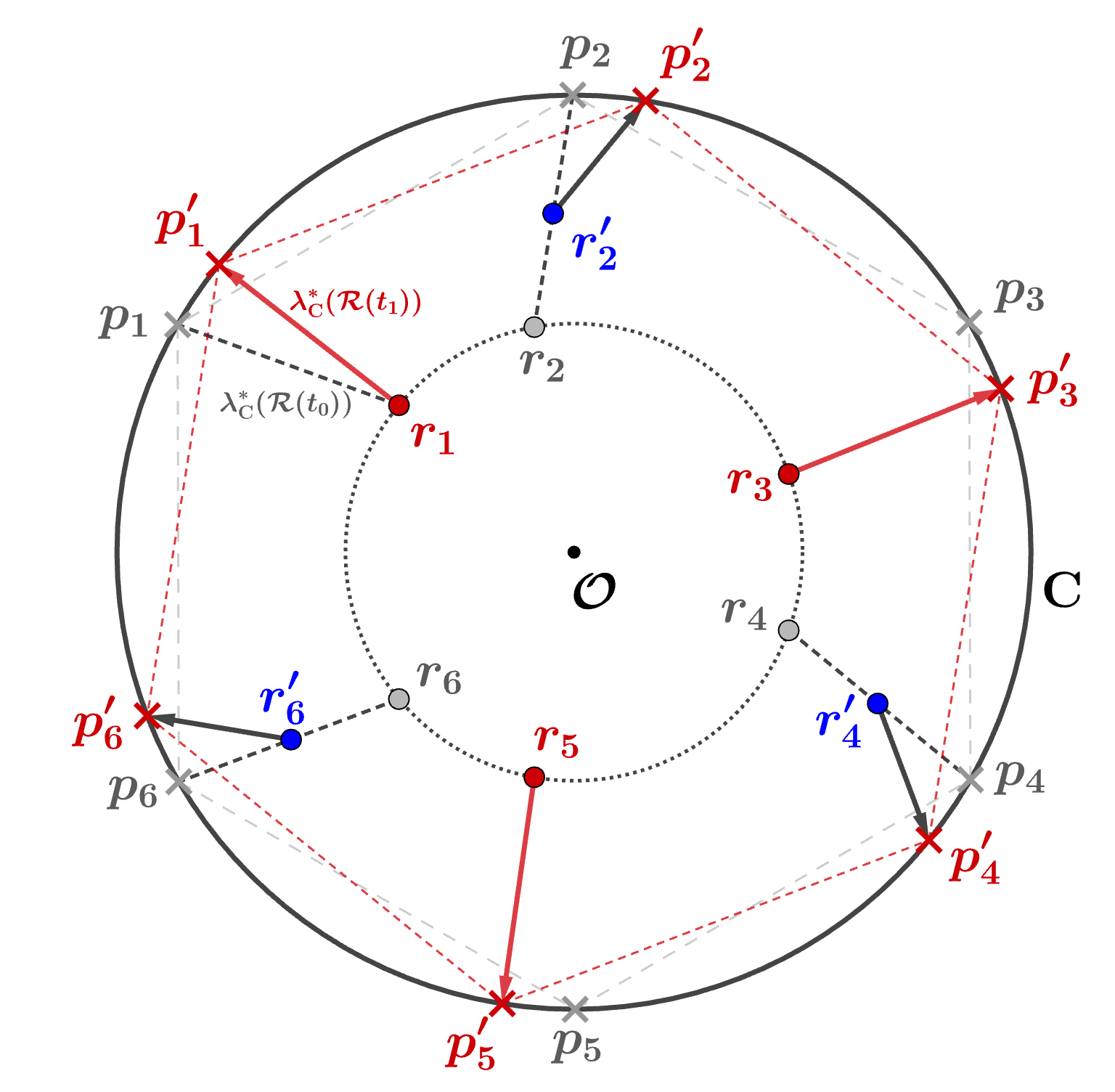}
        \caption{}
        \label{fig:without-agreement-b}
    \end{subfigure}

    \caption{Illustration for the proof of
    Theorem~\ref{theorem-4}.
    (a) A symmetric initial configuration
    $\mathcal R(t_0)=\{r_1(t_0),\ldots,r_6(t_0)\}$,
    where the optimal min-max distance is
    $\lambda_{\mathbf C}^*(t_0)
    =d(r_i(t_0),p_i)$ for $i=1,\ldots,6$.
    (b) A symmetric configuration
    $\mathcal R(t)=\{r_1(t),\ldots,r_6(t)\}$,
    where the optimal min-max distance is
    $\lambda_{\mathbf C}^*(t)
    =d(r_1(t_0),p_1')
    =d(r_3(t_0),p_3')
    =d(r_5(t_0),p_5')$.}
    \label{Imp_2}
\end{figure*}
 
 \begin{proof}
Let $\mathcal R=\{r_1,r_2,\ldots,r_6\}$ be a set of robots operating under  fully-synchronous  \textsc{Fall} light model. Suppose robots in $\mathcal R$ have non-rigid movements and common chirality.  Let $\mathbf C$ be a given circle with $\mathcal O$ as its centre. Consider an initial configuration $\mathcal  R(t_0)=\{r_1(t_0),r_2(t_0),\ldots,r_6(t_0)\}$ of robots in $\mathcal R$ as depicted in Figure \ref{Imp_2}(A) such that (i) $\mathcal  R(t_0)$ is rotationally symmetric around $\mathcal O$ with $120^o$ angle of rotation (ii) all the robots are equidistant from $\mathcal O$. Consider another configuration  $\mathcal  R(t)=\{r_1(t),r_2(t),\ldots,r_6(t)\}$, $t>t_0$, of robots in $\mathcal R$ as depicted in Figure \ref{Imp_2}(B) such that (i) $\mathcal  R(t)$ is rotationally symmetric around $\mathcal O$ with $120^o$ angle of rotation (ii) robots $r_1$, $r_3$ and $r_5$ are equidistant from $\mathcal O$ and robots $r_2$, $r_4$ and $r_6$ are equidistant from $\mathcal O$ (iii) $dist(r_2,\mathcal O)>dist(r_1,\mathcal O)$. If we run algorithm proposed in~\cite{bhattacharya2009optimal}, we get (i) for $\mathcal  R(t_0)$, $\lambda^*_{\mathbf{C}}(\bar{\mathcal A},\mathcal  R(t_0))=d(p_1,r_1(t_0))$ and (ii) for $\mathcal  R(t)$, $\lambda^*_{\mathbf{C}}(\bar{\mathcal A},\mathcal  R(t))=d(p'_1,r_1(t_0))$,   where $\bar{\mathcal A}\in\widetilde{\mathcal G^*}$.

Suppose $\mathcal A $ is an algorithm which solves the \textsc{MMUCF} problem, i.e., $\mathcal A \in \widetilde{\mathcal G^*}$. First, consider an execution of algorithm $\mathcal{A} $ for the initial configuration $\mathcal  R(t_0)$. In this case, $\lambda^*_{\mathbf{C}}=d(p_1,r_1(t_0))$. Since $\mathcal  R(t_0)$ is rotationally symmetric, algorithm $\mathcal A$ decides to move either all the robots in $\mathcal A$ at a time or the robots in $\{r_1,r_3,r_5\}$ at a time or the robots in $\{r_2,r_4,r_6\}$ at a time. Suppose at the beginning of the execution, algorithm $\mathcal A$ decides to move  $r_2,r_4$ and $r_6$ at a time towards their respective destination points $p_2$, $p_4$ and $p_6$ respectively. Since the movements of the robots are non-rigid, an adversary can stop robots before reaching their destination points. Suppose adversary stops the robots  $r_2,r_4$ and $r_6$  at points $p_2$, $p_4$ and $p_6$ respectively at time $t>t_0$ so that the initial configuration is converted to the configuration $\mathcal  R(t)$ as depicted in Figure \ref{Imp_2}(B).

The value of $\lambda^*_{\mathbf{C}}(\mathcal A, \mathcal  R(t))$ is strictly less than the value of $\lambda^*_{\mathbf{C}}(\mathcal A, \mathcal  R(t_0))$. If robots compute their final destination points in $\mathcal  R(t)$ considering the value of $\lambda^*_{\mathbf{C}}(\mathcal  R(t))$, the final destination points of the robots are then changed: for robot $r_i$, the destination point is $v'_i$, $1\le i\le6$. If robots start moving towards these points, then robots travel more than $\lambda^*_{\mathbf{C}}(\mathcal A, \mathcal  R(t_0))$ distance i.e., they violate the constraint of the \textsc{MMUCF} problem. Therefore, robots somehow need the access of the value $\lambda^*_{\mathbf{C}}(\mathcal A, \mathcal  R(t_0))$. They can do this by any one of the following ways (i) they remember the value of $\lambda^*_{\mathbf{C}}(\mathcal A, \mathcal  R(t_0))$ or (ii) they remember locations of their destination points or (iii) remember  or recompute their initial configuration $\mathcal  R(t_0)$. For options (ii) and (iii), robots need to remember some angles and/or distances in the initial configuration $\mathcal  R(t_0)$. All these methods require to store real numbers which may take  unbounded amount of memory.  Since robots operate under \textsc{Fall} model, they have constant amount of persistent memory and they can use their lights to communicate their states. The informations that the robots can carry forward from configuration $\mathcal  R(t_0)$ to $\mathcal  R(t)$ is of constant amount. Thus, they can not recompute or remember the exact value of $\lambda^*_{\mathbf{C}}(\mathcal A, \mathcal  R(t_0))$. This implies that the \textsc{MMUCF} problem is not solvable under \textsc{Fall} model.

\end{proof}

 \begin{theorem}
 \label{theorem-5}
 In the absence of persistent external lights, the \textsc{MMUCF} problem is unsolvable in the $\mathcal{FSYNC}$ model for the robots having non-rigid movements, even if robots have both axis agreement.
 \end{theorem}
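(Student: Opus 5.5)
The plan is an indistinguishability argument in the style of the proof of Theorem~\ref{theorem-4}, with the constant-size lights now removed, so that robots are fully oblivious. Global axis agreement does not restore memory: two executions that produce the same geometric configuration will look identical to every robot. Fix $\mathbf C$ with centre $\mathcal O$ and suppose, for contradiction, that $\mathcal A\in\widetilde{\mathcal G^*}$ solves \textsc{MMUCF} under $\mathcal{FSYNC}$ for oblivious robots with non-rigid moves and both axes agreed.

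\textbf{Step 1: two configurations.} I will build a pair $\mathcal R(t_0)$ and $\mathcal R'$ as follows.
\begin{itemize}
\item $\mathcal R'$ is obtained from $\mathcal R(t_0)$ by moving a single robot $r_j$ a positive distance (at least $\delta$) along the segment towards its optimal target $p_j$.
\item $\lambda^*_{\mathbf C}(\mathcal R')<\lambda^*_{\mathbf C}(\mathcal R(t_0))$.
\item The target point set of $\mathcal R'$ is a rotation of the target point set $\mathcal P$ of $\mathcal R(t_0)$.
\end{itemize}
A concrete choice is a small configuration, say $n=3$ or $4$, built as in Figure~\ref{Imp_2}. By Result~\ref{result-2}, I would arrange that the unique critical robot $r_j$ is critical with $p_j\neq u_j(t_0)$, or that the critical set consists of robots whose motion along their optimal segments changes the optimal rotation of the regular $n$-gon. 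I would verify the drop in $\lambda^*$ and the rotation of the target set using the target-point-set computation of~\cite{bhattacharya2009optimal}. Result~\ref{result-1} guarantees these values are well defined.

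\textbf{Step 2: the adversary.} By optimality, from $\mathcal R(t_0)$ any $\mathcal A$ must eventually move $r_j$ along the straight segment towards $p_j$. The adversary stops it at exactly the intermediate point, producing $\mathcal R'$. Alternatively, if $\mathcal A$ first moves other robots, I would choose the configuration so that every first move by an optimal algorithm is forced to be straight towards a fixed target; then the adversary truncates whichever robot moves.

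\textbf{Step 3: the contradiction.} Because robots are oblivious and have no lights, the view of every robot at $\mathcal R'$ is identical to the view it would have if $\mathcal R'$ were the initial configuration. Both axes are agreed and the snapshot is the same, so $\mathcal A$ must behave identically in both cases. Since $\mathcal A$ is optimal for every initial configuration, started at $\mathcal R'$ it must realize the target set of $\mathcal R'$ with cost $\lambda^*_{\mathbf C}(\mathcal R')$. Some robot's final target therefore differs from its original target in $\mathcal P$. I will then show that some robot's total path length, namely the distance already travelled plus the distance from its position in $\mathcal R'$ to its new target, exceeds $\lambda^*_{\mathbf C}(\mathcal R(t_0))$. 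The natural candidate is $r_j$ itself, or the robot forced to a rotated point far from its $p$. This contradicts optimality from $\mathcal R(t_0)$.

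\textbf{Main obstacle.} The hard part is the explicit geometric instance. It requires a strict decrease of $\lambda^*$, an actual rotation of the regular target set, and a strict excess of the concatenated path length, all holding for every possible stopping point, or at least for one that the adversary chooses. I would first try a configuration with one robot at distance about $h/2$ from $\mathcal O$ and the others near $\mathbf C$, positioned so that the optimal $n$-gon is pinned by two pivotal robots. I would then compute the new optimal rotation after a small displacement, and use the triangle inequality with the fact that the new target is not on the old segment to obtain the strict excess.
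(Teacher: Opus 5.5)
Your plan uses the same indistinguishability argument as the paper. A non-rigid truncation produces a snapshot that oblivious, light-less robots must treat as an initial configuration. Its optimum differs from the one fixed at $t_0$, and following it breaks the bound $\lambda_0=\lambda^*_{\mathbf C}(\mathcal R(t_0))$. The paper asserts that critical robots must wait and truncates the non-critical robots $r_3,r_4$ of its figure; you truncate a critical robot.

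\textbf{The gap is Step~2.} The adversary only chooses where moving robots stop, while $\mathcal A$ chooses who moves, and nothing forces it to move $r_j$ first or alone.
\begin{itemize}
\item In $\mathcal{FSYNC}$, $\mathcal A$ may command several robots in the first round. Each then travels at least $\delta$ unless it reaches its destination, so $\mathcal R'$ (only $r_j$ displaced) need not be reachable.
\item $\mathcal A$ may also defer every critical robot until all others stand at their targets on $\mathbf C$, exactly as the paper's own algorithm does.
\end{itemize}
Your fallback does not repair this. A non-critical robot has slack $d(r_i(t_0),p_i)<\lambda_0$, so its moves are not forced to be straight toward a fixed target. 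Truncating it while the critical robots stand still typically changes neither $\lambda^*_{\mathbf C}$ nor $\mathcal P$: whenever the stationary critical robots alone force cost $\lambda_0$, the old solution stays optimal (cf.\ Lemma~\ref{lem:mcr-optimal-solution-invariance}). The paper's proof, too, examines only one chosen execution at this point.

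Your first option in Step~1 cannot occur. By Result~\ref{result-2}, if $r_j$ is the only robot attaining $\lambda^*_{\mathbf C}$, then $p_j=u_j(t_0)$. In that case $u_j$ pins $\mathcal P$ while $r_j$ is stationary, because every regular set gives $r_j$ distance at least $\lambda_0$, with equality only if the set contains $u_j$. It still pins $\mathcal P$ during $r_j$'s own radial move once the others occupy their targets. So an algorithm that moves the other robots straight to their targets and $r_j$ last never shows a snapshot with a rotated optimal target set.

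\textbf{What is missing} is an argument, valid for every $\mathcal A$, at the first time $t_c$ at which some critical robot moves.
\begin{itemize}
\item The configuration at $t_c$ is whatever $\mathcal A$ has produced, and other robots may move simultaneously. You must show that the adversary can truncate these moves into a snapshot whose (unique) optimal target set differs from $\mathcal P$.
\item That snapshot must also be an admissible initial configuration, or indistinguishability gives nothing. If robots start strictly inside $\mathbf C$, as the paper assumes, robots already standing on $\mathbf C$ are a record of the past. They play exactly the role of \textsc{TargetReached}.
\end{itemize}
So you need either an instance without slack, or an explicit allowance of initial positions on $\mathbf C$. In an instance without slack, every robot is critical and every possible first move unpins the optimum before anyone reaches $\mathbf C$.

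Once such a snapshot $\mathcal R'$ is secured, Step~3 is easier than you fear. Robot by robot, the final positions form a regular $n$-gon within straight-line distance $\lambda_0$ of $\mathcal R(t_0)$, since straight-line distance is at most path length. They are also within $\lambda^*_{\mathbf C}(\mathcal R')$ of $\mathcal R'$. Uniqueness of the two optimal target sets therefore already gives the contradiction, with no estimate of concatenated path lengths.
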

 
\begin{figure*}[!ht]
    \centering
    \captionsetup[subfigure]{justification=centering}

    \begin{subfigure}[t]{0.45\textwidth}
        \centering
        \includegraphics[width=\linewidth]{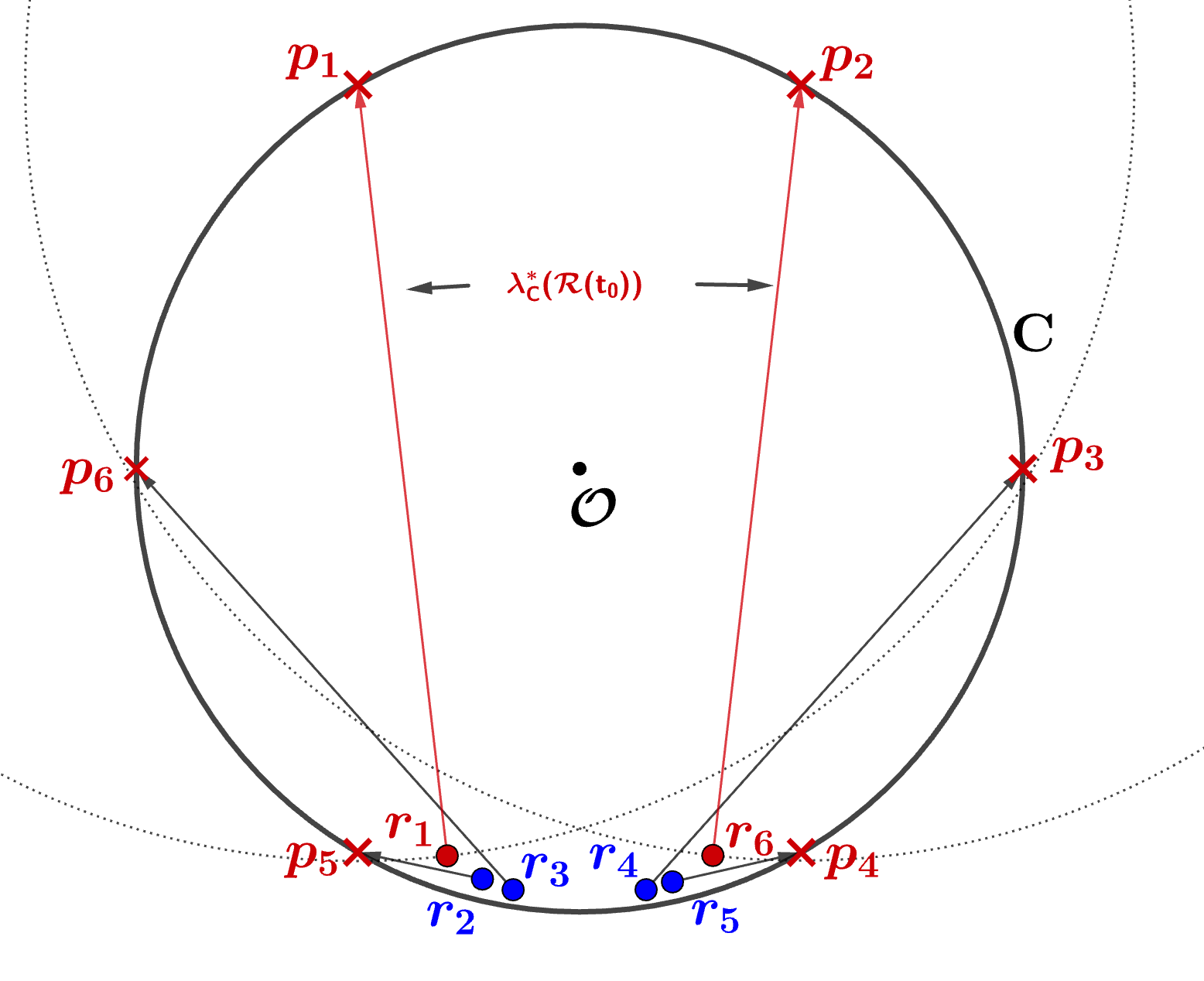}
        \caption{}
    \end{subfigure}
    \hfill
    \begin{subfigure}[t]{0.45\textwidth}
        \centering
        \includegraphics[width=\linewidth]{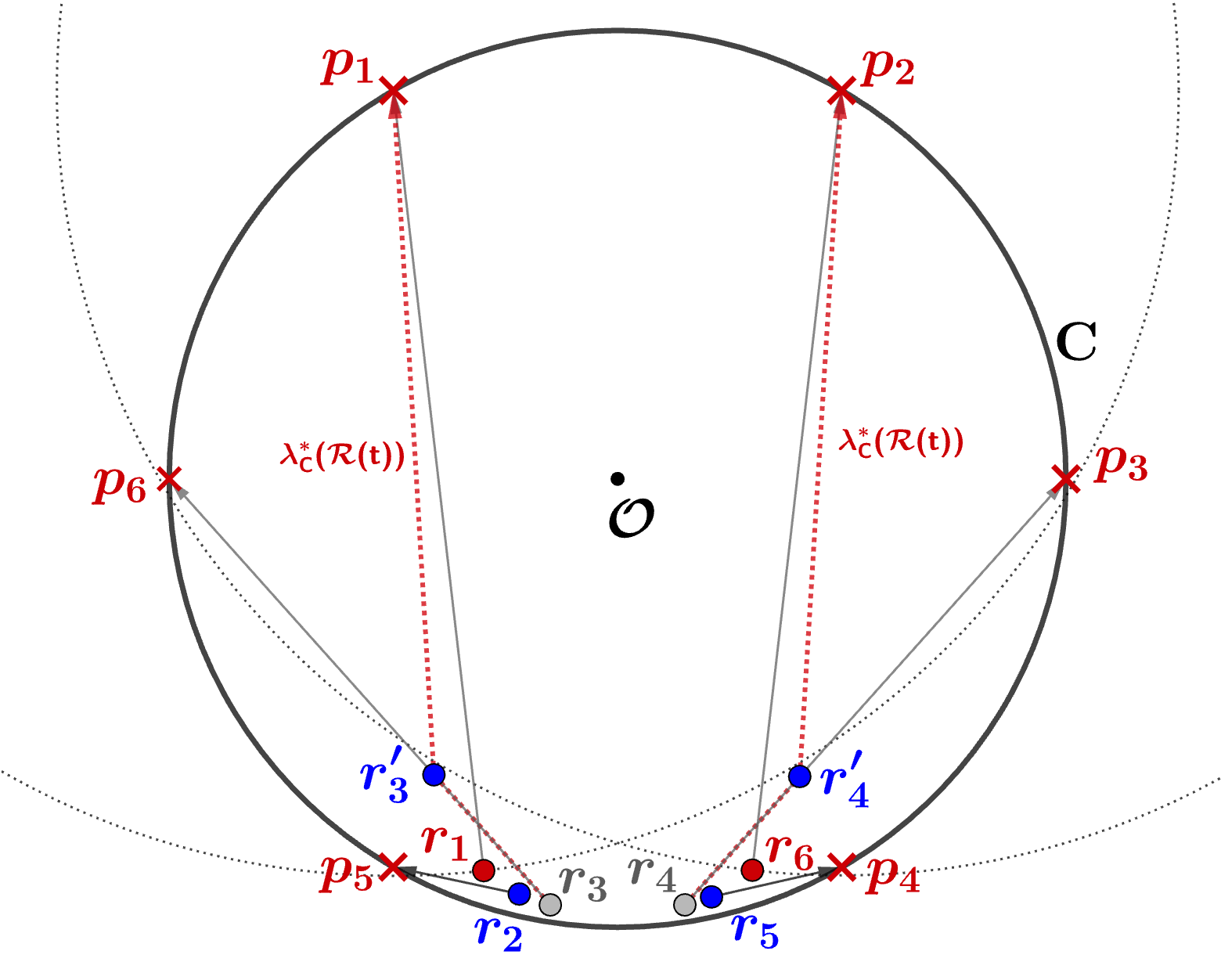}
        \caption{}
    \end{subfigure}

       \caption{An illustration in the proof of \textbf{Theorem} \ref{theorem-5}: \textbf{(A)} In this configuration of $\mathcal  R(t_0)=\{r_1(t_0), r_2(t_0), \ldots, r_6(t_0)\}$, the min-max distance is $\lambda_{\mathbf{C}}^*(\mathcal{R}(t_0)) = d(r_1, p_1)=d(r_2,p_2)$. \textbf{(B)} In this configuration of $\mathcal{R'}=\{r_1(t), r_2(t), r_3'(t),r_4'(t),r_5(t), r_6(t)\}$, the min-max distance is $\lambda_{\mathbf{C}}^*(\mathcal{R}(t)) =d(r_3'(t), p_1)=d(r_4'(t),p_2)$. }
    \label{Impo_3}
\end{figure*}

\begin{proof}
Let $\mathcal{R}=\{r_1,r_2,\ldots,r_6\}$ be a set of oblivious robots operating
under the $\mathcal{FSYNC}$ model. The robots have non-rigid movements and agree
on the directions and orientations of both local coordinate axes.

Consider an initial configuration $\mathcal{R}(t_0)$, as shown in  Figure~\ref{Impo_3}(A). Let $\mathcal P=\{p_1,p_2,\ldots,p_6\}$ be the optimal target point set on the given circle $\mathbf C$. Suppose that the
optimal value satisfies
$\lambda_{\mathbf{C}}^*(\mathcal{R}(t_0))
=
d(r_1(t_0),p_1)
=
d(r_6(t_0),p_4).$ Thus, $r_1$ and $r_6$ are critical robots in the initial configuration, since they attain the maximum distance under the optimal assignment.

Suppose that there exists an algorithm $\mathcal{A}\in\widetilde{\mathcal{G}^*}$ that solves
$\mathcal{MMUCF}$ in finite time under this model without persistent external lights. Since the robots are oblivious, they do not remember the initial configuration, the initial optimal value $\lambda_{\mathbf{C}}^*(\mathcal{R}(t_0))$, or the initial optimal assignment after a \textit{Look--Compute--Move} cycle.

To preserve the initial min--max value,i.e., $\lambda_{\mathbf{C}}^*(\mathcal{R}(t_0))$, the critical robots cannot be moved before the remaining robots are fixed at their assigned target positions. If a critical robot is moved and is stopped at an intermediate position by the non-rigid adversary, then the value of $\lambda_{\mathbf{C}}^*(\mathcal{R}(t_0))$ and the set of critical robots computed from the next snapshot may differ from those of $\mathcal{R}(t_0)$. Therefore, in any execution that preserves the initial optimal assignment, progress must first be made by non-critical robots while the initial critical robots remain stationary.

Consider an execution of $\mathcal{A}$ in which the non-critical robots $r_3$ and $r_4$ among the non-critical robots move toward their assigned target points $p_6$ and $p_3$, respectively. Since the movement is non-rigid, the adversary may stop these robots before they reach their destinations, provided that each moving robot makes the required minimum progress. Let $r_3'$ and $r_4'$ be the intermediate positions reached by $r_3$ and $r_4$, respectively, as shown in Figure~\ref{Impo_3}(B). Let $\mathcal{R}(t)$ be the resulting configuration.

The adversary chooses the intermediate positions $r_3'$ and $r_4'$ so that, in the current configuration $\mathcal{R}(t)$, the optimal value computed from the current snapshot is attained by $r_3'$ and $r_4'$. Hence, $r_3'$ and $r_4'$ become critical robots with respect to $\mathcal{R}(t)$. Let their corresponding target points in the optimal assignment of $\mathcal{R}(t)$ be $p_1$ and $p_2$, respectively. Then
$d(r_3',p_1)=d(r_4',p_2)=\lambda_{\mathbf{C}}^*(\mathcal{R}(t)).$

However, before reaching $\mathcal{R}(t)$, the robots $r_3$ and $r_4$ have already travelled the positive distances $d(r_3(t_0),r_3')$ and $d(r_4(t_0),r_4')$, respectively. Therefore, if the robots now follow the optimal assignment computed from the current configuration $\mathcal{R}(t)$, then the total distances travelled by $r_3$ and $r_4$ become $d(r_3(t_0),r_3')+d(r_3',p_1)$ and $d(r_4(t_0),r_4')+d(r_4',p_2),$  respectively.

By choosing the intermediate positions sufficiently far along the movement segments, the adversary can ensure that $d(r_3(t_0),r_3')>
\lambda_{\mathbf{C}}^*(\mathcal{R}(t_0))-\lambda_{\mathbf{C}}^*(\mathcal{R}(t))$ and $d(r_4(t_0),r_4')
>
\lambda_{\mathbf{C}}^*(\mathcal{R}(t_0))-\lambda_{\mathbf{C}}^*(\mathcal{R}(t)).$
Consequently, $d(r_3(t_0),r_3')+d(r_3',p_1)
>
\lambda_{\mathbf{C}}^*(\mathcal{R}(t_0))$ and $d(r_4(t_0),r_4')+d(r_4',p_2)
>
\lambda_{\mathbf{C}}^*(\mathcal{R}(t_0)).$ Thus, at least one robot is forced to travel more than the initial min--max optimal value $\lambda_{\mathbf{C}}^*(\mathcal{R}(t_0))$.

Now consider the information available to the robots in configuration $\mathcal{R}(t)$. Since the robots are oblivious and have no persistent external lights, they cannot determine whether $\mathcal{R}(t)$ is an initial configuration or an intermediate configuration produced after previous non-rigid movements. Therefore, the decision of the deterministic algorithm $\mathcal{A}$ depends only on the current snapshot. If the robots follow the optimal assignment computed from $\mathcal{R}(t)$, then, as shown above, some
robot travels more than $\lambda_{\mathbf{C}}^*(\mathcal{R}(t_0))$. On the other hand, preserving the initial optimal assignment requires information about $\mathcal{R}(t_0)$, which is not available to oblivious robots without persistent external lights.

This contradicts the assumption that $\mathcal{A}$ solves \textsc{MMUCF} in finite time while preserving the min--max objective. Therefore, no such algorithm exists. Hence, in the absence of persistent external lights, the \textsc{MMUCF} problem is unsolvable in the $\mathcal{FSYNC}$ model with non-rigid movements, even under both-axis agreement.
\end{proof}

\section{Algorithm \textsc{MinMaxUniformCircle()} for \textsc{MMUCF} Problem}
\label{Minmaxunifrom}

\subsection{High-level overview of the algorithm}
We present $\textsc{MinMaxUniformCircle()}$, a deterministic distributed algorithm for solving \textsc{MMUCF} under the luminous, oblivious, non-rigid $\mathcal{ASYNC}$ model. From the initial configuration $\mathcal R(t_0)$, every robot computes the optimal min--max value $\lambda_{\mathbf C}^*$, a regular target-point set $\mathcal P$, and an optimal assignment $f_{\mathrm{opt}}(t_0):\mathcal R(t_0)\rightarrow\mathcal P$ using the algorithm in~\cite{bhattacharya2009optimal}. The main challenge under obliviousness and non-rigid movements is to preserve the initially selected target-point set $\mathcal P$ and optimal assignment $f_{\mathrm{opt}}$ throughout the execution, since robots may observe intermediate configurations after partially completed moves.

The preservation mechanism depends on the critical set $\mathcal R^*(t_0)$. In a unique-critical configuration, the critical robot determines a radial reference for $\mathcal P$; if the critical robot is located at the centre $\mathcal O$, a short deterministic movement first establishes the radial reference. Multiple-critical configurations are handled according to the positions of the non-critical robots and unreserved targets relative to the collective critical region $\mathcal D^*(t_0)$. Stationary critical robots normally preserve the target-point set. In the special configuration $\mathcal K_{\mathrm{mcr}}^{\mathrm{ct}}$, with exactly two critical robots, a virtual dummy point is used to reconstruct the same target-point set during the movement of \textit{candidate robot} (see Definition~\ref{def:candidate}).

Robots act as critical, \textit{secondary critical} (see Definition~\ref{def:secondary}), or candidate robots. Critical and secondary critical robots remain stationary with lights $\lambda_{\textsc{Anchor}}$ and $\lambda^-_{\textsc{Anchor}}$, respectively, while the selected candidate moves toward its computed target point with light $\textsc{TargetTrace}$ or $\textsc{ReferenceTrace}$. A non-rigid interruption does not change its target point. Upon arrival, the robot displays the light $\textsc{TargetReached}$ and remains stationary. Candidates' robots move first, followed by secondary critical robots, and finally critical robots. This ordering preserves the target-point set and the selected assignment throughout the subsequent movements of the robots, while the fairness of the $\mathcal{ASYNC}$ scheduler guarantees finite-time completion. To ensure deterministic and consistent selections throughout the execution, whenever multiple robot positions or target points are equally eligible, all robots use the common ordering induced by the one-axis agreement to select the same robot position or target point. The overall procedure is summarized in Algorithm~\ref{alg:master}, while the detailed subroutines for the respective configuration classes are provided in Section~\ref{Assignment}.

\begin{algorithm}[ht!]
\caption{$\textsc{MinMaxUniformCircle}()$ --- executed independently
by every robot $r$ in each Look--Compute--Move cycle}
\label{alg:master}
\begin{algorithmic}[1]

\Statex \textbf{Look:}
$r$ observes the current configuration $\mathcal R(t)$ and the lights
of all robots.

\If{$r.\text{light}=\textsc{TargetReached}$}
    \State \textbf{remain stationary}; \Return
    \Comment{terminal state}
\EndIf

\State
$(\lambda_{\mathbf C}^*,\mathcal P,f_{\mathrm{opt}},
\mathcal R^*(t),\kappa)
\gets
\textsc{ClassifyAndPlan}(\mathcal R(t))$
\Comment{Algorithm~\ref{alg:classify}}

\Switch{$\kappa$}

    \Case{$\mathcal K_{\mathrm{ucr}}^{\mathrm{nc}}$}
        \State
        $\textsc{ProcUCR-NC}
        (r,\mathcal R(t),\lambda_{\mathbf C}^*,\mathcal P,
        f_{\mathrm{opt}})$
        \Comment{Algorithm~\ref{alg:ucrnc}}
    \EndCase

    \Case{$\mathcal K_{\mathrm{ucr}}^{\mathrm{c}}$}
        \State
        $\textsc{ProcUCR-C}
        (r,\mathcal R(t),\lambda_{\mathbf C}^*,\mathcal P,
        f_{\mathrm{opt}})$
        \Comment{Algorithm~\ref{alg:ucrc}}
    \EndCase

    \Case{$\mathcal K_{\mathrm{mcr}}^{\mathrm{all}}$}
        \State
        $\textsc{ProcMCR-ALL}
        (r,\mathcal R(t),\lambda_{\mathbf C}^*,\mathcal P,
        f_{\mathrm{opt}})$
        \Comment{Algorithm~\ref{alg:mcrall}}
    \EndCase

    \Case{$\mathcal K_{\mathrm{mcr}}^{\mathrm{ot}}$}
        \State
        $\textsc{ProcMCR-OT}
        (r,\mathcal R(t),\lambda_{\mathbf C}^*,\mathcal P,
        f_{\mathrm{opt}})$
        \Comment{Algorithm~\ref{alg:mcrot}}
    \EndCase

    \Case{$\mathcal K_{\mathrm{mcr}}^{\mathrm{ct}}$}
        \State
        $\textsc{ProcMCR-CT}
        (r,\mathcal R(t),\lambda_{\mathbf C}^*,\mathcal P,
        f_{\mathrm{opt}})$
        \Comment{Algorithm~\ref{alg:mcrct}}
    \EndCase

    \Case{$\mathcal K_{\mathrm{mcr}}^{\mathrm{ii}}$
    \textbf{ or }
    $\mathcal K_{\mathrm{mcr}}^{\mathrm{io}}$}
        \State
        $\textsc{ProcMCR-II-IO}
        (r,\mathcal R(t),\lambda_{\mathbf C}^*,\mathcal P,
        f_{\mathrm{opt}})$
        \Comment{Algorithm~\ref{alg:mcriiio}}
    \EndCase

\EndSwitch

\Statex \textbf{Move:}
$r$ executes the movement determined by the invoked procedure; a non-rigid interruption is handled by the corresponding persistent
light state.

\end{algorithmic}
\end{algorithm}

\subsection{Configuration Classes Based on the Critical Set}
\label{subsec:critical-classification}

Let $\mathcal R^*(t_0)$ denote the set of critical robots in the initial configuration $\mathcal  R(t_0)$. After computing the optimal min--max value $\lambda_{\mathbf C}^*$, and the corresponding target-point set $\mathcal P$, the robots classify the initial configuration according to the cardinality of $\mathcal R^*(t_0)$.

For each critical robot $r_i^*\in\mathcal R^*(t_0)$, let $p_i^*\in\mathcal P$ be its assigned critical target point. The associated closed critical disk is defined
by $\overline{\mathcal D}_i^*(t_0)
=
\left\{
x\in\mathbb R^2:
d(x,p_i^*)\leq\lambda_{\mathbf C}^*
\right\}.$ Thus, $\overline{\mathcal D}_i^*(t_0)$ is the closed disk of radius
$\lambda_{\mathbf C}^*$ centreed at $p_i^*$. The collective critical
region is defined by
\[
\mathcal D^*(t_0)
=
\bigcup_{r_i^*\in\mathcal R^*(t_0)}
\overline{\mathcal D}_i^*(t_0).
\]
Let $\mathcal P^*(t_0)
=
\left\{
p_i^*\in\mathcal P:
r_i^*\in\mathcal R^*(t_0)
\right\}$ denote the set of target points reserved for the critical robots and let $r_j\in\mathcal N(t_0)$ be a selected non-critical robot. Define the set of its possible unreserved target points by
$\mathcal P_j^{<}
=
\left\{
p\in\mathcal P_{\mathcal N}(t_0):
d\bigl(r_j(t_0),p\bigr)<\lambda_0
\right\},$ where $\mathcal P_{\mathcal N}(t_0)
= \mathcal P\setminus\mathcal P^*(t_0)$ is the set of target points not reserved for the critical robots. Accordingly, the configurations are classified into the following two main classes.

\begin{enumerate}

\item[(A)] \textbf{Unique-Critical Robot Configuration}
$(\mathcal K_{ucr})$: A configuration $\mathcal  R(t_0)$ belongs to $\mathcal K_{ucr}$ if
$|\mathcal R^*(t_0)|=1$. This class is further divided into the following two
subclasses.

\begin{enumerate}

\item[(A.1)] \textbf{Unique Non-centre-Critical Configuration}
$(\mathcal K_{ucr}^{nc})$: A configuration $\mathcal  R(t_0)$ belongs to $\mathcal K_{ucr}^{nc}$ if the
unique critical robot is not located at $\mathcal O$.

\item[(A.2)] \textbf{Unique centre-Critical Configuration}
$(\mathcal K_{ucr}^{c})$: A configuration $\mathcal  R(t_0)$ belongs to $\mathcal K_{ucr}^{c}$ if the
unique critical robot is located at the centre $\mathcal O$ of the target
circle $\mathbf C$.

\end{enumerate}

Thus,
$\mathcal K_{ucr}
=
\mathcal K_{ucr}^{nc}
\cup
\mathcal K_{ucr}^{c}$.

\item[(B)] \textbf{Multiple-Critical Robot Configurations}
$(\mathcal K_{\mathrm{mcr}})$:

A configuration $\mathcal R(t_0)$ belongs to
$\mathcal K_{\mathrm{mcr}}$ if 
$|\mathcal R^*(t_0)|\geq 2.$  We consider the family
\[
\mathcal K_{\mathrm{mcr}}^{\mathrm{out}}
=
\left\{
\mathcal R(t_0)\in\mathcal K_{\mathrm{mcr}}:
\mathcal N(t_0)\neq\emptyset
\ \text{and}\
\mathcal N(t_0)\cap\mathcal D^*(t_0)=\emptyset
\right\}.
\]
Thus, every configuration in
$\mathcal K_{\mathrm{mcr}}^{\mathrm{out}}$ contains at least one
non-critical robot, and every non-critical robot lies outside the
collective critical region $\mathcal D^*(t_0)$. We further define
\[
\mathcal K_{\mathrm{mcr}}^{\mathrm{in}}
=
\left\{
\mathcal R(t_0)\in\mathcal K_{\mathrm{mcr}}:
\mathcal N(t_0)\cap\mathcal D^*(t_0)
\neq\emptyset
\right\}.
\]
Thus, a configuration belongs to
$\mathcal K_{\mathrm{mcr}}^{\mathrm{in}}$ if at least one non-critical
robot lies inside or on the boundary of the collective critical
region.

For every
$\mathcal R(t_0)\in\mathcal K_{\mathrm{mcr}}^{\mathrm{in}}$, let
$(r_c,p_c)$ be the deterministically selected robot--target pair
satisfying
\[
(r_c,p_c)
\in
\operatorname*{arg\,min}_{
\substack{
r_j\in\mathcal N(t_0)\cap\mathcal D^*(t_0)\\
p\in\mathcal P_{\mathcal N}(t_0)
}}
d\bigl(r_j(t_0),p\bigr).
\]
Thus, $r_c$ is a non-critical robot lying in
$\mathcal D^*(t_0)$, and $p_c$ is its closest unreserved target point.
If more than one robot--target pair has the same minimum distance, the
tie is resolved using the one-axis agreement.

This family is divided
into the following three subclasses according to the positions of the
unreserved target points.

\begin{table}[t]
\caption{Classification of initial configurations based on the critical set.}
\label{tab:critical-classification}
 \centering
\begingroup
\renewcommand{\arraystretch}{1.12}
\setlength{\tabcolsep}{4pt}
\small

\begin{tabularx}{\textwidth}{
@{}
>{ \centering\arraybackslash}p{2.35cm}
>{\raggedright\arraybackslash}X
@{}
}
\toprule
\textbf{Class}
&
\textbf{Defining condition}
\\
\midrule

$\mathcal K_{\mathrm{ucr}}^{\mathrm{nc}}$
&
$|\mathcal R^*(t_0)|=1$ and the unique critical robot is not at
$\mathcal O$
\\

$\mathcal K_{\mathrm{ucr}}^{\mathrm{c}}$
&
$|\mathcal R^*(t_0)|=1$ and the unique critical robot is at
$\mathcal O$
\\

$\mathcal K_{\mathrm{mcr}}^{\mathrm{all}}$
&
$\mathcal R^*(t_0)=\mathcal R(t_0)$
\\

$\mathcal K_{\mathrm{mcr}}^{\mathrm{ot}}$
&
$\mathcal N(t_0)\cap\mathcal D^*(t_0)=\emptyset$ and
$\mathcal P_{\mathcal N}(t_0)\setminus\mathcal D^*(t_0)\neq\emptyset$
\\

$\mathcal K_{\mathrm{mcr}}^{\mathrm{ct}}$
&
$\mathcal N(t_0)\cap\mathcal D^*(t_0)=\emptyset$ and
$\mathcal P_{\mathcal N}(t_0)\subseteq\mathcal D^*(t_0)$
\\

$\mathcal K_{\mathrm{mcr}}^{\mathrm{ii}}$
&
$r_c\in\mathcal D^*(t_0)$ and
$p_c\in\mathcal D^*(t_0)$
\\

$\mathcal K_{\mathrm{mcr}}^{\mathrm{io}}$
&
$r_c\in\mathcal D^*(t_0)$ and
$p_c\notin\mathcal D^*(t_0)$
\\

\bottomrule
\end{tabularx}
\endgroup
\end{table}

\begin{enumerate}

\item[(B.1)] \textbf{All-Critical Robot Configuration}
$(\mathcal K_{\mathrm{mcr}}^{\mathrm{all}})$:

A configuration $\mathcal R(t_0)$ belongs to
$\mathcal K_{\mathrm{mcr}}^{\mathrm{all}}$ if every robot is a critical
robot with respect to the selected optimal assignment
$f_{\mathrm{opt}}$. Formally, $\mathcal K_{\mathrm{mcr}}^{\mathrm{all}}
=
\left\{
\mathcal R(t_0)\in\mathcal K_{\mathrm{mcr}}:
\mathcal R^*(t_0)=\mathcal R(t_0)
\right\}.$ In this configuration, $d\bigl(r_i(t_0),f_{\mathrm{opt}}(r_i)\bigr)
=\lambda_{\mathbf C}^*,
\ \text{for every }r_i\in\mathcal R(t_0).$ Consequently, every target point is reserved for a critical robot, and
hence $\mathcal P^*(t_0)=\mathcal P$ and $\mathcal P_{\mathcal N}(t_0)=\emptyset.$ Thus, no non-critical robot exists in this configuration.

\item[(B.2)] \textbf{Outside-Target Type}
$(\mathcal K_{\mathrm{mcr}}^{\mathrm{ot}})$:

A configuration $\mathcal R(t_0)$ belongs to
$\mathcal K_{\mathrm{mcr}}^{\mathrm{ot}}$ if there exists at least one
unreserved target point $p\in\mathcal P_{\mathcal N}(t_0)$ that lies
outside the collective critical region $\mathcal D^*(t_0)$. Formally,
\[
\mathcal K_{\mathrm{mcr}}^{\mathrm{ot}}
=
\left\{
\mathcal R(t_0)\in\mathcal K_{\mathrm{mcr}}^{\mathrm{out}}:
\exists\,p\in\mathcal P_{\mathcal N}(t_0)
\text{ such that }
p\notin\mathcal D^*(t_0)
\right\}.
\]
Thus, at least one unreserved target point lies outside
$\mathcal D^*(t_0)$. Moreover, since
$\mathcal R(t_0)\in\mathcal K_{\mathrm{mcr}}^{\mathrm{out}}$, every
non-critical robot lies outside $\mathcal D^*(t_0)$.

\item[(B.3)] \textbf{Covered-Target Type}
$(\mathcal K_{\mathrm{mcr}}^{\mathrm{ct}})$:

A configuration belongs to
$\mathcal K_{\mathrm{mcr}}^{\mathrm{ct}}$ if every unreserved target point lies inside or on the boundary of the collective critical
region. Formally, $\mathcal K_{\mathrm{mcr}}^{\mathrm{ct}}
=
\left\{
\mathcal R(t_0)\in\mathcal K_{\mathrm{mcr}}^{\mathrm{out}}:
\mathcal P_{\mathcal N}(t_0)
\subseteq
\mathcal D^*(t_0)
\right\}.$ Equivalently, $\mathcal P_{\mathcal N}(t_0)\setminus\mathcal D^*(t_0)
=\emptyset.$ Thus, every unreserved target point lies inside or on the boundary of
$\mathcal D^*(t_0)$, whereas every non-critical robot lies outside
$\mathcal D^*(t_0)$.

Therefore, $\mathcal K_{\mathrm{mcr}}^{\mathrm{out}}
=
\mathcal K_{\mathrm{mcr}}^{\mathrm{ot}}
\cup
\mathcal K_{\mathrm{mcr}}^{\mathrm{ct}}.$

\item[(B.4)] \textbf{Inner-Robot--Inner-Target Type}
$(\mathcal K_{\mathrm{mcr}}^{\mathrm{ii}})$:

A configuration $\mathcal R(t_0)$ belongs to
$\mathcal K_{\mathrm{mcr}}^{\mathrm{ii}}$ if the selected
non-critical robot $r_c$ lies in the collective critical region and
its closest unreserved target point $p_c$ also lies in the same
collective critical region. Formally, $\mathcal K_{\mathrm{mcr}}^{\mathrm{ii}}
=
\left\{
\mathcal R(t_0)\in\mathcal K_{\mathrm{mcr}}^{\mathrm{in}}:
p_c\in\mathcal D^*(t_0)
\right\}.$ Thus, both $r_c(t_0)$ and its closest unreserved target point $p_c$
lie inside or on the boundary of $\mathcal D^*(t_0)$.

\item[(B.5)] \textbf{Inner-Robot--Outer-Target Type}
$(\mathcal K_{\mathrm{mcr}}^{\mathrm{io}})$:

A configuration $\mathcal R(t_0)$ belongs to
$\mathcal K_{\mathrm{mcr}}^{\mathrm{io}}$ if the selected
non-critical robot $r_c$ lies in the collective critical region but
its closest unreserved target point $p_c$ lies outside this region.
Formally, $\mathcal K_{\mathrm{mcr}}^{\mathrm{io}}
=
\left\{
\mathcal R(t_0)\in\mathcal K_{\mathrm{mcr}}^{\mathrm{in}}:
p_c\notin\mathcal D^*(t_0)
\right\}.$ Thus, $r_c(t_0)$ lies inside or on the boundary of
$\mathcal D^*(t_0)$, whereas its closest unreserved target point
$p_c$ lies outside $\mathcal D^*(t_0)$.

Therefore,
$\mathcal K_{\mathrm{mcr}}^{\mathrm{in}}
=
\mathcal K_{\mathrm{mcr}}^{\mathrm{ii}}
\cup
\mathcal K_{\mathrm{mcr}}^{\mathrm{io}}.$
\end{enumerate}
 So we have
$\mathcal K_{\mathrm{mcr}}
=
\mathcal K_{\mathrm{mcr}}^{\mathrm{all}}
\cup
\mathcal K_{\mathrm{mcr}}^{\mathrm{ot}}
\cup
\mathcal K_{\mathrm{mcr}}^{\mathrm{ct}}
\cup
\mathcal K_{\mathrm{mcr}}^{\mathrm{ii}}
\cup
\mathcal K_{\mathrm{mcr}}^{\mathrm{io}}.$

\end{enumerate}

Thus, $\mathcal K_{\mathrm{ucr}}
=
\mathcal K_{\mathrm{ucr}}^{\mathrm{nc}}
\cup
\mathcal K_{\mathrm{ucr}}^{\mathrm{c}},$ and $\mathcal K_{\mathrm{mcr}}
=
\mathcal K_{\mathrm{mcr}}^{\mathrm{all}}
\cup
\mathcal K_{\mathrm{mcr}}^{\mathrm{ot}}
\cup
\mathcal K_{\mathrm{mcr}}^{\mathrm{ct}}
\cup
\mathcal K_{\mathrm{mcr}}^{\mathrm{ii}}
\cup
\mathcal K_{\mathrm{mcr}}^{\mathrm{io}}.$ Table~\ref{tab:critical-classification} summarizes the configuration
classes used by the algorithm.

\subsection{Configuration Classification and Planning}
\label{app:classify-plan}

The following procedure summarizes how a robot determines the configuration class from its current snapshot.

\begin{algorithm}[ht]
\caption{$\textsc{ClassifyAndPlan}(\mathcal R(t))$}
\label{alg:classify}
\begin{algorithmic}[1]
\State Compute $\lambda_{\mathbf C}^*(\bar{\mathcal A},\mathcal R(t))$ and the target-point set $\mathcal P$ using~\cite{bhattacharya2009optimal,chen2015optimal}
\State Compute the selected optimal assignment $f_{\mathrm{opt}}$ (one-axis agreement breaks ties among multiple optima)
\State $\mathcal R^*(t)\gets \{r_i\in\mathcal R(t): d(r_i(t),f_{\mathrm{opt}}(r_i))=\lambda_{\mathbf C}^*\}$
\If{$|\mathcal R^*(t)|=1$}
    \State \Return $\mathcal K_{ucr}^{nc}$ \textbf{if} the critical robot $\neq \mathcal O$ \textbf{else} $\mathcal K_{ucr}^{c}$
\Else
    \If{$\mathcal R^*(t)=\mathcal R(t)$} \Return $\mathcal K_{mcr}^{\mathrm{all}}$ \EndIf
    \State Compute $\mathcal D^*(t)$, $\mathcal P_{\mathcal N}(t)$
    \If{$\exists\, p\in\mathcal P_{\mathcal N}(t): p\notin\mathcal D^*(t)$} \Return $\mathcal K_{mcr}^{\mathrm{ot}}$ \EndIf
    \If{$\mathcal P_{\mathcal N}(t)\subseteq\mathcal D^*(t)$ \textbf{and} $\mathcal N(t)\cap\mathcal D^*(t)=\emptyset$} \Return $\mathcal K_{mcr}^{\mathrm{ct}}$ \EndIf
    \State Select $(r_c,p_c)\gets\arg\min_{r_j\in\mathcal N(t)\cap\mathcal D^*(t),\,p\in\mathcal P_{\mathcal N}(t)} d(r_j(t),p)$
    \State \Return $\mathcal K_{mcr}^{\mathrm{ii}}$ \textbf{if} $p_c\in\mathcal D^*(t)$ \textbf{else} $\mathcal K_{mcr}^{\mathrm{io}}$
\EndIf
\end{algorithmic}
\end{algorithm}

\subsection{States of Robots}

To maintain the invariance of the target point set and the optimal assignment $f_{opt}$ during
the execution, robots use visible lights to encode their current states. Let
$\mathcal{X}$ denote the finite set of possible light states. In our algorithm,

$\mathcal X
=
\left\{
\textsc{Off}, 
\lambda_{\textsc{Anchor}},
\lambda^-_{\textsc{Anchor}},
\textsc{TargetTrace},
\textsc{ReferenceTrace},
\textsc{TargetReached}
\right\}.$

The meaning of each state is as follows.

\begin{itemize}

\item $\textsc{Off}$ indicates that the robot has not yet performed its first Look--Compute--Move cycle.

\item $\lambda_{\textsc{Anchor}}$ indicates that the robot is \textit{critical}
and currently acts as an anchor for the optimal min--max value
$\lambda_{\mathbf C}^*$ and the target-point set. A robot displaying
$\lambda_{\textsc{Anchor}}$ remains stationary until the selected
candidate robot reaches its target point and displays
$\textsc{TargetReached}$.

\item $\lambda^-_{\textsc{Anchor}}$ indicates that the robot is the
\textit{secondary critical} robot. Although it is not itself a critical robot, its movement is deferred until every currently selected \textit{candidate} robot reaches its target point and displays $\textsc{TargetReached}$.

\item $\textsc{TargetTrace}$ indicates that the \textit{candidate} robot is moving directly toward its assigned target point. If its movement is interrupted because of non-rigidity, it retains $\textsc{TargetTrace}$ and continues toward the same target point in
its next activation. Upon reaching the target, it changes its light
to $\textsc{TargetReached}$.

\item $\textsc{ReferenceTrace}$ indicates the special movement used
for a configuration in $\mathcal K_{\mathrm{mcr}}^{\mathrm{ct}}$. The robot moves toward its assigned target point while reconstructing the original target-point
set from the dummy point and the stationary critical reference
robots. It retains $\textsc{ReferenceTrace}$ throughout every
interrupted movement and changes its light to
$\textsc{TargetReached}$ upon reaching the target.

\item $\textsc{TargetReached}$ indicates that the robot has reached
its assigned target point. A robot displaying
$\textsc{TargetReached}$ remains stationary and never moves again.

\end{itemize}

\subsection{Computation of the Assignment}
\label{Assignment}
There may exist more than one optimal assignment of the robot positions to the points in $\mathcal P$. We have the following observation.


To solve the \textsc{MMUCF} problem, robots need to agree on a common optimal assignment $f_{opt}$ to the points in $\mathcal P$. Moreover, in the asynchronous setting with \emph{non-rigid} movements,
robots may observe intermediate configurations. Therefore, the initially
computed assignment $f_{\mathrm{opt}}$ must be preserved throughout the
execution; in particular, no robot may change its assigned target on the
basis of an intermediate snapshot. In this section, we describe an algorithm to output such an assignment. We assume that, in the initial configuration, no robot lies on the circle $\mathbf C$. However, during the execution of the algorithm, robots move to the points on $\mathbf C$. Our algorithm uses axis-only agreement to compute such an assignment $f_{\mathrm{opt}}$ (as stated in Theorem~\ref{theorem-3}), particularly in symmetric configurations or when a candidate robot has two target points at the same minimum distance.

The computation of the optimal assignment $f_{\mathrm{opt}}$ depends on
the structure of the critical robot set in the initial configuration
$\mathcal R(t_0)$. The configuration may contain either a unique
critical robot or multiple critical robots. The unique-critical case is
further divided according to whether the critical robot is located at
the centre $\mathcal O$ of the target circle $\mathbf C$, since a robot
located at $\mathcal O$ does not initially determine a unique radial
target point. Accordingly, the computation of $f_{\mathrm{opt}}$ is
divided into the following three cases: (1) the configuration contains a unique critical robot that is not located at $\mathcal O$; (2) the configuration contains a unique critical robot located at $\mathcal O$; and (3) the configuration contains multiple critical robots. The algorithms presented in~\cite{bhattacharya2009optimal}
and~\cite{chen2015optimal} are used to compute $\lambda_{\mathbf C}^*(\mathcal R(t_0))$ and identify the corresponding critical set, which determines the assignment procedure to be applied.\\

  \begin{figure}[ht]
     \centering
     \vspace{-0.3 cm}
     
     \includegraphics[width=0.4\textwidth]{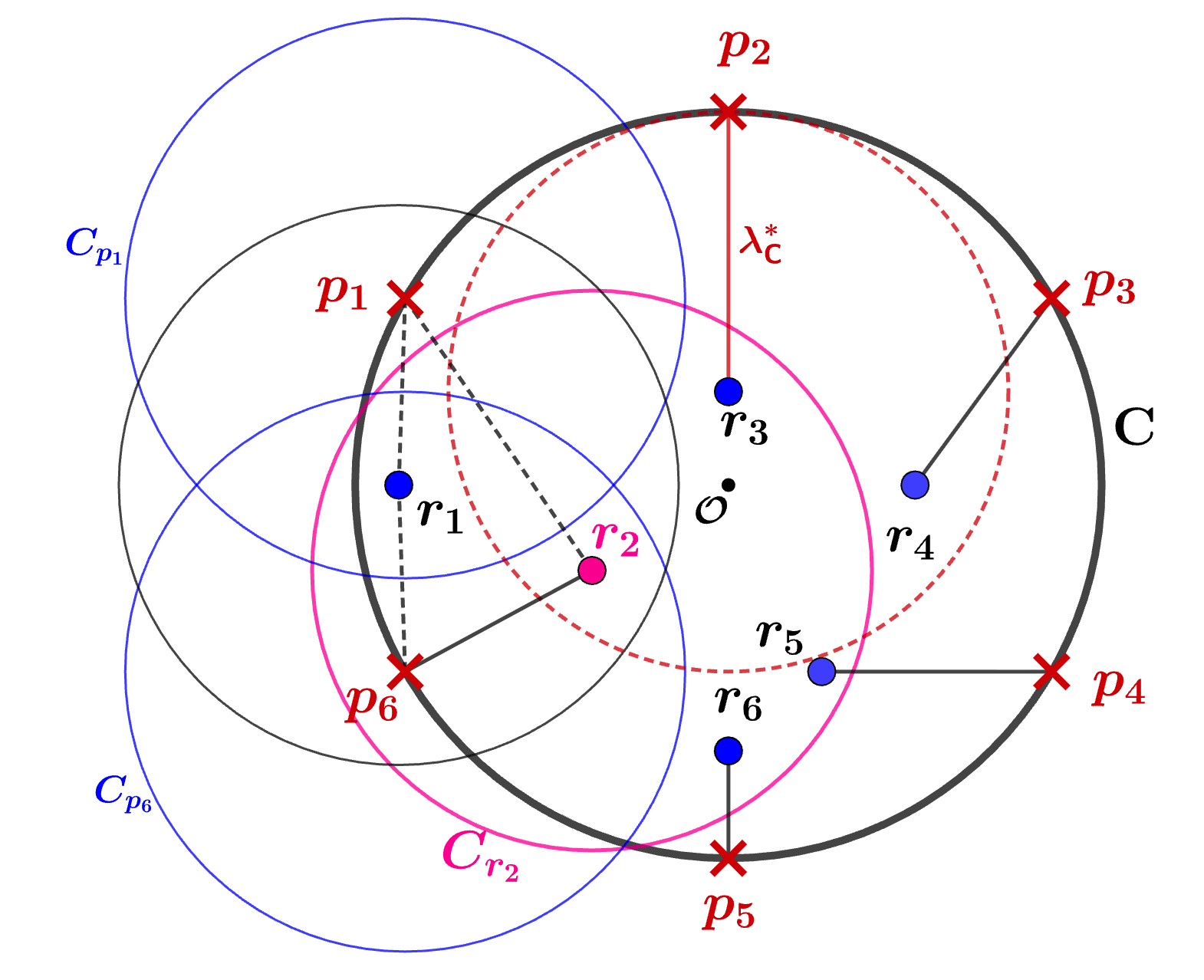}
          \caption{Illustration of secondary critical robot in a configuration of six robots $r_1,r_2,\dots,r_6.$ In this configuration, the critical distance is $\lambda_{\mathbf{C}}^*=d(r_3,p_2)$ and $r_2$ is a \textit{secondary critical robot} with \textit{secondary critical target position} $p_6$.}
     \label{second}
     \end{figure}

\begin{definition}[Secondary Critical Robot]
\label{def:secondary}
Let $\mathcal P^*$ be the set of critical target positions. A robot
$r_s \in \mathcal{R}(t)$ is called a \emph{secondary critical robot} if
$r_s$ is not a critical robot and the circle $\mathcal{C}_k$, centreed at
$r_s$ with radius $\lambda_{\mathbf{C}}^*$, contains exactly one target position
$p_s \in \mathcal P \setminus \mathcal P^*$ such that $d(r_s,p_s)<\lambda_{\mathbf{C}}^*$. The target position $p_s$ is called the \emph{secondary critical target
position} corresponding to $r_s$.
\end{definition}

The purpose of introducing secondary critical robots is to preserve the invariance of the optimal assignment $f_{opt}$ under the asynchronous scheduler with non-rigid movements. Critical robots preserve the value of $\lambda_{\mathbf{C}}^*$, but they may not be sufficient to preserve the optimal assignment $f_{opt}$ of non-critical robots. During the movement of a candidate robot, the set of feasible target positions for other robots may change, and this may force some robots to travel more than
$\lambda_{\mathbf{C}}^*$. A secondary critical robot is therefore kept fixed temporarily so that its uniquely determined target position remains reserved. This prevents the optimal assignment $f_{opt}$ from changing before the candidate robots settle. Figure~\ref{second} illustrates this situation. The robot $r_3$ is a critical robot because its assigned target point $p_2$ is at distance $\lambda_{\mathbf{C}}^*$ from $r_3$; hence $p_2\in\mathcal P^*$. The robot $r_2$ is not critical, but the circle $C_{r_2}$ centreed at $r_2$ with radius $\lambda_{\mathbf{C}}^*$ contains exactly one target point in $\mathcal P\setminus\mathcal P^*$, namely $p_6$. Therefore, $r_2$ is a secondary critical robot and $p_6$ is its secondary critical target position. If $r_2$ is allowed to move too early or if its target position is not kept fixed, then another candidate robot may be assigned to $p_6$, which can change the optimal assignment $f_{opt}$ and may violate the min--max bound.\\

\begin{definition}[Candidate Robot]
\label{def:candidate}
Let $\mathcal R_{\mathrm{sec}}(t)$ denote the set of secondary critical
robots, and let $\mathcal P_{\mathrm{sec}}(t)$ denote their reserved
target points. Define $\mathcal N_{\mathrm{can}}(t)
=
\mathcal N(t)\setminus\mathcal R_{\mathrm{sec}}(t),
\ \mathcal P_{\mathrm{can}}(t)
=
\mathcal P\setminus
\bigl(\mathcal P^*\cup\mathcal P_{\mathrm{sec}}(t)\bigr).$ A robot $r_j\in\mathcal N_{\mathrm{can}}(t)$ is called a
\emph{candidate robot} if there exists a target point
$p_j\in\mathcal P_{\mathrm{can}}(t)$ such that $(r_j,p_j)
\in
\operatorname*{arg\,min}_{
\substack{
r\in\mathcal N_{\mathrm{can}}(t)\\
p\in\mathcal P_{\mathrm{can}}(t)
}}
d\bigl(r(t),p\bigr).$ The pair $(r_j,p_j)$ is called the corresponding
\emph{candidate robot--target pair}. This selection rule is used in
every unique-critical and multiple-critical configuration.
\end{definition}

\subsubsection{Common Light-Transition Patterns}
\label{app:common-light-transitions}

\[
\begin{array}{rcl}
\text{Candidate robot}
&:&
\text{\textsc{Off}}
\longrightarrow
\text{\textsc{TargetTrace}}
\longrightarrow
\text{\textsc{TargetReached}}
\\[4pt]
\text{Secondary critical robot}
&:&
\text{\textsc{Off}}
\longrightarrow
\lambda^-_{\textsc{Anchor}}
\longrightarrow
\text{\textsc{TargetTrace}}
\longrightarrow
\text{\textsc{TargetReached}}
\\[4pt]
\text{Critical robot}
&:&
\text{\textsc{Off}}
\longrightarrow
\lambda_{\textsc{Anchor}}
\longrightarrow
\text{\textsc{TargetTrace}}
\longrightarrow
\text{\textsc{TargetReached}}.
\end{array}
\]

\subsubsection{Assignment Computation for $\mathcal K_{ucr}$}

In this subsection, we describe the computation of the optimal
assignment $f_{\mathrm{opt}}$ for configurations containing exactly one
critical robot. According to the classification above, $\mathcal K_{ucr}
=
\mathcal K_{ucr}^{nc}
\cup
\mathcal K_{ucr}^{c}.$ Therefore, the assignment computation is considered separately according to whether the unique critical robot is located away from or at the centre $\mathcal O$ of the target circle $\mathbf C$. We first consider configurations in $\mathcal K_{ucr}^{nc}$. The following lemma identifies the target point associated with the unique non-centre critical robot and thereby determines the radial direction used to fix the target-point set.

\begin{lemma}
\label{lem:unique-critical-nearest}
Let $\mathbf C$ be the target circle with centre $\mathcal O$, and let $r_c\neq \mathcal O$ be the unique critical robot in an optimal assignment $f_{opt}$. Let $p_c$ be the intersection point of the circumference of $\mathbf C$ with the ray starting from $\mathcal O$ and passing through $r_c$. Then, $p_c\in \mathcal P$, where $\mathcal P$ denotes the optimal target point set.
\end{lemma}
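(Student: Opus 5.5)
Since $r_c$ is the unique critical robot, Result~\ref{result-2} applies directly. If $\lambda^*_{\mathbf C}=d(r_c,f_{opt}(r_c))$, then either $f_{opt}(r_c)=u_c(t_0)$, the nearest point of $\mathbf C$ to $r_c$, or some other robot $r_j$ also attains $\lambda^*_{\mathbf C}$. The second alternative would make $r_j$ pivotal and, after checking it against the definition of critical set, contradict uniqueness. So the plan is to rule out the second alternative and then identify $u_c(t_0)$ with $p_c$.

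\textbf{Step 1 (identify $u_c$ with $p_c$).} Because $r_c\neq\mathcal O$ and $r_c$ lies inside $\mathbf C$, the line through $\mathcal O$ and $r_c$ meets $\mathbf C$ in two points. The nearer one is the point on the ray from $\mathcal O$ through $r_c$, which is exactly $p_c$; the farther one is $v_c(t_0)$. This is immediate from the triangle inequality: for any $x\in\mathbf C$, $d(r_c,x)\ge h-d(\mathcal O,r_c)=d(r_c,p_c)$, with equality only at $x=p_c$.

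\textbf{Step 2 (direct perturbation argument).} To avoid depending on the exact form of Result~\ref{result-2}, I would also give a self-contained argument. Suppose $f_{opt}(r_c)=q\neq p_c$. Rotate the whole regular $n$-gon $\mathcal P$ about $\mathcal O$ by a small angle $\varepsilon$ in the direction that moves $q$ toward $p_c$ along the shorter arc.

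\begin{itemize}
\item Since $d(r_c,\cdot)$ restricted to $\mathbf C$ is strictly monotone in angular distance from $p_c$ on each side, $d(r_c,\rho_\varepsilon(q))<\lambda^*_{\mathbf C}$.
\item Every other robot $r_j$ satisfies $d(r_j,f_{opt}(r_j))<\lambda^*_{\mathbf C}$ by uniqueness. By continuity, for small enough $\varepsilon$ the rotated assignment still keeps all these distances below $\lambda^*_{\mathbf C}$.
\end{itemize}

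The rotated point set is again a regular $n$-gon on $\mathbf C$, so the composite assignment has maximum strictly below $\lambda^*_{\mathbf C}$. This contradicts optimality and Result~\ref{result-1}. Hence $q=p_c$, and since $q\in\mathcal P$, we get $p_c\in\mathcal P$.

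\textbf{Main obstacle.} The delicate point is the strictness needed for the contradiction. It requires the uniqueness of the critical robot to mean that every other robot is strictly below $\lambda^*_{\mathbf C}$ under $f_{opt}$. The paper's definition of critical set (minimal subset whose removal lowers $\lambda^*$) is slightly different from ``the unique robot attaining the max''. I would handle this by working with the set $\mathcal R^*(t)$ as computed in Algorithm~\ref{alg:classify}, i.e.\ robots with $d(r_i,f_{opt}(r_i))=\lambda^*_{\mathbf C}$. There, $|\mathcal R^*|=1$ gives exactly the strict inequalities needed.

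If the intended definition is the minimal-set one, I would additionally argue as follows. Any other robot attaining $\lambda^*_{\mathbf C}$ can be reassigned along the perturbation, or else it would itself belong to a critical set. That gives either a contradiction with uniqueness or a strictly better assignment.
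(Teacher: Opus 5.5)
Your opening paragraph together with Step 1 is exactly the paper's proof. The paper invokes Result~\ref{result-2} (Lemma 2 of Chen et al.). It discards the ``another robot attains $\lambda^*_{\mathbf C}$'' alternative by reading uniqueness of the critical robot as ``no other robot attains $\lambda^*_{\mathbf C}$'', which is the same reading you adopt via Algorithm~\ref{alg:classify}. It then notes that, for $r_c\neq\mathcal O$, the nearest point of $\mathbf C$ is the intersection with the ray from $\mathcal O$ through $r_c$.

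Your Step 2 goes beyond the paper. The small-rotation argument is in effect a self-contained proof of the cited lemma in this special case, and it is correct:
\begin{itemize}
\item $d(r_c,\cdot)$ restricted to $\mathbf C$ is strictly increasing in angular distance from $p_c$, because $r_c\neq\mathcal O$;
\item the strict slack of all other robots survives a small enough rotation;
\item the rotated polygon with the composed bijection then beats $\lambda^*_{\mathbf C}$, contradicting its minimality. Result~\ref{result-1} is not actually needed for this step.
\end{itemize}
What this buys you is independence from the exact formulation of the external lemma. The paper's route is shorter but rests entirely on the citation.

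Your final paragraph on the minimal-subset definition, however, is not an argument. A rotation that helps $r_c$ may hurt another robot that also attains $\lambda^*_{\mathbf C}$. Moreover, that definition compares optimal values of configurations with different numbers of robots, so ``it would itself belong to a critical set'' does not follow. Neither your main proof nor the paper's needs this paragraph, so simply drop it.
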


\begin{proof}
Let $p_c\in \mathcal P$ be the target point assigned to $r_c$. Since $r_c$ is critical, $ d(r_c,p_c)=\lambda_{\mathbf{C}}^*$. Because $r_c$ is the unique critical robot, no other robot attains the value $\lambda_{\mathbf{C}}^*$. By \textbf{Lemma~2} in~\cite{chen2015optimal}, if an optimal solution attains the min--max value at a sensor $A_i$, then either its assigned target point is the closest boundary point $X_i$, or there exists another sensor attaining the same value. Since the second case is impossible here, we obtain $p_c=p_c$. Hence,
$p_c\in \mathcal P$. Further, since $r_c\neq O$, the point $p_c$ is exactly the nearest intersection point of the line through $\mathcal O$ and $r_c$ with the circumference.
\end{proof}

\begin{observation}[Angular rigidity of a unique-critical configuration]
\label{obs:unique-critical-angular-rigidity}

Let $r_c\neq\mathcal O$ be the unique critical robot, and let $p_c$ be
its reserved target point with $d(r_c(t_0),p_c)=\lambda_{\mathbf C}^*$. The point $p_c$, together with the angular separation $2\pi/n$, uniquely determines the regular target polygon $\mathrm{Poly}_{\mathcal P}$. If this polygon is rotated about
$\mathcal O$ through any sufficiently small nonzero angle, either
clockwise or counter-clockwise, the distance from $r_c$ to its nearest
target point becomes strictly greater than $\lambda_{\mathbf C}^*$.
Hence, the min--max cost of the rotated target-point set is strictly
greater than $\lambda_{\mathbf C}^*$.
\end{observation}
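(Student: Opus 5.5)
The plan is to reduce the statement to a one-variable distance computation along $\mathbf C$, using Lemma~\ref{lem:unique-critical-nearest} to pin down $p_c$. Write $h$ for the radius of $\mathbf C$ and $\rho=d(\mathcal O,r_c(t_0))$. Since $r_c\neq\mathcal O$, we have $\rho>0$. Since robots start strictly inside $\mathbf C$, we also have $\rho<h$.

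By Lemma~\ref{lem:unique-critical-nearest}, the reserved target $p_c$ is the radial projection $u_c(t_0)$ of $r_c$ onto $\mathbf C$. Hence $\lambda^*_{\mathbf C}=d(r_c(t_0),p_c)=h-\rho$.

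The first claim is immediate. A regular $n$-gon inscribed in $\mathbf C$ is determined by any one of its vertices: the remaining vertices are the images of $p_c$ under rotations about $\mathcal O$ by $2\pi k/n$, $k=1,\dots,n-1$. So $p_c$ and the angle $2\pi/n$ fix $\mathrm{Poly}_{\mathcal P}$ uniquely.

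For the rigidity claim, parametrize points of $\mathbf C$ by their signed angle $\theta$ from the ray $\overrightarrow{\mathcal O r_c}$. By the law of cosines,
\[
d(\theta)^2=h^2+\rho^2-2h\rho\cos\theta .
\]
Because $h\rho>0$, the function $d(\theta)$ is strictly increasing in $|\theta|$ on $[0,\pi]$, and its minimum over the circle is $d(0)=h-\rho=\lambda^*_{\mathbf C}$.

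Now rotate $\mathcal P$ about $\mathcal O$ by an angle $\varphi$ with $0<|\varphi|<\pi/n$; this is the sense of ``sufficiently small''. The vertices of the rotated polygon sit at angles $\varphi+2\pi k/n$. The one of smallest absolute angle is at $|\varphi|$, and every other vertex has absolute angle (reduced to $[0,\pi]$) at least $2\pi/n-|\varphi|>|\varphi|$. Hence the nearest rotated target to $r_c$ lies at distance $d(\varphi)>d(0)=\lambda^*_{\mathbf C}$.

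To pass to the min--max cost, take any bijection $g$ from $\mathcal R(t_0)$ onto the rotated target set. Its cost is at least $d(r_c(t_0),g(r_c))$, which is at least the distance from $r_c$ to its nearest rotated target, namely $d(\varphi)>\lambda^*_{\mathbf C}$. Therefore the optimal min--max cost for the rotated set strictly exceeds $\lambda^*_{\mathbf C}$. This also agrees with the uniqueness in Result~\ref{result-1}.

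The only step that needs care is identifying $p_c$ with the nearest point $u_c(t_0)$. Everything else depends on $d(\theta)$ having a strict minimum at $\theta=0$, which requires $p_c=u_c$ and $\rho>0$. Lemma~\ref{lem:unique-critical-nearest}, via Result~\ref{result-2}, supplies exactly this under uniqueness of the critical robot. Monotonicity of $\cos$ on $[0,\pi]$ then makes the rest routine. The argument in fact works for any $\varphi\not\equiv 0 \pmod{2\pi/n}$, but small rotations are all the statement requires.
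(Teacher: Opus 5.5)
Your proof is correct and follows the reasoning the paper relies on. The paper states this observation without a separate proof, but its proof of Lemma~\ref{lem:unique-critical-invariance} uses the same fact: by Lemma~\ref{lem:unique-critical-nearest}, $p_c$ is the unique point of $\mathbf C$ at distance $\lambda_{\mathbf C}^*$ from $r_c\neq\mathcal O$, so any rotation that moves the vertex off $p_c$ puts every target strictly farther than $\lambda_{\mathbf C}^*$ from $r_c$. Your law-of-cosines computation makes this explicit and, as you note, gives the conclusion for every rotation that is not a multiple of $2\pi/n$.
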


\noindent\textbf{Case-I (Configuration
$\mathcal K_{\mathrm{ucr}}^{\mathrm{nc}}$).}
In this scenario, the initial asymmetric configuration $\mathcal  R(t_0)$ about the line $\mathcal L$ contains exactly one critical robot, not lying at centre $\mathcal O$, say $r_c$ and let this line $\mathcal L$ intersect the circumference of the given circle $\mathbf C$ at $p_1$ (nearest). Fix an ordering~\cite{sruti2016}, say $\mathcal{F}$ of the points in $\mathcal P$. Let $(p_1 , p_2,\ldots, p_n)$ be the sequence of points in $\mathcal P$ according to $\mathcal{F}$. Let $\mathcal P_{C}$ be the set of points of the critical robot target positions in $\mathcal P$. 

    \begin{figure}[ht]
     \centering
     \vspace{-0.3 cm}
     
     \includegraphics[width=0.5\textwidth]{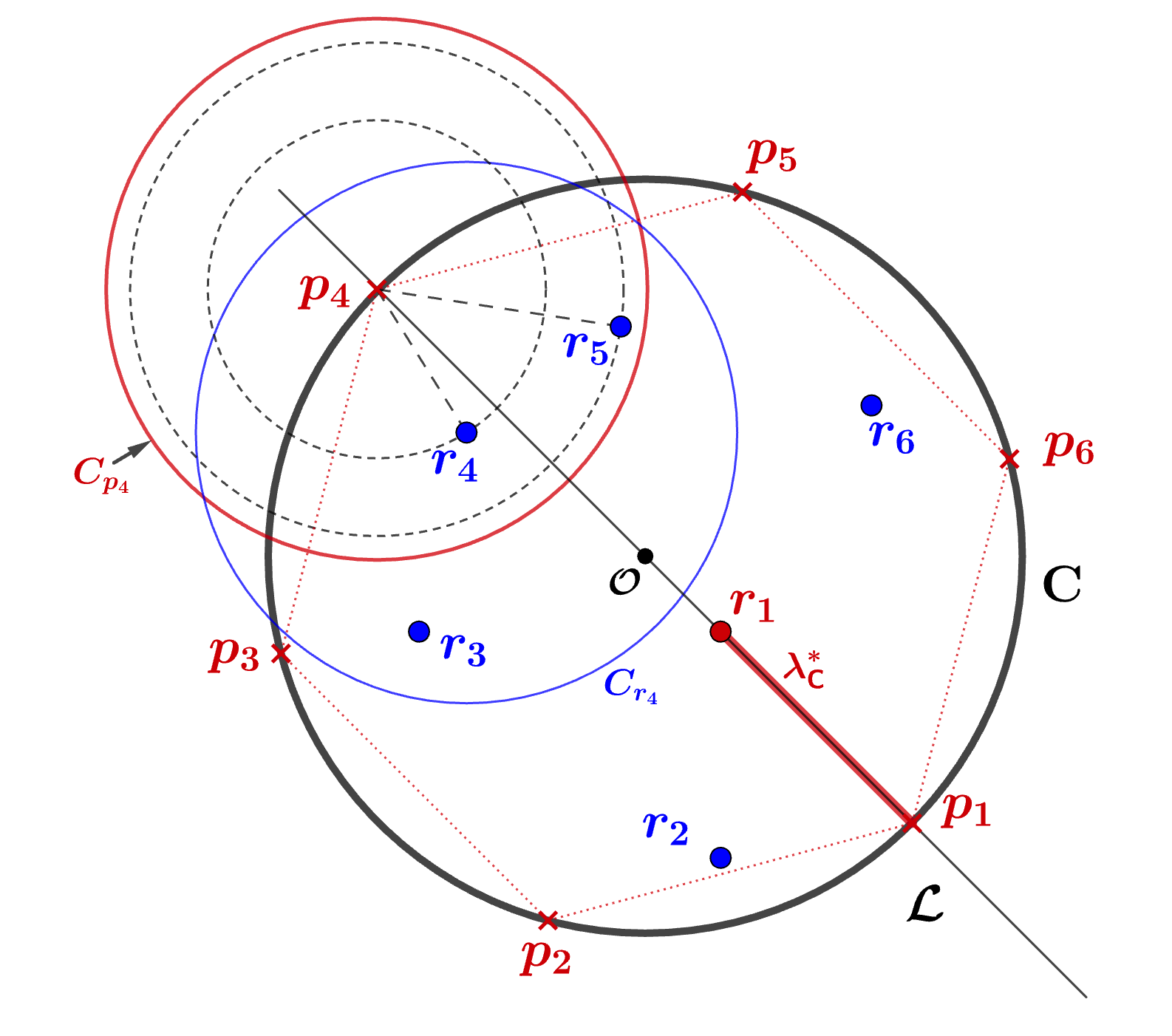}
     
     \caption{Illustration of a configuration of six robots $r_1,r_2,\dots,r_6$. Here, $r_c=r_1$, $\mathcal P^*={p_1}$, and the optimal assignment $f_{opt}(r_i)=p_i$, $i=1,2,\ldots,6$.}
     \label{single}
     \end{figure}
    
Let $p_{i} \in \mathcal P \setminus \mathcal P_{C}$. Draw a circle $C_{p_{i}}$ of radius $\lambda^{*}_{C}$ centreed at $p_{i}$. Let $\mathcal{R}_{p_{i}}(t)$ be the set of all robot positions lying on or inside the circle $C_{p_{i}}$. If the circle $C_{p_{i}}$ contains a single robot $r_i$, i.e., $|\mathcal{R}_{p_{i}}| = 1$, then the target position of $r_i$ is $p_i$, i.e., $r_i \rightarrow p_i$. If the circle $C_{p_{i}}$ contains more than one robot, i.e., $|\mathcal{R}_{p_{i}}| > 1$, then the target positions of all robots in $\mathcal{R}_{p_{i}}(t)$ are computed as follows: Define an order $\mathcal{H}$ on $\mathcal{R}_{p_{i}}(t)$ based on the distance between the robots in $\mathcal{R}_{p_{i}}(t)$ and the vertex $p_i$. Let robots $r_j$ and $r_k$ belong to $\mathcal{R}_{p_{i}}(t)$ with $d(r_j, p_i) < d(r_k, p_i)$. If either of these two robots is not a secondary critical robot for vertex $p_i$, then the target position of $r_j$ is $p_i$, i.e., $r_j \rightarrow p_i$. If either of these two robots is a secondary critical robot, say $r_k$ for vertex $p_i$, then the target position of $r_k$ is $p_i$, i.e., $r_k \rightarrow p_i$ (See Figure~\ref{single}). If $r_j, r_k \in \mathcal{R}_{p_{i}}(t)$ with $d(r_j, p_i) = d(r_k, p_i)$ (where the distance is closest to $p_i$), then to break this tie, we use the agreement on the $Y$-axis as follows: let $\mathcal{B}$ be the set of robots that are at the same distance from $p_{i}$ and lie on the circle $\mathcal{C}_{b}$ centreed at $p_{i}$. Let $c_{1}$ and $c_{2}$ be the two intersection points between $\mathcal{C}_{b}$ and $\mathbf C$. Consider the set $\mathcal P \cup \{c_{1}, c_{2}\}$. Fix an ordering $\mathcal{J}$ of the points in $\mathcal P \cup \{c_{1}, c_{2}\}$. Without loss of generality, suppose that $c_{1}$ has a higher order in $\mathcal{J}$ than $c_{2}$. Order the robots in $\mathcal{B}$ according to their distances from $c_{1}$ to break the ties and obtain the original ordering of the robot positions in $\mathcal{R}_{p_{i}}(t)$. Let $(r_j,p_j)$ be the candidate robot--target pair selected according to the candidate-selection rule defined above.

    \begin{figure}[h]
     \centering
     \vspace{-0.3 cm}
     
     \includegraphics[width=0.4\textwidth]{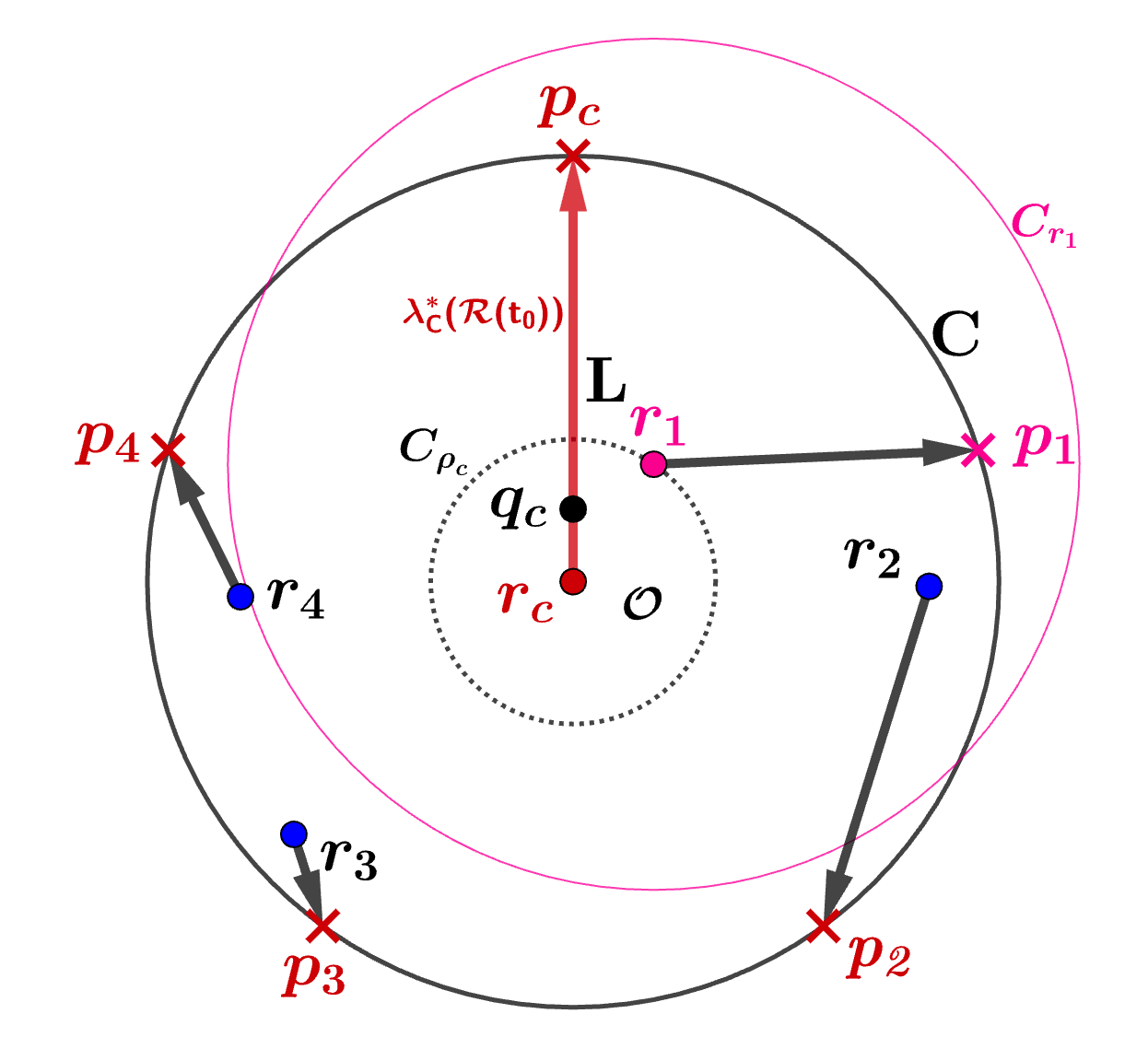}
     
\caption{Illustration of the target-point construction for a centre-critical robot $r_c$ for $\mathcal{R}=\{r_c,r_1,r_2,r_3,r_4\}$. The maximum angular sector $S_{23}$ determines the bisector $\mathbf L$, along which the auxiliary point $q_c$ and the target point $p_c$ are selected.}
     \label{Centre_3}
     \end{figure}

During the execution of the above assignment, each robot changes its
light according to its assigned role. When the unique critical robot
$r_c$ is activated, it changes its light from $\textsc{Off}$ to
$\lambda_{\textsc{Anchor}}$ and remains stationary.
For a target point $p_i$, if
$|\mathcal R_{p_i}(t)|=1$, the selected candidate robot is $r_i$.
If $|\mathcal R_{p_i}(t)|>1$ and no secondary critical robot is
selected for $p_i$, the candidate robot selected by the ordering
$\mathcal H$ is $r_j$. In either case, the selected candidate changes
its light according to $\textsc{Off}
\longrightarrow
\textsc{TargetTrace}$ and moves toward $p_i$.
If $r_k$ is identified as the secondary critical robot for $p_i$, it
changes its light from $\textsc{Off}$ to
$\lambda^-_{\textsc{Anchor}}$ and remains stationary while the candidate robots move. Every other robot also remains stationary
whenever a robot displays $\textsc{TargetTrace}$.
A robot displaying $\textsc{TargetTrace}$ retains this light whenever
its movement is interrupted and continues toward the same target point
in its next activation. Upon reaching its target, it changes its light
to $\textsc{TargetReached}$ and remains stationary.

After all candidate robots display $\textsc{TargetReached}$, the
secondary critical robot $r_k$, if one exists, changes its light from
$\lambda^-_{\textsc{Anchor}}$ to $\textsc{TargetTrace}$ and moves
toward its assigned target point $p_i$. Finally, after every non-critical robot has reached its target, the unique critical robot
$r_c$ changes its light from $\lambda_{\textsc{Anchor}}$ to
$\textsc{TargetTrace}$ and moves toward its assigned critical target
point $p_1$. Each of these robots changes its light to
$\textsc{TargetReached}$ upon reaching its target. The robots follow the common light-transition pattern given in~\ref{app:common-light-transitions}.

\begin{algorithm}[ht]
\caption{$\textsc{ProcUCR-NC}(r,\mathcal R(t),\lambda_{\mathbf C}^*,\mathcal P,f_{\mathrm{opt}})$ --- configuration $\mathcal K_{ucr}^{nc}$}
\label{alg:ucrnc}
\begin{algorithmic}[1]
\State $r_c \gets$ the unique critical robot; $p_c\gets f_{\mathrm{opt}}(r_c)$
\State Determine $r_s$ (secondary critical robot, Def.~\ref{def:secondary}), if any, and its target $p_s$
\State $(r_j,p_j)\gets$ candidate--target pair selected by the candidate-selection rule
\Statex \Comment{--- Light transitions ---}
\If{$r=r_c$}
    \If{$r.\text{light}=\textsc{Off}$} \State $r.\text{light}\gets\lambda_{\textsc{Anchor}}$; \textbf{remain stationary} \Return \EndIf
    \If{every $r_i\neq r_c$ displays $\textsc{TargetReached}$} \State $r.\text{light}\gets\textsc{TargetTrace}$; \textbf{move toward } $p_c$
    \Else\ \textbf{remain stationary} \EndIf
\ElsIf{$r=r_s$}
    \If{$r.\text{light}=\textsc{Off}$} \State $r.\text{light}\gets\lambda^-_{\textsc{Anchor}}$; \textbf{remain stationary} \Return \EndIf
    \If{every candidate robot displays $\textsc{TargetReached}$} \State $r.\text{light}\gets\textsc{TargetTrace}$; \textbf{move toward } $p_s$
    \Else\ \textbf{remain stationary} \EndIf
\ElsIf{$r=r_j$}
    \State $r.\text{light}\gets\textsc{TargetTrace}$; \textbf{move toward } $p_j$
\Else
    \State \textbf{remain stationary} \Comment{not yet eligible}
\EndIf
\If{$r$ reaches its assigned target} $r.\text{light}\gets\textsc{TargetReached}$ \EndIf
\end{algorithmic}
\end{algorithm}


\noindent\textbf{Case-II (Configuration
$\mathcal K_{\mathrm{ucr}}^{\mathrm{c}}$).}
We consider a configuration having a unique critical robot, denoted by $r_c$, which is located at the centre $\mathcal O$ of the given circle $\mathbf C$; that is, $\mathcal R^*(t_0)=\{r_c\}$ and $r_c(t_0)=\mathcal O.$ The configuration may be asymmetric or symmetric; in the symmetric case, $r_c$ breaks the symetricity. Since $r_c$ is located at the centre of $\mathbf C$, every point on the circumference of $\mathbf C$ is at the same distance from $r_c(t_0)$. Therefore, the target point corresponding to $r_c$ is not unique initially (See Observation~\ref{obs-centre-ambiguity}). Hence, before fixing the target point set, a deterministic direction must be selected so that the ambiguity in choosing the target point of $r_c$ is removed.

Let  $\rho_c
=
\min_{\substack{
r\in\mathcal R(t_0)\\
r\neq r_c
}}
d\bigl(r_c(t_0),r(t_0)\bigr).$
Thus, $\rho_c$ denotes the minimum distance from the centre-critical robot $r_c$ to any other robot in the initial configuration. Consider the circle $C_{\rho_c}$ centreed at $\mathcal O$ with radius $\rho_c$. The robot $r_c$ considers the rays from $\mathcal O$ through all non-central robots in their cyclic order. For every pair of consecutive rays, including the last and the first rays, it considers the angular sector containing no other non-central robot in its interior. It then
selects a sector of maximum angular measure, denoted by $S_{ij}$. The
same set of sectors is obtained whether the cyclic order is traversed
clockwise or counterclockwise. If several maximum sectors exist, the
tie is resolved using the one-axis agreement. The angle bisector ray from $\mathcal{O}$, say $\mathbf{L}$ of this sector $S_{ij}$ is the direction of the first movement of $r_c$. We choose an auxiliary point $q_c$ on $\mathbf{L}$ such that $0<d(\mathcal O,q_c)<\rho_c$ (see Figure~\ref{Centre_3}). In particular, $q_c$ can be chosen with $d(\mathcal O,q_c)=\rho_c/2$. This choice ensures that $q_c$ lies strictly closer to $\mathcal O$ than to any other robot position. Hence, when the direction from $\mathcal O$ to $q_c$ is used for fixing the target point of the centre-critical robot, no other robot becomes critical, and the original centre-critical robot $r_c$ remains the critical robot. Moreover, $q_c$ is chosen so that it does not lie on any line passing through $\mathcal O$ and another robot position in $\mathcal  R(t_0)\setminus\{r_c\}$. This avoids choosing a direction already determined by another robot position and fixes a deterministic direction from $\mathcal O$ for computing the target point set. Let $p_{c}$ be the intersection point of $\mathbf L$ with the circumference of $c$. The point $p_{c}$ is chosen as the target point corresponding to the centre-critical robot $r_c$. Once $p_{c}$ is fixed, the remaining target points are placed on $c$ by maintaining equal arc distance $\frac{2\pi h}{n}$ between consecutive target points, where $h$ is the radius of $c$. Thus, the ambiguity caused by the centre-critical robot is removed, and the target point set is uniquely determined.
The remaining robots compute their assignments with respect to this fixed target point set by following the optimal assignment $f_{opt}$ procedure described above. In particular, no new target point set is recomputed for the remaining robots; they use the target point set determined by the line $\mathbf L$. Therefore, the auxiliary choice of $q_c$ is used only to fix a deterministic direction for computing the target point set. It does not create any new critical robot and does not violate the optimal min--max value $\lambda_{\mathbf{C}}^*(\bar{\mathcal A},\mathcal  R(t_0))$.

\begin{algorithm}[ht]
\caption{$\textsc{ProcUCR-C}(r,\mathcal R(t),\lambda_{\mathbf C}^*,\mathcal P,f_{\mathrm{opt}})$ --- configuration $\mathcal K_{ucr}^{c}$}
\label{alg:ucrc}
\begin{algorithmic}[1]
\State $r_c\gets$ the unique critical robot located at $\mathcal O$
\If{$r=r_c$ \textbf{and} $r.\text{light}=\textsc{Off}$}
    \State Compute $\rho_c$, the maximum empty angular sector, its bisector ray $\mathbf L$, and $q_c$ with $d(\mathcal O,q_c)=\delta<\min\{\rho_c,\gamma_c\}$ \Comment{Lemma~\ref{lem:centre-invariant-reduction}}
    \State $r.\text{light}\gets\lambda_{\textsc{Anchor}}$; \textbf{move toward } $q_c$
    \State \Return \Comment{every other robot remains stationary during this step}
\EndIf
\If{$r=r_c$ \textbf{and} $r$ has reached $q_c$}
    \State $\mathcal R(t)$ now belongs to $\mathcal K_{ucr}^{nc}$ with critical target $p_c$ fixed by ray $\mathbf L$
    \State $\textsc{ProcUCR-NC}(r,\mathcal R(t),\lambda_{\mathbf C}^*,\mathcal P,f_{\mathrm{opt}})$ \Comment{delegate; $r_c$ retains $\lambda_{\textsc{Anchor}}$ until eligible}
\ElsIf{$r\neq r_c$}
    \State $\textsc{ProcUCR-NC}(r,\mathcal R(t),\lambda_{\mathbf C}^*,\mathcal P,f_{\mathrm{opt}})$ \Comment{remaining robots follow the $\mathcal K_{ucr}^{nc}$ procedure}
\EndIf
\end{algorithmic}
\end{algorithm}

During the execution of this case, the centre-critical robot $r_c$ first moves slightly away from the centre $\mathcal O$ in the direction of $\mathbf{L}$. When $r_c$ is activated, it changes its light according to $\textsc{Off}
\longrightarrow \lambda_{\textsc{Anchor}}.$ Let $q_c$ be the point on the selected ray $\mathbf L$ satisfying $0
<
d(\mathcal O,q_c)
=
\delta
<
\rho_c.$ Robot $r_c$ moves from $\mathcal O$ toward $q_c$ while retaining
$\lambda_{\textsc{Anchor}}$. Every other robot remains stationary
during this movement. Since the destination is at distance $\delta$,
$r_c$ reaches $q_c$ in this movement even under the non-rigid model.
By the choice $\delta<\rho_c$, robot $r_c$ remains the unique critical
robot after reaching $q_c$. Since $q_c\neq\mathcal O$, the resulting
configuration belongs to
$\mathcal K_{\mathrm{ucr}}^{\mathrm{nc}}$. The ray from $\mathcal O$ through $q_c$ fixes the critical target point
$p_c$ and consequently the complete target-point set. Moreover, by
Corollary~\ref{cor:centre-travel-bound}, this initial movement does not
violate the original optimal min--max travel bound. Therefore, the
configuration changes according to
$\mathcal K_{ucr}^{c}\longrightarrow\mathcal K_{ucr}^{nc}$. After this transformation, the candidate robot--target pair $(r_j,p_j)$ is selected according to the candidate-selection rule
defined above. The robots then follow the procedure described for
$\mathcal K_{\mathrm{ucr}}^{\mathrm{nc}}$. In particular, $r_c$
retains $\lambda_{\textsc{Anchor}}$ while the candidate and secondary critical robots complete their movements. Finally, after the remaining $n-1$ robots have reached their respective target points and display $\textsc{TargetReached}$, robot $r_c$ changes its light from $\lambda_{\textsc{Anchor}}$ to $\textsc{TargetTrace}$ and moves toward $p_c$. Upon reaching $p_c$, it changes its light to $\textsc{TargetReached}$ and remains stationary. The robots follow the common light-transition pattern given in~\ref{app:common-light-transitions}.


\begin{observation}[Angular balance of multiple critical robots]
\label{obs:critical-angular-balance}
Let $\mathcal R(t)\in\mathcal K_{\mathrm{mcr}}$, and let $\theta^*$ be
the selected optimal rotation. For any sufficiently small rotation
that preserves the critical robot--target associations, decreasing the
distance of one critical robot forces the distance of another critical
robot to remain at least $\lambda_{\mathbf C}^*$. Otherwise, the
min--max value could be reduced below $\lambda_{\mathbf C}^*$,
contradicting the optimality of $\theta^*$. Hence, while the critical
robots remain stationary, an angular rotation cannot reduce the
min--max value determined by them.
\end{observation}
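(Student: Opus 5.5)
The argument is by contradiction, using a one-parameter family of target sets. I parametrize every regular $n$-gon inscribed in $\mathbf C$ by a rotation angle $\theta$, writing $\mathcal P(\theta)=\{p_k(\theta)\}$ with $p_k(\theta)=\mathcal O+h(\cos(\theta+2\pi k/n),\sin(\theta+2\pi k/n))$. The selected optimum corresponds to $\theta^*$, with $\mathcal P=\mathcal P(\theta^*)$, and the selected assignment $f_{\mathrm{opt}}$ then corresponds to a fixed index map $r_i\mapsto k(i)$. Fixing this index map and varying $\theta$ gives continuous cost functions $g_i(\theta)=d\bigl(r_i,p_{k(i)}(\theta)\bigr)$. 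Moreover, $\lambda^*_{\mathbf C}$ is the minimum over $\theta$ and over bijections of $\max_i g_i(\theta)$; this value is well defined by Result~\ref{result-1}.

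Suppose a small rotation $\theta=\theta^*+\varepsilon$ decreases $g_a$ for some critical robot $r_a\in\mathcal R^*(t)$, while keeping the critical robot--target associations fixed. I claim that then some other critical robot $r_b$ satisfies $g_b(\theta)\ge\lambda^*_{\mathbf C}$. To prove this, assume instead that every critical robot has $g_i(\theta)<\lambda^*_{\mathbf C}$. Every non-critical robot has $g_j(\theta^*)<\lambda^*_{\mathbf C}$ strictly, and there are finitely many robots, so continuity gives $g_j(\theta)<\lambda^*_{\mathbf C}$ for all small $|\varepsilon|$. With the same index map, the assignment for $\mathcal P(\theta)$ would then have maximum strictly below $\lambda^*_{\mathbf C}$, which contradicts the minimality of $\lambda^*_{\mathbf C}$. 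Hence at least one critical robot keeps $g_b(\theta)\ge\lambda^*_{\mathbf C}$. The robot whose distance decreased cannot be that one, so $b\neq a$.

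The step I expect to be the main obstacle is the final clause, that no rotation lowers the min--max value \emph{even allowing reassignment}. A rotated target set could admit a different bijection, so it is not enough to look only at the fixed index map. I would handle this with the minimality of $\lambda^*_{\mathbf C}$ over \emph{all} pairs of target set and assignment, not just over $\theta$ with a fixed map. That minimality directly gives $\min_{\text{bijections}}\max_i d(r_i,\cdot)\ge\lambda^*_{\mathbf C}$ for every $\theta$. The only role of the stationary critical robots is to ensure that the configuration, and therefore the functions $g_i$, does not change. This is needed because the comparison uses the same $\mathcal R(t)$ before and after the rotation.

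Finally, I would remark that the functions $g_i$ are smooth in $\theta$ away from the centre. Where a critical robot sits at $\mathcal O$, its $g_i$ is constant, and the argument still holds because such a robot never decreases. This last check should be routine.
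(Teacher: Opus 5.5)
Your argument for the first claim is the paper's own argument. The paper gives no separate proof: the sentence inside the observation (otherwise the min--max value could be pushed below $\lambda^*_{\mathbf C}$, contradicting the optimality of $\theta^*$) is all of it. Your version is more careful. Fixing the full index map, and using the strict slack $g_j(\theta^*)<\lambda^*_{\mathbf C}$ of the non-critical robots together with continuity, is exactly what gives \emph{sufficiently small rotation} a meaning. That slack presumes the operational definition $\mathcal R^*(t)=\{r_i: d(r_i,f_{\mathrm{opt}}(r_i))=\lambda^*_{\mathbf C}\}$ from \textsc{ClassifyAndPlan}. Under the \emph{minimal critical set} definition of the preliminaries, a pivotal robot outside the critical set sits exactly at $\lambda^*_{\mathbf C}$ and would have to be grouped with the critical ones.

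Your treatment of the final sentence misreads it. If it is read for a fixed configuration with arbitrary reassignment, it is just the definition of $\lambda^*_{\mathbf C}$ as a minimum, and \emph{while the critical robots remain stationary} does no work. Also, keeping the critical robots fixed does not freeze the configuration: in the algorithm the non-critical robots are exactly the ones moving meanwhile. The \emph{Hence} concerns the value determined by the critical robots alone, and your first step already delivers it. You showed $\max_{r_i\in\mathcal R^*}g_i(\theta^*+\varepsilon)\ge\lambda^*_{\mathbf C}$ for all small $\varepsilon$. That quantity depends only on the critical positions and their fixed associations, so once established it survives any motion of the non-critical robots as long as the critical ones stay put.

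Keep the statement local and with fixed associations. A reassignment-proof bound for the critical robots alone is false in general. Take $n=3$ with robots just inside $\mathbf C$ at angles $40^\circ$, $-40^\circ$ and $0^\circ$. The optimum uses targets at $0^\circ,\pm120^\circ$, and the two outer robots are critical with value about $2h\sin 40^\circ$. Yet those two alone could be served by targets at $\pm60^\circ$ at cost about $2h\sin 10^\circ$. The global minimality of $\lambda^*_{\mathbf C}$ for $\mathcal R(t_0)$ says nothing about configurations in which non-critical robots have moved.
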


\begin{observation}[First movement in an all-critical configuration]
\label{obs:all-critical-first-move}

Let
$\mathcal R(t_0)\in\mathcal K_{\mathrm{mcr}}^{\mathrm{all}}$.
Suppose that the selected critical robots move toward their assigned
targets, while the remaining critical robots stay stationary. After
any positive movement, each selected robot becomes non-critical since
its remaining distance is strictly smaller than
$\lambda_{\mathbf C}^*$. The stationary critical robots continue to
preserve $\lambda_{\mathbf C}^*$ and the target-point set, regardless
of whether the initial symmetry is preserved or broken.
\end{observation}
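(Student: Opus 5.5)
The plan is to prove the two claims of the observation separately: first a local distance estimate for the robots that move, then a global argument that the stationary critical robots still pin both $\lambda_{\mathbf C}^*$ and $\mathcal P$.

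\emph{Selected robots become non-critical.} Let $r$ be a selected critical robot with $p=f_{\mathrm{opt}}(r)$ and $d(r(t_0),p)=\lambda_{\mathbf C}^*$. The robot moves along the segment $\overline{r(t_0)p}$. By the non-rigid model it is stopped at some $r'$ on that segment with $d(r(t_0),r')=\varepsilon>0$. Because the motion is along a straight segment toward $p$, we get $d(r',p)=\lambda_{\mathbf C}^*-\varepsilon<\lambda_{\mathbf C}^*$. So under the assignment $f_{\mathrm{opt}}$, which is unchanged for the moved robot, $r'$ does not attain the maximum. This holds whatever the new snapshot $\mathcal R(t)$ looks like, so it is independent of whether symmetry is preserved or broken. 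The only use of the light $\textsc{TargetTrace}$ is to guarantee that $p$ is not recomputed; the same estimate then applies after every later interruption.

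\emph{Upper bound on $\lambda_{\mathbf C}^*(\mathcal R(t))$.} Keep the same set $\mathcal P$ and the same bijection $f_{\mathrm{opt}}$. Every stationary robot is still at distance exactly $\lambda_{\mathbf C}^*$ from its target, and every moved robot is strictly closer. Hence $\lambda_{\mathbf C}^*(\mathcal R(t))\le\lambda_{\mathbf C}^*$.

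\emph{Lower bound and uniqueness of $\mathcal P$.} For the reverse inequality, parametrise regular $n$-gons on $\mathbf C$ by a rotation angle $\theta$. For any $\theta$ and any assignment $g$, the cost is at least $\max_{s\in S} d(s,g(s))$, where $S$ is the set of stationary critical robots.

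I would then require, as part of the selection rule, that $S$ still contains a critical set of $\mathcal R(t_0)$. This is possible by Result~\ref{result-2}: every critical robot not assigned to its nearest point $u$ has a partner attaining $\lambda_{\mathbf C}^*$. Moving only one robot from each rotational orbit therefore leaves a balancing family. With $S$ fixed, Observation~\ref{obs:critical-angular-balance} implies that no rotation $\theta\neq\theta^*$ near $\theta^*$ lowers $\max_{s\in S}$ below $\lambda_{\mathbf C}^*$.

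Next I would globalise this. Any $\theta$ with cost $<\lambda_{\mathbf C}^*$ for $S$ would, together with the original optimal placements of the remaining robots, contradict the optimality in $\mathcal R(t_0)$, as characterised by Result~\ref{result-1}. Consequently $\lambda_{\mathbf C}^*(\mathcal R(t))=\lambda_{\mathbf C}^*$, and the minimising rotation remains $\theta^*$, so the target-point set is again $\mathcal P$.

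\emph{Main obstacle.} The hard step is this global lower bound. A different $n$-gon could in principle give the moved robots much shorter distances while keeping $S$ at exactly $\lambda_{\mathbf C}^*$ through a different matching. My first attempt would reduce the question to the bipartite-matching feasibility test of~\cite{bhattacharya2009optimal}. With radius slightly less than $\lambda_{\mathbf C}^*$, the disks around the points of $S$ admit no matching into any rotated $n$-gon, since otherwise $S$ would not be a critical set of $\mathcal R(t_0)$. At radius exactly $\lambda_{\mathbf C}^*$, the only feasible rotation is $\theta^*$, by the rigidity of Observation~\ref{obs:unique-critical-angular-rigidity} applied to the balancing critical robots of $S$.
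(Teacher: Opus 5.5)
Your first two steps are sound. The straight-line estimate $d(r',p)=\lambda_{\mathbf C}^*-\varepsilon$ is exactly the paper's argument for $\mathcal K_{\mathrm{mcr}}^{\mathrm{all}}$: the segment $[r_j(t_0),p_j^*]$ lies in the convex disk $\overline{\mathcal D}_j^*(t_0)$, and the remaining distance is strictly below $\lambda_{\mathbf C}^*$. Retaining $(\mathcal P,f_{\mathrm{opt}})$ then gives $\lambda_{\mathbf C}^*(\mathcal R(t))\le\lambda_{\mathbf C}^*$. The paper goes no further. That the stationary $\lambda_{\textsc{Anchor}}$ robots preserve $\lambda_{\mathbf C}^*$ and $\mathcal P$ is only asserted, with the caveat in Lemma~\ref{lem:mcr-target-invariance} that the selection must not ``exhaust the reference set''.

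You correctly see that this is the real content and that it needs a restriction on which robots move. Without one it fails. Suppose the moved robot is the only one whose distance to its target does not strictly decrease under a small rotation of $\mathcal P$ in some direction. Then that rotation, with $f_{\mathrm{opt}}$ kept, puts every robot strictly within $\lambda_{\mathbf C}^*$, so the optimum drops. The paper's own proof of Theorem~\ref{theorem-4} describes such a case: an all-critical configuration in which $\{r_2,r_4,r_6\}$ move, $\{r_1,r_3,r_5\}$ stay put, and $\lambda_{\mathbf C}^*$ strictly decreases. Note also that a finite light cannot store $p$. After an interruption, an oblivious robot recovers $p$ only by recomputing $\mathcal P$ from the snapshot, so your remark about ``every later interruption'' already depends on the part not yet proved.

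The gap is the lower bound, and several of its steps fail.
\begin{itemize}
\item Result~\ref{result-2} only yields a second robot at distance $\lambda_{\mathbf C}^*$. It says nothing about which rotation direction that robot blocks, so it cannot certify that $S$ retains a balancing family. Also, ``one robot from each rotational orbit'' is not a rule at all for asymmetric configurations, where every orbit is a singleton.
\item Your matching step conflates the paper's critical set (a minimal pivotal set whose \emph{removal} lowers $\lambda_{\mathbf C}^*$) with what you need: that $S$ \emph{by itself} admits no injective assignment into any regular $n$-gon within radius below $\lambda_{\mathbf C}^*$, i.e.\ condition~(C) of Lemma~\ref{lem:mcr-optimal-solution-invariance}. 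These really differ. Assume no critical robot is assigned to its nearest point. The critical robots whose distance grows under clockwise rotation then contain a critical set in the paper's sense, since removing them lets the polygon turn clockwise. Yet if exactly these stay stationary, a small counter-clockwise turn keeps every robot, including the now-slack moved ones, strictly within $\lambda_{\mathbf C}^*$.
\item The globalisation is a non sequitur. If $S$ alone costs less than $\lambda_{\mathbf C}^*$ under some $\mathcal P(\theta)$, the original placements of the moved robots are points of $\mathcal P(\theta^*)$, not of $\mathcal P(\theta)$, so nothing can be combined. At $t_0$ it may be precisely the moved robots, at their old positions, that raise the cost of $\mathcal P(\theta)$ to $\lambda_{\mathbf C}^*$.
\item Observation~\ref{obs:unique-critical-angular-rigidity} concerns a robot assigned to its nearest point, which the critical robots of a multiple-critical configuration generally are not. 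Observation~\ref{obs:critical-angular-balance} is purely local.
\end{itemize}
What is missing is a proof that the stationary set satisfies~(C), with equality only at $\theta^*$, together with a selection rule guaranteeing it. Given that, restricting an arbitrary assignment to $S$ settles both the lower bound and uniqueness at once. This is exactly the argument of Lemma~\ref{lem:mcr-optimal-solution-invariance}, where (C) is likewise assumed rather than derived.
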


\subsubsection{Assignment Computation for $\mathcal K_{mcr}$}
In a unique critical robot configuration, the line passing through the centre
$\mathcal O$ and the critical robot determines a target point on
$\mathbf C$. This point anchors the remaining target points at angular
intervals of $2\pi/n$, making the target-point set invariant.

In a multiple-critical configuration, no single critical robot provides
such a unique anchor. Moreover, robot movements may change the observed
critical structure, making it difficult to reconstruct the same
target-point set from an intermediate configuration. Therefore, the
robots use their lights to preserve the required reference information
and consistently recompute the original target-point set without
changing its orientation or the assignment. Once at least one robot
reaches its assigned target point, that occupied target point serves as
an anchor, and the complete target-point set can be reconstructed by
placing the remaining points on $\mathbf C$ at angular intervals of
$2\pi/n$. According to the classification, $\mathcal K_{mcr}$, we consider these two types separately and describe how the target-point set and assignment are reconstructed after at least one robot has moved. Before describing the assignment computation for the different multiple-critical configurations, we recall the observations concerning critical robots given in Appendix.

\noindent\textbf{Case-I (Configuration
$\mathcal K_{\mathrm{mcr}}^{\mathrm{all}}$).}
In this configuration, every robot is critical and every target point
is reserved under the selected optimal assignment. If the
configuration is rotationally symmetric, the robots use the common
ordering induced by the one-axis agreement and select a critical robot
having the highest $y$-coordinate with respect to the agreed axis. If
several symmetry-equivalent robots have the same highest position, the
highest-priority robot is selected. If they remain indistinguishable,
they may be selected simultaneously, provided that they have distinct
assigned target points.

\begin{algorithm}[ht]
\caption{$\textsc{ProcMCR-ALL}(r,\mathcal R(t),\lambda_{\mathbf C}^*,\mathcal P,f_{\mathrm{opt}})$ --- configuration $\mathcal K_{mcr}^{\mathrm{all}}$}
\label{alg:mcrall}
\begin{algorithmic}[1]
\If{$r.\text{light}=\textsc{Off}$}
    \State $r.\text{light}\gets\lambda_{\textsc{Anchor}}$; \textbf{remain stationary} \Return
\EndIf
\State $r_j\gets$ selected critical robot(s) by one-axis agreement / symmetry-breaking rule (Case-I, \S\ \ref{Assignment})
\If{$r=r_j$ (or $r\in$ selected symmetric set)}
    \State $r.\text{light}\gets\textsc{TargetTrace}$; \textbf{move toward } $f_{\mathrm{opt}}(r)$
    \If{$r$ reaches its target} $r.\text{light}\gets\textsc{TargetReached}$ \EndIf
\Else
    \State \textbf{remain stationary}, retaining $\lambda_{\textsc{Anchor}}$
    \Comment{once $r_j$ moves positively, $\mathcal R(t)\to\mathcal K_{mcr}^{\mathrm{ii}}$; subsequent cycles call $\textsc{ProcMCR-II-IO}$}
\EndIf
\end{algorithmic}
\end{algorithm}

Let $r_j$ be a selected critical robot and let
$p_j^*=f_{\mathrm{opt}}(r_j)$ be its assigned critical target point.
Robot $r_j$ moves along the segment
$[r_j(t_0),p_j^*]$, while every unselected critical robot remains
stationary. Due to non-rigid movement, $r_j$ may be stopped before
reaching $p_j^*$. In that case, it retains the same target point and
continues toward it in every subsequent activation. No additional
robot is allowed to initiate a movement until every currently selected
robot reaches its assigned target point.
Since $r_j$ is initially critical, $d\bigl(r_j(t_0),p_j^*\bigr)
=
\lambda_{\mathbf C}^*.$ Hence, $r_j(t_0)$ lies on the boundary of its closed critical disk $\overline{\mathcal D}_j^*(t_0)$. After any positive movement toward
$p_j^*$, at some time $t_1>t_0$, $d\bigl(r_j(t_1),p_j^*\bigr)
<
\lambda_{\mathbf C}^*,$ and therefore $r_j$ becomes non-critical with respect to the retained assignment.
Moreover, since $\overline{\mathcal D}_j^*(t_0)$ is convex, $[r_j(t_0),p_j^*]
\subseteq
\overline{\mathcal D}_j^*(t_0)
\subseteq
\mathcal D^*(t_0).$ Consequently, $r_j(t_1)\in\mathcal D^*(t_0)
\ \text{and}\ p_j^*\in\mathcal D^*(t_0).$ Once $r_j$ becomes non-critical, its target point $p_j^*$ becomes an
unreserved target point. Thus, the resulting configuration belongs to
$\mathcal K_{\mathrm{mcr}}^{\mathrm{ii}}$.
By Observation~\ref{obs:all-critical-first-move}, the stationary
critical robots continue to preserve
$\lambda_{\mathbf C}^*$ and the target-point set, even if unequal
non-rigid movements break the initial symmetry. Therefore, every
moving robot retains its previously assigned target point. After this transition, every subsequent candidate robot--target pair
is selected according to the candidate-selection rule defined above,
and the robots follow the procedure for
$\mathcal K_{\mathrm{mcr}}^{\mathrm{ii}}$..
During the execution of this case, every critical robot changes its
light from $\textsc{Off}$ to
$\lambda_{\textsc{Anchor}}$ when it is activated. Based on the
selection rule described above, the selected critical robot $r_j$
changes its light according to $\lambda_{\textsc{Anchor}}
\longrightarrow
\textsc{TargetTrace}$ and starts moving toward its previously assigned critical target point $p_j^*$. Every unselected critical robot retains $\lambda_{\textsc{Anchor}}$ and remains stationary.
If several symmetry-equivalent critical robots are selected simultaneously, each of them changes its light to $\textsc{TargetTrace}$ and moves toward a distinct assigned target point. Every other robot remains stationary while these lights are
visible.

Due to non-rigid movement, a selected robot $r_j$ may be stopped before
reaching $p_j^*$. In that case, it retains
$\textsc{TargetTrace}$ and continues toward the same target point in
each subsequent activation. After making any positive movement,
$r_j$ becomes non-critical, but its light and assigned target point
remain unchanged. The stationary robots continue to display
$\lambda_{\textsc{Anchor}}$ and preserve
$\lambda_{\mathbf C}^*$ and the target-point set.
Thus, after the transition $\mathcal K_{\mathrm{mcr}}^{\mathrm{all}}
\longrightarrow
\mathcal K_{\mathrm{mcr}}^{\mathrm{ii}},$ the moving robot $r_j$ continues to execute the candidate-robot procedure for
$\mathcal K_{\mathrm{mcr}}^{\mathrm{ii}}$. Upon reaching $p_j^*$, it
changes its light according to $\textsc{TargetTrace}
\longrightarrow
\textsc{TargetReached}$ and remains stationary.
After every initially selected robot displays
$\textsc{TargetReached}$, the remaining robots follow the light and
movement procedure specified for
$\mathcal K_{\mathrm{mcr}}^{\mathrm{ii}}$. Each remaining critical
robot retains $\lambda_{\textsc{Anchor}}$ until it becomes eligible
to move. It then changes its light to $\textsc{TargetTrace}$, moves
toward its assigned target point, and changes its light to
$\textsc{TargetReached}$ upon reaching it.
The light sequences are summarized as follows:

\[
\begin{array}{rcl}
\text{Initially selected critical robot }r_j
&:&
\begin{aligned}[t]
\text{\textsc{Off}}
&\longrightarrow \lambda_{\textsc{Anchor}}
\longrightarrow \text{\textsc{TargetTrace}}\\
&\longrightarrow \text{\textsc{TargetReached}}
\end{aligned}
\\[6pt]
\text{Remaining critical robot }r_i^*
&:&
\begin{aligned}[t]
\text{\textsc{Off}}
&\longrightarrow \lambda_{\textsc{Anchor}}
\longrightarrow \text{\textsc{TargetTrace}}\\
&\longrightarrow \text{\textsc{TargetReached}}.
\end{aligned}
\end{array}
\]

\noindent\textbf{Case-II (Configuration
$\mathcal K_{\mathrm{mcr}}^{\mathrm{ot}}$).}
In this configuration, at least one unreserved target point lies
outside the collective critical region. Let
$\mathcal P_{\mathrm{out}}(t_0)
=
\left\{
p\in\mathcal P_{\mathcal N}(t_0):
p\notin\mathcal D^*(t_0)
\right\}$ denote the set of such target points. Let $(r_j,p_j)$ be selected according to the candidate-selection rule defined above, with the candidate target set restricted to
$\mathcal P_{\mathrm{out}}(t_0)$. Thus, $r_j$ is a non-critical robot closest to an unreserved target
point $p_j$ lying outside $\mathcal D^*(t_0)$. If more than one
robot--target pair has the same minimum distance, the tie is resolved
using the one-axis agreement.
The selected robot $r_j$ moves along the segment
$[r_j(t_0),p_j]$, while every other robot remains stationary. If its
movement is interrupted because of non-rigidity, it retains the same
target point and continues toward $p_j$ in each subsequent activation.
No other robot is allowed to initiate a movement until $r_j$ reaches
$p_j$. Once $r_j$ reaches $p_j$, the remaining assignment procedure
is continued using the fixed target-point set.

\begin{algorithm}[ht]
\caption{$\textsc{ProcMCR-OT}(r,\mathcal R(t),\lambda_{\mathbf C}^*,\mathcal P,f_{\mathrm{opt}})$ --- configuration $\mathcal K_{mcr}^{\mathrm{ot}}$}
\label{alg:mcrot}
\begin{algorithmic}[1]
\State $\mathcal P_{\mathrm{out}}(t)\gets\{p\in\mathcal P_{\mathcal N}(t): p\notin\mathcal D^*(t)\}$
\State $(r_j,p_j)\gets$ candidate--target pair selected by the candidate-selection rule restricted to $\mathcal P_{\mathrm{out}}(t)$
\State $r_s\gets$ secondary critical robot, if any
\If{$r\in\mathcal R^*(t)$}
    \State $r.\text{light}\gets\lambda_{\textsc{Anchor}}$ (\textbf{if} $\textsc{Off}$); \textbf{remain stationary until every non-critical robot reaches its target}
    \State \textbf{then} $r.\text{light}\gets\textsc{TargetTrace}$; \textbf{move toward } $f_{\mathrm{opt}}(r)$
\ElsIf{$r=r_s$}
    \State $r.\text{light}\gets\lambda^-_{\textsc{Anchor}}$ (\textbf{if} $\textsc{Off}$); \textbf{remain stationary until all candidates reach targets}
    \State \textbf{then} $r.\text{light}\gets\textsc{TargetTrace}$; \textbf{move toward } its reserved target
\ElsIf{$r=r_j$}
    \State $r.\text{light}\gets\textsc{TargetTrace}$; \textbf{move toward } $p_j$
\Else
    \State \textbf{remain stationary}
\EndIf
\If{$r$ reaches its assigned target} $r.\text{light}\gets\textsc{TargetReached}$ \EndIf
\end{algorithmic}
\end{algorithm}

During the execution of this case, every critical robot
$r_i^*\in\mathcal R^*(t_0)$ changes its light from $\textsc{Off}$ to
$\lambda_{\textsc{Anchor}}$ when it is activated and remains
stationary. A secondary critical robot $r_s$, if one exists, changes
its light from $\textsc{Off}$ to
$\lambda^-_{\textsc{Anchor}}$ and also remains stationary.
The selected candidate robot $r_j$ changes its light according to $\textsc{Off}
\longrightarrow
\textsc{TargetTrace}$ and moves toward the selected target point $p_j$. Every other robot
remains stationary while $r_j$ displays
$\textsc{TargetTrace}$. If its movement is interrupted because of
non-rigidity, $r_j$ retains the same light and continues toward $p_j$
in each subsequent activation.
Upon reaching $p_j$, robot $r_j$ changes its light according to $\textsc{TargetTrace}
\longrightarrow
\textsc{TargetReached}$ and remains stationary. Its position then provides a direct reference
for the fixed target-point set.
After every candidate robot displays $\textsc{TargetReached}$, the
secondary critical robot $r_s$, if one exists, changes its light from
$\lambda^-_{\textsc{Anchor}}$ to
$\textsc{TargetTrace}$ and moves toward its assigned target point.
After all non-critical robots have reached their respective targets,
the critical robots become eligible to move toward their assigned
critical target points. Each critical robot changes its light from
$\lambda_{\textsc{Anchor}}$ to $\textsc{TargetTrace}$ when it
becomes eligible. Every robot changes its light to
$\textsc{TargetReached}$ upon reaching its target and remains
stationary thereafter. The robots follow the common light-transition pattern given in~\ref{app:common-light-transitions}.

\noindent\textbf{Case-III (Configuration
$\mathcal K_{\mathrm{mcr}}^{\mathrm{ct}}$).}
Let $\lambda_0
=
\lambda_{\mathbf C}^*
\bigl(\bar{\mathcal A},\mathcal R(t_0)\bigr),$ and let
$(\mathcal P_0,f_{\mathrm{opt}})$ be the deterministically selected
optimal target-point set and assignment at time $t_0$. In this
configuration, every non-critical robot lies outside
$\mathcal D^*(t_0)$, whereas every unreserved target point lies inside
or on its boundary. Therefore, a candidate robot may enter
$\mathcal D^*(t_0)$ while moving toward its assigned target point. Due
to non-rigid movements and the oblivious nature of the robots, the
main difficulty is to ensure that $\mathcal P_0$ and
$f_{\mathrm{opt}}$ remain invariant throughout all intermediate ASYNC
configurations.
Before any candidate robot moves, every critical robot changes its
light according to $\text{\textsc{Off}}
\longrightarrow
\lambda_{\textsc{Anchor}}$ and remains stationary. A
secondary critical robot, if one exists, changes its light according to $\text{\textsc{Off}}
\longrightarrow
\lambda^-_{\textsc{Anchor}}$ and also remains stationary.
Let $(r_j,p_j)$ be the candidate robot--target pair selected according
to the candidate-selection rule defined above. If a
symmetry-equivalent pair or orbit is selected, let $S$ denote the set
of all such selected candidate robots. If a robot displaying $\textsc{TargetReached}$ is visible, its
position is given priority over every critical or dummy reference for
reconstructing the target-point set.

\begin{algorithm}[ht]
\caption{$\textsc{ProcMCR-CT}(r,\mathcal R(t),\lambda_{\mathbf C}^*,\mathcal P,f_{\mathrm{opt}})$ --- configuration $\mathcal K_{mcr}^{\mathrm{ct}}$}
\label{alg:mcrct}
\begin{algorithmic}[1]
\State $(r_j,p_j)\gets$ candidate--target pair selected by the candidate-selection rule; $S\gets$ symmetry-equivalent selected set
\If{$r\in\mathcal R^*(t)$}
    \State $r.\text{light}\gets\lambda_{\textsc{Anchor}}$ (\textbf{if} $\textsc{Off}$); \textbf{remain stationary until all non-critical robots settle}
\ElsIf{$r\in S$ \textbf{and} $|\mathcal R^*(t)|>2$} \Comment{Subcase III(a)}
    \State $r.\text{light}\gets\textsc{TargetTrace}$; \textbf{move toward } $p_j=f_{\mathrm{opt}}(r)$ \Comment{reference $=$ stationary critical robots}
\ElsIf{$r\in S$ \textbf{and} $|\mathcal R^*(t)|=2$} \Comment{Subcase III(b)}
    \State Recompute $m_{ab},\ell_{ab},H^+,z^+$, dummy point $w=\tfrac{m_{ab}+z^+}{2}$
    \State $r.\text{light}\gets\textsc{ReferenceTrace}$; \textbf{move toward } $p_j=f_{\mathrm{opt}}(r)$, using $w$ to reconstruct $\mathcal P$
\ElsIf{$r=r_s$ (secondary critical)}
    \State $r.\text{light}\gets\lambda^-_{\textsc{Anchor}}$ (\textbf{if} $\textsc{Off}$); \textbf{remain stationary until all candidates reach $\textsc{TargetReached}$}
\Else
    \State \textbf{remain stationary} (\textsc{Off}) \Comment{eligible only after the first \textsc{TargetReached}}
\EndIf
\If{$r$ reaches its assigned target}
    \State $r.\text{light}\gets\textsc{TargetReached}$ \Comment{its position now anchors $\mathcal P$ for all later robots}
\EndIf
\end{algorithmic}
\end{algorithm}

\medskip
\noindent
\textbf{Subcase-III(a): More Than Two Critical Robots.} This subcase applies when $|\mathcal R^*(t_0)|>2.$ The positions and
$\lambda_{\textsc{Anchor}}$ lights of the stationary critical robots
determine the original target-point set $\mathcal P_0$. Therefore, no
dummy point or additional geometric reference is required.
Every selected candidate $r_j\in S$ changes its light according to
$\text{\textsc{Off}}
\longrightarrow
\text{\textsc{TargetTrace}}$ and moves toward its assigned target
point $p_j=f_{\mathrm{opt}}(r_j)$.
For every $r_j\in S$, once it makes a positive movement toward $p_j$,
at any time $t>t_0$ after such movement, we have
$d\bigl(r_j(t),p_j\bigr)
=
d\bigl(r_j(t_0),p_j\bigr)
-
d\bigl(r_j(t_0),r_j(t)\bigr)
<
d\bigl(r_j(t_0),p_j\bigr)
<
\lambda_0.$ Thus, every moving candidate remains non-critical
throughout its movement. Since all critical robots remain stationary,
Observation~\ref{obs:all-critical-first-move} ensures that
$\lambda_{\mathbf C}^*
\bigl(\bar{\mathcal A},\mathcal R(t)\bigr)
=
\lambda_0,\ \mathcal P(t)=\mathcal P_0.$ If the movement of a
candidate is interrupted because of non-rigidity, it retains
\textsc{TargetTrace} and continues toward the same target point in its
next activation. 

\begin{figure}[ht!]
     \centering
     \vspace{-0.3 cm}
     
     \includegraphics[width=0.5\textwidth]{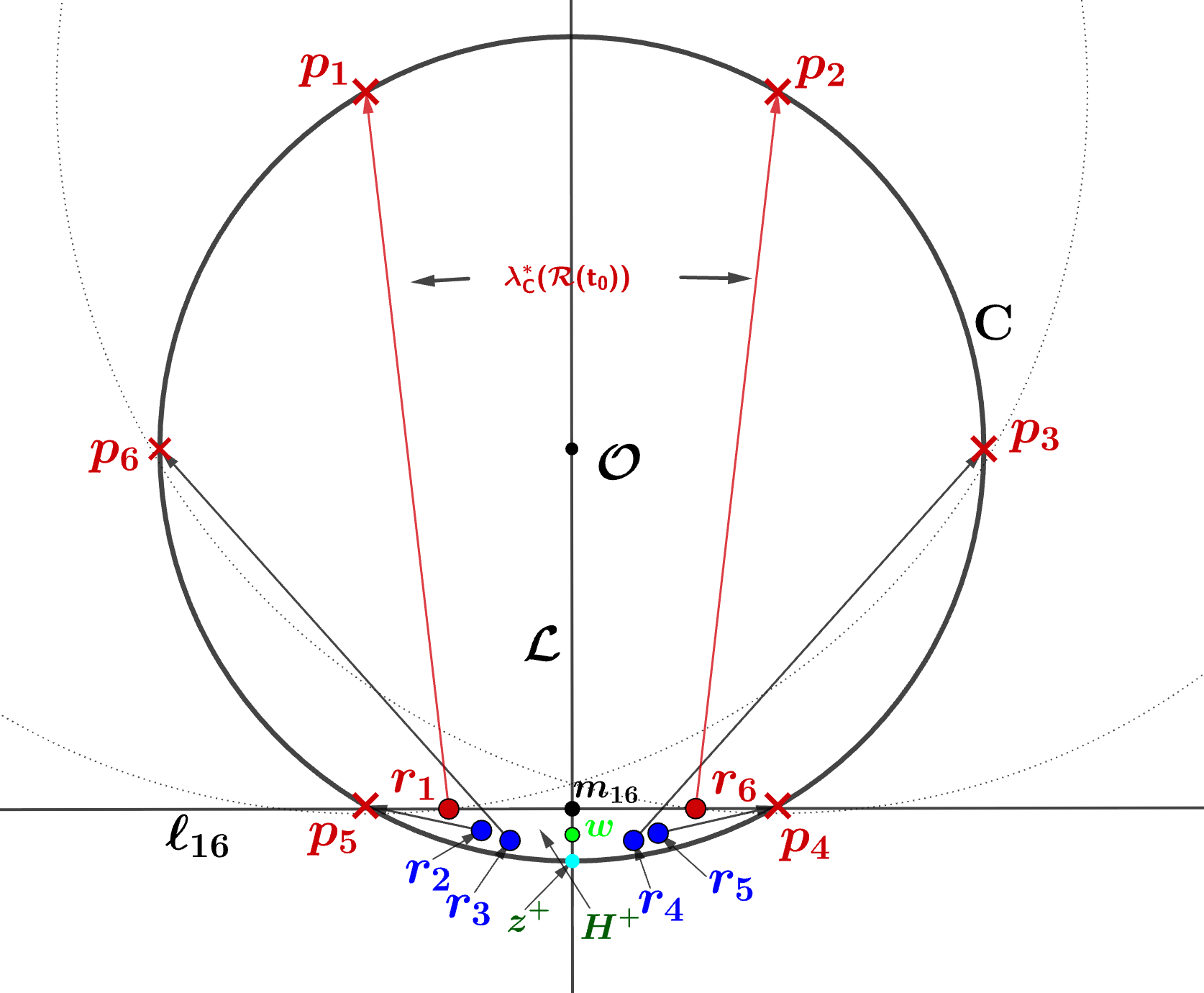}
     
\caption{A six-robot configuration with exactly two critical robots, $r_a^*=r_1$ and $r_b^*=r_6$, and critical distance $\lambda_{\mathbf C}^*$. The midpoint $m_{16}$, selected region $H^+$, and dummy point $w$ determine the invariant target-point set $\mathcal P_0$.}
    \label{multi}
     \end{figure}

\medskip
\noindent
\textbf{Subcase-III(b): Exactly Two Critical Robots.} This subcase applies when $|\mathcal R^*(t_0)|=2.$ Let $r_a^*$ and
$r_b^*$ be the two critical robots. As illustrated in
Figure~\ref{multi}, let
$m_{ab}=\frac{r_a^*(t_0)+r_b^*(t_0)}{2}$ be the midpoint of the segment
joining them. Define $\ell_{ab}
=
\overleftrightarrow{r_a^*(t_0)r_b^*(t_0)},
\ \mathscr L
=
\overleftrightarrow{\mathcal O m_{ab}},$ where
$m_{ab}\neq\mathcal O$.

The line $\ell_{ab}$ divides the disk bounded by $\mathbf C$ into two
regions. The robots select the region containing the largest number
of stationary robots. The selected candidates in $S$ are excluded
from this count because their positions may change during the movement
phase. Denote the selected region by $H^+$. Note that
$\mathcal D^*(t_0)\cap H^+=\emptyset$.

Let $z^+$ be the intersection of $\mathscr L$ with the circumference
of $\mathbf C$ lying in $H^+$. The virtual dummy point is defined by
$w=\frac{m_{ab}+z^+}{2}.$ Thus, $w$ lies on $\mathscr L$ in $H^+$,
and every candidate computes the same virtual point. The dummy point
provides a common geometric reference from which every robot
reconstructs the invariant target-point set $\mathcal P_0$, consisting
of $n$ distinct points regularly spaced on the circumference of
$\mathbf C$. Since it is neither a robot position nor a target point,
it does not participate in the computation of $\lambda_0$ or
$f_{\mathrm{opt}}$.
Each selected candidate $r_j\in S$ changes its light according to
$\text{\textsc{Off}}
\longrightarrow
\text{\textsc{ReferenceTrace}}$ and moves toward its assigned target
point $p_j=f_{\mathrm{opt}}(r_j)$.
If the movement of a candidate is interrupted, it retains
\textsc{ReferenceTrace}. Since the two critical robots and all
uncounted stationary robots remain fixed, the candidate recomputes the
same region $H^+$, dummy point $w$, and target-point set
$\mathcal P_0$ before continuing toward the same target point.
For every $r_j\in S$ and every time $t$ during its movement,
$d\bigl(r_j(t),p_j\bigr)
<
d\bigl(r_j(t_0),p_j\bigr)
<
\lambda_0.$ Since the two critical robots remain stationary, they
preserve the optimal min--max value and the original target-point set:
$\lambda_{\mathbf C}^*
\bigl(\bar{\mathcal A},\mathcal R(t)\bigr)
=
\lambda_0,\ \mathcal P(t)=\mathcal P_0.$
While any initially selected candidate in $S$ displays
\textsc{TargetTrace} or \textsc{ReferenceTrace}, every non-critical
robot outside $S$ retains \textsc{Off} and remains stationary. The
critical robots and the secondary critical robot retain their
respective anchor lights and also remain stationary. In the subcase
$|\mathcal R^*(t_0)|=2$, no robot displays \textsc{TargetTrace} during
the \textsc{ReferenceTrace} phase. Therefore, no new candidate is
selected until every initially selected candidate in $S$ reaches its
assigned target point.
Upon reaching $p_j$, a selected candidate changes its light according
to
\[
\begin{cases}
\text{\textsc{TargetTrace}}
\longrightarrow
\text{\textsc{TargetReached}},
&
\text{if }|\mathcal R^*(t_0)|>2,\\[4pt]
\text{\textsc{ReferenceTrace}}
\longrightarrow
\text{\textsc{TargetReached}},
&
\text{if }|\mathcal R^*(t_0)|=2.
\end{cases}
\]
The position $p_j$ of the first robot displaying
\textsc{TargetReached} fixes the target-point set $\mathcal P_0$. The
remaining $n-1$ target points are placed successively on the
circumference of $\mathbf C$ at an equal angular separation of
$2\pi/n$ from $p_j$. Equivalently, consecutive target points are
separated by an arc distance of $2\pi h/n$, where $h$ is the radius of
$\mathbf C$.
After all initially selected candidates have reached their targets,
the remaining robots compute their assignments with respect to the
fixed target-point set $\mathcal P_0$, excluding the occupied target
points. The remaining candidate robots move toward their assigned
targets using $\text{\textsc{Off}}
\longrightarrow
\text{\textsc{TargetTrace}}
\longrightarrow
\text{\textsc{TargetReached}}.$
After all remaining candidate robots display
$\textsc{TargetReached}$, the secondary critical robot, if one exists,
moves toward its assigned target point. Finally, the critical robots
move toward their respective critical target points according to the procedure described for $\mathcal K_{\mathrm{ucr}}^{\mathrm{nc}}$. 
The light sequences are summarized as follows:
\[
\begin{array}{rcl}
\text{Initial candidate, }|\mathcal R^*(t_0)|>2
&:&
\text{\textsc{Off}}
\longrightarrow
\text{\textsc{TargetTrace}}
\longrightarrow
\text{\textsc{TargetReached}}
\\[3pt]
\text{Initial candidate, }|\mathcal R^*(t_0)|=2
&:&
\text{\textsc{Off}}
\longrightarrow
\text{\textsc{ReferenceTrace}}
\longrightarrow
\text{\textsc{TargetReached}}
\\[3pt]
\text{Remaining candidate}
&:&
\text{\textsc{Off}}
\longrightarrow
\text{\textsc{TargetTrace}}
\longrightarrow
\text{\textsc{TargetReached}}
\\[3pt]
\text{Secondary critical robot}
&:&
\text{\textsc{Off}}
\longrightarrow
\lambda^-_{\textsc{Anchor}}
\longrightarrow
\text{\textsc{TargetTrace}}
\longrightarrow
\text{\textsc{TargetReached}}
\\[3pt]
\text{Critical robot}
&:&
\text{\textsc{Off}}
\longrightarrow
\lambda_{\textsc{Anchor}}
\longrightarrow
\text{\textsc{TargetTrace}}
\longrightarrow
\text{\textsc{TargetReached}}.
\end{array}
\]

\noindent\textbf{Case-IV (Configurations
$\mathcal K_{\mathrm{mcr}}^{\mathrm{ii}}$ and
$\mathcal K_{\mathrm{mcr}}^{\mathrm{io}}$).}
In both configurations, at least one non-critical robot lies inside or
on the boundary of $\mathcal D^*(t_0)$. For a configuration in
$\mathcal K_{\mathrm{mcr}}^{\mathrm{ii}}$, its closest unreserved target
point also lies in $\mathcal D^*(t_0)$, whereas for a configuration in
$\mathcal K_{\mathrm{mcr}}^{\mathrm{io}}$, that target point lies
outside $\mathcal D^*(t_0)$. Accordingly, define
$\mathcal P_{\mathrm{adm}}(t_0)
=
\begin{cases}
\mathcal P_{\mathcal N}(t_0)\cap\mathcal D^*(t_0),
&
\text{if }
\mathcal R(t_0)\in\mathcal K_{\mathrm{mcr}}^{\mathrm{ii}},\\[4pt]
\mathcal P_{\mathcal N}(t_0)\setminus\mathcal D^*(t_0),
&
\text{if }
\mathcal R(t_0)\in\mathcal K_{\mathrm{mcr}}^{\mathrm{io}}.
\end{cases}$

\begin{algorithm}[ht]
\caption{$\textsc{ProcMCR-II-IO}(r,\mathcal R(t),\lambda_{\mathbf C}^*,\mathcal P,f_{\mathrm{opt}})$ --- configurations $\mathcal K_{mcr}^{\mathrm{ii}}$, $\mathcal K_{mcr}^{\mathrm{io}}$}
\label{alg:mcriiio}
\begin{algorithmic}[1]
\State $\mathcal P_{\mathrm{adm}}(t)\gets \mathcal P_{\mathcal N}(t)\cap\mathcal D^*(t)$ \textbf{if} $\mathcal K_{mcr}^{\mathrm{ii}}$ \textbf{else} $\mathcal P_{\mathcal N}(t)\setminus\mathcal D^*(t)$
\State $(r_j,p_j)\gets$ candidate--target pair selected from $\mathcal N_{\mathrm{can}}(t)\cap\mathcal D^*(t)$ and $\mathcal P_{\mathrm{can}}(t)\cap\mathcal P_{\mathrm{adm}}(t)$
\State $r_s\gets$ secondary critical robot, if any
\If{$r\in\mathcal R^*(t)$}
    \State $r.\text{light}\gets\lambda_{\textsc{Anchor}}$ (\textbf{if} $\textsc{Off}$); \textbf{remain stationary until all non-critical robots settle}
\ElsIf{$r=r_s$}
    \State $r.\text{light}\gets\lambda^-_{\textsc{Anchor}}$ (\textbf{if} $\textsc{Off}$); \textbf{remain stationary until all candidates reach $\textsc{TargetReached}$}
\ElsIf{$r=r_j$}
    \State $r.\text{light}\gets\textsc{TargetTrace}$; \textbf{move along } $[r(t),p_j]$ \Comment{may cross into/out of $\mathcal D^*(t)$, target unchanged}
\Else
    \State \textbf{remain stationary}
\EndIf
\If{$r$ reaches its assigned target} $r.\text{light}\gets\textsc{TargetReached}$ \EndIf
\end{algorithmic}
\end{algorithm}

Before a candidate moves, every critical robot changes its light from
$\textsc{Off}$ to $\lambda_{\textsc{Anchor}}$ and remains stationary.
The secondary critical robot $r_s$, if it exists, changes its light
from $\textsc{Off}$ to $\lambda^-_{\textsc{Anchor}}$ and also remains
stationary.
Let $(r_j,p_j)$ be selected according to the candidate-selection rule
defined above, with the eligible robot set restricted to
$\mathcal N_{\mathrm{can}}(t_0)\cap\mathcal D^*(t_0)$ and the candidate
target set restricted to
$\mathcal P_{\mathrm{can}}(t_0)\cap\mathcal P_{\mathrm{adm}}(t_0)$. If several pairs attain the
same minimum distance, the tie is resolved using the one-axis
agreement.
The selected candidate $r_j$ changes its light according to
$\text{\textsc{Off}}\longrightarrow
\text{\textsc{TargetTrace}}$ and moves along
$[r_j(t_0),p_j]$. While $r_j$ displays $\textsc{TargetTrace}$, every
other non-critical robot retains $\textsc{Off}$, and the critical and
secondary critical robots retain their respective anchor lights. All
of them remain stationary.
If the movement is interrupted because of non-rigidity, $r_j$ retains
\textsc{TargetTrace} and continues toward $p_j$ in its next
activation. No new candidate is selected until $r_j$ reaches $p_j$.
For $\mathcal K_{\mathrm{mcr}}^{\mathrm{ii}}$, both $r_j(t_0)$ and
$p_j$ lie in $\mathcal D^*(t_0)$. For
$\mathcal K_{\mathrm{mcr}}^{\mathrm{io}}$, the robot initially lies in
$\mathcal D^*(t_0)$ but leaves this region while moving toward
$p_j\notin\mathcal D^*(t_0)$. In either case, it retains the same
target point throughout its movement.
Upon reaching $p_j$, robot $r_j$ changes its light according to
$\text{\textsc{TargetTrace}}\longrightarrow
\text{\textsc{TargetReached}}$ and remains stationary. Its position
fixes the target-point set. The remaining candidate robots then move
toward their assigned target points one at a time using the same light
transition.
After all candidate robots display $\textsc{TargetReached}$, the secondary critical robot changes its light from $\lambda^-_{\textsc{Anchor}}$ to $\textsc{TargetTrace}$ and moves toward its assigned target point. Finally, after every non-critical
robot reaches its target, the critical robots move toward their
assigned critical target points in the deterministic order used in
the procedure for $\mathcal K_{\mathrm{ucr}}^{\mathrm{nc}}$. Every
robot changes its light to $\textsc{TargetReached}$ upon reaching its
target and remains stationary thereafter. The robots follow the common light-transition pattern given in~\ref{app:common-light-transitions}.

\subsection{Correctness of \textsc{MinMaxUniformCircle()}}
\label{subsec:correctness}
The correctness of the algorithm follows from the invariance of the
selected optimal solution and the finite progress of robots toward their
assigned targets. We first prove that the target-point set and the
corresponding assignment remain unchanged during the execution. We then
show that a selected candidate preserves its assigned target until
reaching it, and that simultaneously selected candidates have distinct
reserved targets. These properties ensure collision-free movements.
Finally, the fairness of the scheduler and the non-rigid movement assumption
guarantee that every robot reaches its target in finite time. Hence, the
final configuration realizes the initial optimal assignment and achieves
the optimal min--max travel distance.

\subsubsection{Unique-Critical Invariants}

\begin{lemma}[Invariance of the target-point set for $\mathcal K_{\mathrm{ucr}}^{\mathrm{nc}}$]
\label{lem:unique-critical-invariance}
Let $\mathcal R(t_0)\in\mathcal K_{\mathrm{ucr}}^{\mathrm{nc}}$ have the unique critical robot $r_c$, and let $(\mathcal P_0,f_{\mathrm{opt}})$ be the selected optimal solution with value $\lambda_0$. If a non-critical candidate $r_j$ moves only toward $f_{\mathrm{opt}}(r_j)$ while every other robot remains stationary,
then, throughout its movement, $\lambda_{\mathbf C}^* \bigl(\bar{\mathcal A},\mathcal R(t)\bigr)=\lambda_0,\ \mathcal P(t)=\mathcal P_0.$ Moreover, $f_{\mathrm{opt}}$ remains optimal, $\mathcal R^*(t)=\{r_c\}$, and $r_j$ remains non-critical.
\end{lemma}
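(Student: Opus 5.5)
The plan is to bound the optimal value of the intermediate configuration $\mathcal R(t)$ from above and from below by $\lambda_0$, and then use Lemma~\ref{lem:unique-critical-nearest} and Observation~\ref{obs:unique-critical-angular-rigidity} to pin down the target set.

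\emph{Upper bound.} Fix a time $t$ during the movement of $r_j$. Since $r_j$ moves along the segment $[r_j(t_0),f_{\mathrm{opt}}(r_j)]$, we have $d(r_j(t),f_{\mathrm{opt}}(r_j)) \le d(r_j(t_0),f_{\mathrm{opt}}(r_j)) < \lambda_0$. Every other robot is stationary. Hence the same bijection $f_{\mathrm{opt}}$, applied to $\mathcal R(t)$ with target set $\mathcal P_0$, has cost at most $\lambda_0$. Its maximum is attained exactly at $r_c$, with value $\lambda_0$, and $r_j$ stays strictly below $\lambda_0$. This gives $\lambda^*_{\mathbf C}(\bar{\mathcal A},\mathcal R(t)) \le \lambda_0$, and it shows that $r_j$ is non-critical for $f_{\mathrm{opt}}$.

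\emph{Lower bound.} Suppose $\lambda^*_{\mathbf C}(\mathcal R(t)) < \lambda_0$. Then for every regular target set and every assignment, each robot, and $r_c$ in particular, is within distance less than $\lambda_0$ of its target. The key observation is that $r_c$ has not moved and $r_c\neq\mathcal O$. Its distance to the nearest point of $\mathbf C$ is $d(r_c,u_c)$, and Lemma~\ref{lem:unique-critical-nearest} says the critical target at time $t_0$ was exactly $u_c$, so $d(r_c,u_c)=\lambda_0$. Since $u_c$ is the closest point of $\mathbf C$ to $r_c$, no point of $\mathbf C$ lies at distance less than $\lambda_0$ from $r_c$. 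This contradicts the supposition, so $\lambda^*_{\mathbf C}(\mathcal R(t))=\lambda_0$.

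\emph{Identifying the target set.} Any optimal solution at time $t$ must send $r_c$ to a point at distance exactly $\lambda_0$, and the only such point is $u_c$, because $u_c$ is the unique nearest point when $r_c\neq\mathcal O$. Hence $u_c\in\mathcal P(t)$. A regular $n$-gon containing $u_c$ is exactly $\mathcal P_0$, which is also the rigidity statement of Observation~\ref{obs:unique-critical-angular-rigidity}. So $\mathcal P(t)=\mathcal P_0$. Since the cost of $f_{\mathrm{opt}}$ on $\mathcal R(t)$ equals $\lambda_0$, $f_{\mathrm{opt}}$ remains optimal.

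\emph{Main obstacle: $\mathcal R^*(t)=\{r_c\}$.} With respect to $f_{\mathrm{opt}}$ this is immediate: every stationary non-critical robot keeps its distance, which was already below $\lambda_0$, and $r_j$ only gets closer to its target. The harder part is showing that the tie-broken assignment selected at time $t$ by \textsc{ClassifyAndPlan}, and the minimality notion in the definition of critical set, do not single out another robot. I would argue as follows. Any robot other than $r_c$ whose distance to its target equals $\lambda_0$ in some optimal assignment could be reassigned without loss, because $f_{\mathrm{opt}}$ achieves a strictly smaller value on it. Removing $r_c$ alone then lowers the optimum, since $r_c$'s distance to every point of $\mathbf C$ is at least $\lambda_0$ and it is the unique robot forcing this value. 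Therefore $\{r_c\}$ is the minimal critical set. If needed, I would appeal to Result~\ref{result-2} to rule out a second critical robot: a second robot attaining $\lambda_0$ would have to be forced, but $f_{\mathrm{opt}}$ exhibits an optimal assignment in which only $r_c$ attains the maximum.
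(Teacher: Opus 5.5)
Your proposal is correct and follows essentially the same route as the paper. The retained assignment $f_{\mathrm{opt}}$ on $\mathcal P_0$ gives the upper bound. The stationary $r_c\neq\mathcal O$, whose unique nearest point of $\mathbf C$ is its reserved target at distance $\lambda_0$, gives the lower bound, and that uniqueness forces every optimal regular set to contain this point and hence to equal $\mathcal P_0$. Two small remarks follow. In the lower bound, the supposition should say that \emph{some} regular set and assignment would keep all distances below $\lambda_0$, not every one; the contradiction is unaffected. In your last paragraph, the minimality argument is what actually does the work, whereas Result~\ref{result-2} does not exclude a second robot attaining $\lambda_0$ under a different optimal assignment. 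The paper's proof only establishes $\mathcal R^*(t)=\{r_c\}$ with respect to the retained $f_{\mathrm{opt}}$.
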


\begin{proof}
Section~4.1 of~\cite{bhattacharya2009optimal} associates each robot
and each feasible value $\lambda$ with the arc of the circumference
that the robot can reach within distance $\lambda$. Since $r_c$ is
not located at $\mathcal O$, the point $p_c$ is its unique closest
point on the circumference of $\mathbf C$. Therefore, for every point
$p$ on the circumference, $d\bigl(r_c(t_0),p\bigr)
\geq
d\bigl(r_c(t_0),p_c\bigr)
=
\lambda_0,$ with equality only for $p=p_c$. 

We first establish the upper bound. Since $r_j$ moves directly toward
$p_j$, its distance from $p_j$ never increases. After any positive
movement, $d\bigl(r_j(t),p_j\bigr)
<
d\bigl(r_j(t_0),p_j\bigr)
<
\lambda_0.$ All other non-critical robots remain stationary at distances strictly
smaller than $\lambda_0$ from their retained targets, whereas $d\bigl(r_c(t),p_c\bigr)=\lambda_0.$ Hence, the original target-point set $\mathcal P_0$ and the retained
assignment $f_{\mathrm{opt}}$ remain feasible with maximum distance
$\lambda_0$. Consequently, $\lambda_{\mathbf C}^*
\bigl(\bar{\mathcal A},\mathcal R(t)\bigr)
\leq
\lambda_0.$ For the lower bound, consider any regular target-point set and any
assignment at time $t$. Since every target point lies on the
circumference of $\mathbf C$, the stationary robot $r_c$ must be
assigned to some point $p$ satisfying $d\bigl(r_c(t),p\bigr)\geq\lambda_0.$ Thus, no value smaller than $\lambda_0$ is feasible, and hence $\lambda_{\mathbf C}^*
\bigl(\bar{\mathcal A},\mathcal R(t)\bigr)
\geq
\lambda_0.$ Combining the two bounds gives $\lambda_{\mathbf C}^*
\bigl(\bar{\mathcal A},\mathcal R(t)\bigr)
=
\lambda_0.$ Furthermore, every optimal assignment at value $\lambda_0$ must assign
$r_c$ to $p_c$, because $p_c$ is the unique circumference point at
distance $\lambda_0$ from $r_c$. Therefore, every optimal regular
target-point set contains $p_c$. Once $p_c$ is fixed, the remaining
$n-1$ target points are uniquely determined at angular separations
$2\pi/n$. Hence, the target-point set remains
$\mathcal P(t)=\mathcal P_0$. The light \textsc{\(\lambda\)-Anchor} keeps the stationary critical
robot $r_c$ identifiable throughout the candidate movement. Thus,
after every non-rigid interruption, robot $r_j$ reconstructs the same
point $p_c$, the same target-point set $\mathcal P_0$, and continues
toward the same target $p_j$ with light \textsc{TargetTrace}. When $r_j$ reaches $p_j$ and changes its light to
\textsc{TargetReached}, its position provides an additional reference
for $\mathcal P_0$. The remaining target points can then be computed
at angular separations $2\pi/n$, and the next eligible candidate moves
with respect to the same invariant target-point set.
\end{proof}

Let $\gamma_c
=
\lambda_0-
\max_{r_i\in\mathcal R(t_0)\setminus\{r_c\}}
d\bigl(r_i(t_0),f_{\mathrm{opt}}(r_i)\bigr).$ Since $r_c$ is the unique critical robot, $\gamma_c>0$.

\begin{lemma}[centre-critical reduction]
\label{lem:centre-invariant-reduction}
Let $q_c$ be the point on $\mathbf L$ satisfying
$d(\mathcal O,q_c)=\delta$, where $0<\delta<\min\{\rho_c,\gamma_c\}.$ Suppose that $r_c$ moves from $\mathcal O$ toward $q_c$ along
$\mathbf L$, while every other robot remains stationary. Then, at
every intermediate time $t$ satisfying $0<\sigma_t
=
d\bigl(\mathcal O,r_c(t)\bigr)
\leq\delta,$ the target-point set remains $\mathcal P_0$, $r_c$ remains the unique
critical robot, and $\mathcal R(t)\in\mathcal K_{ucr}^{nc}.$
\end{lemma}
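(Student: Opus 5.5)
The plan is to determine the optimal value of the intermediate configuration $\mathcal R(t)$ exactly. I will sandwich it between an upper bound, obtained by exhibiting the retained solution $(\mathcal P_0,f_{\mathrm{opt}})$, and a lower bound coming from the position of $r_c$ alone. This mirrors the argument in the proof of Lemma~\ref{lem:unique-critical-invariance}. I take $\mathcal P_0$ to be the regular $n$-gon anchored at $p_c=\mathbf L\cap\mathbf C$, as fixed in Case-II. I take $\gamma_c$ to be measured with respect to this retained pair $(\mathcal P_0,f_{\mathrm{opt}})$. Since $r_c(t_0)=\mathcal O$, every point of $\mathbf C$ is at distance $h$ from it, so $\lambda_0=h$. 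Every other robot $r_i$ satisfies $d\bigl(r_i,f_{\mathrm{opt}}(r_i)\bigr)\le \lambda_0-\gamma_c$, and these robots do not move.

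For the upper bound, at time $t$ the robot $r_c$ lies on the ray $\mathbf L$ at distance $\sigma_t\in(0,\delta]$ from $\mathcal O$. Its nearest point on $\mathbf C$ is therefore exactly $p_c$, at distance $h-\sigma_t$. The retained assignment thus has cost
\[
\max\{\,h-\sigma_t,\ \lambda_0-\gamma_c\,\}=h-\sigma_t ,
\]
where the equality uses $\sigma_t\le\delta<\gamma_c$. Hence $\lambda_{\mathbf C}^*(\mathcal R(t))\le h-\sigma_t$, and under the retained assignment $r_c$ is the only robot attaining this value. Every other robot is strictly below it, with gap at least $\gamma_c-\sigma_t>0$.

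For the lower bound, any regular target set and any assignment must send $r_c$ to some point of $\mathbf C$. Every such point is at distance at least $h-\sigma_t$ from $r_c(t)$, with equality only at $p_c$, because $r_c(t)\ne\mathcal O$ makes the closest boundary point unique. This gives $\lambda_{\mathbf C}^*(\mathcal R(t))=h-\sigma_t$. Moreover, every optimal solution must assign $r_c$ to $p_c$. Once $p_c$ is fixed, the angular spacing $2\pi/n$ forces the whole set, so $\mathcal P(t)=\mathcal P_0$. The strict gap of the other robots then gives $\mathcal R^*(t)=\{r_c\}$, and together with $r_c(t)\ne\mathcal O$ this yields $\mathcal R(t)\in\mathcal K_{ucr}^{nc}$. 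I will also use $\delta<\rho_c$ to note that $r_c$ cannot coincide with another robot during the move. By the choice of $\mathbf L$ as the bisector of an empty sector, $r_c$ is also not collinear with $\mathcal O$ and another robot, so the configuration stays valid and the tie-breaking remains unambiguous.

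The step I expect to be the main obstacle is reconciling this with the statement's reference to the original value $\lambda_0$. The observed optimum is $h-\sigma_t<\lambda_0$, not $\lambda_0$, so a robot recomputing from a snapshot sees a smaller value. I would handle this by recording, as part of the argument, that the \emph{target-point set} and assignment are what stay invariant, which is all the lemma claims. The cumulative travel of $r_c$ is then $\sigma_t+(h-\sigma_t)=h=\lambda_0$, which preserves the original min--max bound. The other robots only ever use distances measured in the invariant $\mathcal P_0$, each strictly below $\lambda_0$. A secondary subtlety is justifying that $\gamma_c>0$ for the specific anchored set $\mathcal P_0$. I would argue this from the fact that $r_c$ is the unique critical robot in the selected optimal solution.
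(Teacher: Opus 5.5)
Your proposal is correct and follows essentially the same argument as the paper. The retained pair $(\mathcal P_0,f_{\mathrm{opt}})$ gives the upper bound $h-\sigma_t$ because $\sigma_t\le\delta<\gamma_c$. The unique nearest boundary point $p_c$ of $r_c(t)\neq\mathcal O$ gives the matching lower bound and forces $p_c$, and hence $\mathcal P_0$, into every optimal solution. Finally, $\delta<\rho_c$ keeps $[\mathcal O,q_c]$ free of other robots. The differences are cosmetic: the paper reads off $\mathcal P(t)=\mathcal P_0$ directly from the fact that the ray through $r_c(t)$ is still $\mathbf L$, and the positivity of $\gamma_c$ that you flag as a subtlety is already implied by the hypothesis $0<\delta<\gamma_c$.
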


\begin{proof}
Since $r_c$ moves only along $\mathbf L$, the ray starting from
$\mathcal O$ and passing through $r_c(t)$ remains $\mathbf L$.
Therefore, its intersection with the circumference of $\mathbf C$
remains $p_c$. Since $p_c$ and the angular separation $2\pi/n$
uniquely determine the regular target-point set, we have $\mathcal P(t)=\mathcal P_0.$ Let $h$ be the radius of $\mathbf C$. Since
$r_c(t_0)=\mathcal O$ and $r_c$ is critical, $\lambda_0=h$. At time
$t$, $d\bigl(r_c(t),p_c\bigr)
=
h-\sigma_t
=
\lambda_0-\sigma_t.$ For every $r_i\neq r_c$, the robot remains stationary and hence $d\bigl(r_i(t),f_{\mathrm{opt}}(r_i)\bigr)
\leq
\lambda_0-\gamma_c
<
\lambda_0-\delta
\leq
\lambda_0-\sigma_t.$ Thus, under the retained assignment, $r_c$ is the unique robot
attaining the maximum distance. Moreover, every point of $\mathbf C$ is at distance at least $h-\sigma_t$ from $r_c(t)$, with equality only at $p_c$. Therefore,
no assignment can attain a min--max value smaller than
$h-\sigma_t$, while the retained assignment attains this value.
Hence, it remains optimal and $r_c$ remains the unique critical robot. Finally, $\delta<\rho_c$ ensures that the segment
$[\mathcal O,q_c]$ contains no other robot. Since
$r_c(t)\neq\mathcal O$ for every $\sigma_t>0$, we obtain $\mathcal R(t)\in\mathcal K_{ucr}^{nc}.$
\end{proof}

\begin{corollary}[Preservation of the original min--max travel bound]
\label{cor:centre-travel-bound}
Let $s=d(q_c,p_c).$ Since $\mathcal O$, $q_c$, and $p_c$ are collinear and $q_c$ lies
between $\mathcal O$ and $p_c$, $\delta+s
=
d(\mathcal O,q_c)+d(q_c,p_c)
=
d(\mathcal O,p_c)
=
h
=
\lambda_0.$ Therefore, the movement through $q_c$ only divides the original
optimal path of $r_c$ into two consecutive segments and does not
violate the original min--max travel bound.
\end{corollary}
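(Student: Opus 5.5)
The plan is to reduce the claim to a collinearity argument along the ray $\mathbf L$. I would first fix the geometry. Since $r_c(t_0)=\mathcal O$ is critical, the proof of Lemma~\ref{lem:centre-invariant-reduction} gives $\lambda_0=h$. The choice $0<\delta<\min\{\rho_c,\gamma_c\}$ together with $\gamma_c\le\lambda_0=h$ gives $0<\delta<h$. By construction $q_c$ lies on the ray $\mathbf L$ from $\mathcal O$, and $p_c$ is the intersection of $\mathbf L$ with $\mathbf C$, so $d(\mathcal O,p_c)=h$. Hence $q_c$ lies strictly between $\mathcal O$ and $p_c$ on the segment $[\mathcal O,p_c]$. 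The additivity $d(\mathcal O,q_c)+d(q_c,p_c)=d(\mathcal O,p_c)$, i.e.\ $\delta+s=h=\lambda_0$, is then the equality case of the triangle inequality for three collinear points in this order.

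Next I would show that the path actually traversed by $r_c$ is exactly the segment $[\mathcal O,p_c]$, traversed monotonically. In the first phase $r_c$ moves from $\mathcal O$ toward $q_c$ along $\mathbf L$. By Lemma~\ref{lem:centre-invariant-reduction}, at every intermediate time it stays on $\mathbf L$, remains the unique critical robot, and the target set stays $\mathcal P_0$. Its critical target therefore stays $p_c$, the intersection of $\mathbf L$ with $\mathbf C$.

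In the second phase $r_c$ waits with $\lambda_{\textsc{Anchor}}$ while the other robots settle. Here I would invoke Lemma~\ref{lem:unique-critical-invariance}: the target set and $p_c$ remain fixed, so when $r_c$ finally moves it goes from its current point on $[\mathcal O,p_c]$ straight toward $p_c$. This is a further movement along the same ray in the same direction.

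Summing the lengths of the consecutive subsegments, the total distance travelled by $r_c$ is $d(\mathcal O,p_c)=h=\lambda_0$, which equals $\lambda^*_{\mathbf C}$ by Result~\ref{result-1}. All other robots are untouched during the first phase, so their bounds are unaffected.

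The main obstacle is non-rigidity. The adversary may stop $r_c$ before it reaches $q_c$, or interrupt the final leg, so the path is a concatenation of possibly many pieces rather than just the two segments $[\mathcal O,q_c]$ and $[q_c,p_c]$. I would handle this by an invariant: at every stop, $r_c$ lies on $[\mathcal O,p_c]$, and its next destination ($q_c$ or $p_c$) lies on the same segment farther from $\mathcal O$. Any such sequence of moves is then monotone along one segment, and its lengths telescope to $h$.

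I would also remark that the move toward $q_c$ has length $\delta$ and is either completed or is a prefix of the radial path. Either way it cannot overshoot $p_c$, because $\delta<h$.
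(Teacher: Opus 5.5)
Your proposal is correct and follows the same route as the paper, whose argument is simply the collinearity of $\mathcal O$, $q_c$, $p_c$ on $\mathbf L$ and the resulting additivity $\delta+s=d(\mathcal O,p_c)=h=\lambda_0$. Your extra steps fill in what the paper takes for granted: that $\delta<\min\{\rho_c,\gamma_c\}\le h$ places $q_c$ strictly between $\mathcal O$ and $p_c$, and that non-rigid interruptions only break the radial path into a monotone sequence of sub-segments of $[\mathcal O,p_c]$ whose lengths telescope to $h$. The paper instead just asserts that $r_c$ reaches $q_c$ in one move.
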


By Lemma~\ref{lem:centre-invariant-reduction}, after the initial
movement of $r_c$, the resulting configuration belongs to
$\mathcal K_{ucr}^{nc}$ with the same target-point set. Therefore, the
remaining execution follows from the correctness of the procedure for
$\mathcal K_{ucr}^{nc}$. Moreover, by
Corollary~\ref{cor:centre-travel-bound}, the original optimal min--max travel bound is preserved.

\subsubsection{Multiple-Critical Invariants}

\begin{lemma}[Invariance of the target-point set for multiple-critical configurations]
\label{lem:mcr-target-invariance}
Let $\mathcal R(t_0)\in\mathcal K_{\mathrm{mcr}}$, and let
$\mathcal P_0$ be the target-point set selected at time $t_0$. If the
robots execute the corresponding procedure described in Cases I--IV,
then, at every time $t$ during the execution, $\mathcal P(t)=\mathcal P_0.$
\end{lemma}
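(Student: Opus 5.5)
The plan is an induction over the phases of the execution. At each phase I identify the stationary reference robots still present in the current snapshot, and I show that from them every robot recomputes exactly $\mathcal P_0$. The invariant is: at every time $t$, either some robot displays \textsc{TargetReached} and occupies a point of $\mathcal P_0$, or every critical robot of $\mathcal R^*(t_0)$ is stationary at its initial position with light $\lambda_{\textsc{Anchor}}$ and every moving robot is at distance strictly less than $\lambda_0$ from its retained target. Since a regular $n$-gon on $\mathbf C$ is determined by any one of its vertices, the first alternative gives $\mathcal P(t)=\mathcal P_0$ immediately. This uses the priority rule stated in Case-III and the anchoring remarks in Cases II and IV. So the work lies in the period before the first \textsc{TargetReached} appears.

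For that period I split by class. In $\mathcal K_{\mathrm{mcr}}^{\mathrm{all}}$, Observation~\ref{obs:all-critical-first-move} shows that the selected robot becomes non-critical after any positive motion while the others stay fixed. Then, as argued in Case-I, the configuration is of type $\mathcal K_{\mathrm{mcr}}^{\mathrm{ii}}$ with the same anchors, so this case reduces to Case-IV.

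In Cases II, III(a) and IV, exactly one candidate (or a symmetric set $S$ with distinct targets) moves along a segment toward its retained target, and all other robots are stationary. I argue as in Lemma~\ref{lem:unique-critical-invariance}. The retained assignment is feasible with value $\lambda_0$, because the moving robot's distance only decreases. Hence $\lambda^*_{\mathbf C}(\mathcal R(t))\le\lambda_0$.

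For the reverse inequality and uniqueness, I invoke Observation~\ref{obs:critical-angular-balance}. With at least two stationary critical robots, any rotation of the polygon other than $\mathcal P_0$ pushes some critical robot's distance above $\lambda_0$. Swapping the critical robots to other targets cannot help, since $\mathcal P_0$ was optimal and the critical robots are unchanged. Result~\ref{result-1} then gives uniqueness of the value, and the deterministic one-axis tie-breaking selects $\mathcal P_0$ again.

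Case-III(b), with exactly two critical robots, is the main obstacle. Here Observation~\ref{obs:critical-angular-balance} is weaker, because two critical robots might admit a second optimal rotation once a candidate moves into $\mathcal D^*(t_0)$. My plan is to show that $\ell_{ab}$, $m_{ab}$, $\mathscr L$ and $z^+$ depend only on $r_a^*(t_0)$ and $r_b^*(t_0)$. The choice of $H^+$ depends only on the stationary robots, with $S$ excluded, so it is invariant as well. Consequently $w$ is invariant. Each robot displaying \textsc{ReferenceTrace} recomputes the same $w$ and then chooses, among the (at most two) regular $n$-gons achieving $\lambda_0$ for the pair $r_a^*,r_b^*$, the one singled out by $w$, which is the one selected at $t_0$. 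The delicate point is to confirm that this rule picks a single polygon and that it matches $\mathcal P_0$ at $t_0$. I would try to prove this by showing that the competing rotation moves the nearest vertex to $z^+$ to the other side of $\mathscr L$.

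Once all candidates of $S$ display \textsc{TargetReached}, the first alternative of the invariant takes over. The later movements of the secondary critical robot and the critical robots are therefore anchored by occupied vertices, which completes the induction.
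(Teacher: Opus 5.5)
Your overall structure matches the paper: you separate the period before the first \textsc{TargetReached} from the period after it, and afterwards a single occupied vertex fixes the regular $n$-gon. The gap is in your argument for the period before $t_1$ in Cases II, III(a) and IV. You recompute the optimum from the full snapshot and conclude it is $\mathcal P_0$ for two reasons: the value is still $\lambda_0$, and ``any rotation other than $\mathcal P_0$ pushes some critical robot above $\lambda_0$''. The first claim is fine; it is the two-sided bound of Lemma~\ref{lem:mcr-optimal-solution-invariance}. The second does not follow. Observation~\ref{obs:critical-angular-balance} concerns only sufficiently small rotations that keep the critical robot--target associations, and it yields only ``at least $\lambda_0$'', not ``strictly more''. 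Likewise, ``swapping cannot help since $\mathcal P_0$ was optimal'' shows that nothing beats $\lambda_0$, not that nothing ties it.

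Without uniqueness, your appeal to the one-axis tie-break is also unsupported, because the set of polygons attaining $\lambda_0$ at time $t$ need not equal the set at $t_0$. Take a rotation $Q\neq\mathcal P_0$ that ties on the stationary critical robots but was infeasible at $t_0$ because of the candidate. As $r_j$ approaches $p_j$, $Q$ can become feasible, since some vertex of $Q$ lies within angular distance $\pi/n$ of $p_j$, and the tie-break may then prefer $Q$. This is exactly the ``second optimal rotation'' you worry about in III(b), and nothing confines it to that subcase. Cases II and IV may also have exactly two critical robots, and in Case IV the candidate starts inside $\mathcal D^*(t_0)$.

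The paper's proof does not rely on uniqueness of the optimum of the current snapshot. It argues that before $t_1$ the target-point set is reconstructed from reference data that stays fixed during the phase: the positions of the robots identified by $\lambda_{\textsc{Anchor}}$ (and $\lambda^-_{\textsc{Anchor}}$). In III(b) this data also includes $m_{ab}$, $H^+$, $z^+$ and $w$, with $H^+$ counted over stationary robots and $S$ excluded. A deterministic rule applied to unchanged inputs returns the output it gave at $t_0$, which is $\mathcal P_0$, so no uniqueness of the optimal polygon is needed. You already use this reasoning in III(b); apply it uniformly to all cases. You can then drop the unproved claims that at most two $n$-gons attain $\lambda_0$ for the pair and that $w$ separates them. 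The paper simply takes consistency with the selection at $t_0$ as part of the dummy-point rule.

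Finally, Case I needs two repairs. Your invariant must exempt the selected critical robots, which are moving. Your reduction to Case IV also needs at least one unselected anchor to remain stationary. The paper's proof explicitly flags this condition (the selected robots must not exhaust the reference set), whereas your reduction assumes it silently.
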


\begin{proof}
Let $t_1$ be the first time at which a robot displays
$\textsc{TargetReached}$. We first prove that
$\mathcal P(t)=\mathcal P_0$ for every $t<t_1$.

\medskip
\noindent
\textit{Case-I:
$\mathcal K_{\mathrm{mcr}}^{\mathrm{all}}$.}
The unselected critical robots retain
$\lambda_{\textsc{Anchor}}$ and remain stationary while the selected
critical robots move. Therefore, the reference positions and lights
used to determine the initial orientation remain unchanged. Every
interrupted robot displaying $\textsc{TargetTrace}$ consequently
reconstructs the same target-point set $\mathcal P_0$. This requires
that the selected robots do not exhaust the complete reference set
before one of them reaches its target.

\medskip
\noindent
\textit{Case-II:
$\mathcal K_{\mathrm{mcr}}^{\mathrm{ot}}$.}
Every critical robot retains $\lambda_{\textsc{Anchor}}$, and the
secondary critical robot, if one exists, retains
$\lambda^-_{\textsc{Anchor}}$. All these robots remain stationary
while the selected candidate moves. Hence, the reference information
used to reconstruct the target-point set remains unchanged, and every
intermediate snapshot yields $\mathcal P_0$.

\medskip
\noindent
\textit{Case-III:
$\mathcal K_{\mathrm{mcr}}^{\mathrm{ct}}$.}
If $|\mathcal R^*(t_0)|>2$, the stationary critical robots displaying
$\lambda_{\textsc{Anchor}}$ preserve the prescribed reference
information. Therefore, each candidate displaying
$\textsc{TargetTrace}$ reconstructs $\mathcal P_0$ after every
interrupted movement. Now suppose that $|\mathcal R^*(t_0)|=2$. The two critical robots
$r_a^*$ and $r_b^*$ remain stationary. Moreover, the selected set $S$
is fixed before any of its members moves and is excluded from the
count defining $H^+$. Consequently, the midpoint $m_{ab}$, the region
$H^+$, the point $z^+$, and the dummy point $w=\frac{m_{ab}+z^+}{2}$ remain unchanged. Hence, every robot displaying $\textsc{ReferenceTrace}$ applies the same dummy-point reconstruction rule and obtains the same target-point set $\mathcal P_0$.

\medskip
\noindent
\textit{Case-IV:
$\mathcal K_{\mathrm{mcr}}^{\mathrm{ii}}$ or
$\mathcal K_{\mathrm{mcr}}^{\mathrm{io}}$.}
The critical and secondary critical robots retain their respective
anchor lights and remain stationary while the selected candidate
moves. Therefore, the reference information remains unchanged,
irrespective of whether the candidate stays inside
$\mathcal D^*(t_0)$ or leaves this region. Thus, the candidate
reconstructs the same target-point set $\mathcal P_0$ throughout its
movement. It remains to consider the execution at and after $t_1$. Let $p_j$ be
the target occupied by the first robot displaying
$\textsc{TargetReached}$. Since $p_j\in\mathcal P_0$, the point $p_j$
together with the angular separation $2\pi/n$ uniquely determines the
regular target-point set $\mathcal P_0$, up to cyclic relabelling.
The robot at $p_j$ retains $\textsc{TargetReached}$ and remains
stationary. Hence, from time $t_1$ onward, every robot reconstructs
the target-point set directly from $p_j$, without using the earlier
critical or dummy reference. Therefore, $\mathcal P(t)=\mathcal P_0$ throughout the complete execution.
\end{proof}

\subsubsection{Supporting Invariants for Multiple-Critical Configurations}
\label{app:mcr-supporting-invariants}

\begin{lemma}[Preservation of the retained optimal solution]
\label{lem:mcr-optimal-solution-invariance}
Let
$\mathcal R(t_0)\in
\mathcal K_{\mathrm{mcr}}\setminus
\mathcal K_{\mathrm{mcr}}^{\mathrm{all}}$, and let
$(\mathcal P_0,f_{\mathrm{opt}})$ be the selected optimal solution with
value
$\lambda_0=
\lambda_{\mathbf C}^*
(\bar{\mathcal A},\mathcal R(t_0))$.
Let $t_c$ be the first time at which an initially critical robot starts
moving. Then, for every $t\in[t_0,t_c)$, $\lambda_{\mathbf C}^*
\bigl(\bar{\mathcal A},\mathcal R(t)\bigr)
=
\lambda_0,$ and the retained assignment $f_{\mathrm{opt}}$ remains optimal with
respect to the invariant target-point set $\mathcal P_0$. In
particular, $\max_{r_i\in\mathcal R(t)}
d\bigl(r_i(t),f_{\mathrm{opt}}(r_i)\bigr)
=
\lambda_0.$
\end{lemma}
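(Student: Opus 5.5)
The proof is a two-sided sandwich bound on $\lambda_{\mathbf C}^*\bigl(\bar{\mathcal A},\mathcal R(t)\bigr)$ for every $t\in[t_0,t_c)$, in the same style as the proof of Lemma~\ref{lem:unique-critical-invariance}. Two facts from the procedures of Cases II--IV do most of the work. First, before $t_c$ every initially critical robot $r_i^*\in\mathcal R^*(t_0)$ is stationary with light $\lambda_{\textsc{Anchor}}$. Second, every moving robot is a candidate or a secondary critical robot that moves straight along $[r_j(t_0),f_{\mathrm{opt}}(r_j)]$ and never changes its target. The second fact rests on the persistence of \textsc{TargetTrace}/\textsc{ReferenceTrace} and on Lemma~\ref{lem:mcr-target-invariance}, which guarantees $\mathcal P(t)=\mathcal P_0$.

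\textbf{Upper bound.} Fix $t\in[t_0,t_c)$. For a robot that has moved, the distance to its retained target is non-increasing along the segment. It is therefore at most its initial value, which is strictly below $\lambda_0$ because non-critical robots are strictly below $\lambda_0$. For every critical robot, $d\bigl(r_i^*(t),f_{\mathrm{opt}}(r_i^*)\bigr)=d\bigl(r_i^*(t_0),p_i^*\bigr)=\lambda_0$. Since $\mathcal R^*(t_0)\neq\emptyset$, the maximum equals $\lambda_0$, which gives the displayed identity. Because $(\mathcal P_0,f_{\mathrm{opt}})$ is a feasible regular target set with an assignment, we also get $\lambda_{\mathbf C}^*\bigl(\bar{\mathcal A},\mathcal R(t)\bigr)\le\lambda_0$.

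\textbf{Lower bound.} Suppose some regular set $\mathcal P'$ and bijection $g$ achieve a value $\lambda'<\lambda_0$ at time $t$. The critical set $\mathcal R^*(t_0)$ is unchanged. The non-critical robots at time $t$ are each within distance less than $\lambda_0$ of their $f_{\mathrm{opt}}$-targets, as they were at $t_0$. The goal is to transport $(\mathcal P',g)$ back to an assignment at time $t_0$ of value $<\lambda_0$, contradicting optimality of $\lambda_0$ (Result~\ref{result-1}).

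The plan is to use minimality of the critical set. By definition of $\mathcal R^*(t_0)$, removing it strictly lowers the optimum, and Observation~\ref{obs:critical-angular-balance} says that no rotation of the target polygon can lower all critical distances simultaneously. I would argue via the arc/interval formulation of Section~4.1 of \cite{bhattacharya2009optimal}. For value $\lambda$, each robot $r$ admits an arc $A_r(\lambda)\subseteq\mathbf C$, and feasibility means that some rotation of a regular $n$-gon admits a perfect matching into these arcs. For critical robots the arcs $A_{r_i^*}(\lambda)$ are identical at times $t$ and $t_0$. The infeasibility of $\lambda'<\lambda_0$ at $t_0$ should then be forced by the critical robots alone, i.e.\ by the rotation constraint they impose together with the rest of the matching. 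I need to show that the moved robots cannot repair it. Each moved robot has progressed toward its $f_{\mathrm{opt}}$-target, so its arc at every $\lambda$ near $\lambda_0$ still contains that target. I would then try to combine $g$ with $f_{\mathrm{opt}}$ by an alternating-path (Hall-type) exchange to obtain, at time $t_0$, a valid assignment to $\mathcal P'$ of value $<\lambda_0$.

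This exchange step is the main obstacle. Moved robots can reach new points, so the bipartite graph at time $t$ may gain edges that did not exist at $t_0$. The fallback is to use the specific ordering of the algorithm, under which the moved robots have already reached, or are fixed on, points of $\mathcal P_0$. One then argues that any $\mathcal P'$ of value $<\lambda_0$ is a small rotation of $\mathcal P_0$ near which the critical robots alone, via Observation~\ref{obs:critical-angular-balance}, already force value $\ge\lambda_0$. Once both bounds hold, optimality of $f_{\mathrm{opt}}$ with respect to $\mathcal P_0$ follows immediately from the upper-bound computation.
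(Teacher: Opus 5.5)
Your upper bound matches the paper's. The lower bound, however, has a genuine gap, and you have essentially located it yourself.

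The exchange argument cannot be completed. A robot that has moved toward $f_{\mathrm{opt}}(r_j)$ acquires edges to vertices of rotated polygons that it could not reach at $t_0$. So a pair $(\mathcal P',g)$ feasible at time $t$ need not pull back to anything feasible at $t_0$.

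The fallback does not repair this, for three reasons.
\begin{itemize}
\item During $[t_0,t_c)$ a candidate may be stopped in the middle of its segment, so the moved robots are not all sitting on $\mathcal P_0$.
\item Even a robot sitting on a point of $\mathcal P_0$ does not force $\mathcal P'$ to be a small rotation of $\mathcal P_0$. Every point of $\mathbf C$ lies within distance $2h\sin(\pi/(2n))$ of some vertex of every rotated regular $n$-gon. Hence once $\lambda'\ge 2h\sin(\pi/(2n))$, such a robot by itself does not restrict the rotation at all.
\item Observation~\ref{obs:critical-angular-balance} is purely local: it covers small rotations that preserve the critical robot--target associations. It says nothing about a competing polygon far from $\mathcal P_0$, or about reassigning the critical robots.
\end{itemize}

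The missing idea is to make the lower bound depend only on the stationary critical robots and to ignore the moved robots entirely. The paper does this with a global inequality (C): for every admissible regular set $\mathcal P(\theta)$ and every injective $g:\mathcal R^*(t_0)\rightarrow\mathcal P(\theta)$, one has $\max_{r_i^*\in\mathcal R^*(t_0)}d\bigl(r_i^*(t_0),g(r_i^*)\bigr)\ge\lambda_0$.

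Given (C), the lower bound takes one line. Restrict an arbitrary assignment $f$ at time $t$ to $\mathcal R^*(t_0)$. Use $r_i^*(t)=r_i^*(t_0)$ for $t<t_c$. This gives $\max_{r_i\in\mathcal R(t)}d\bigl(r_i(t),f(r_i)\bigr)\ge\lambda_0$, with no transport back to $t_0$ and no information about the candidates.

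Be aware, though, that (C) is a strong global property. The paper takes it from ``the critical characterization used in the computation of $\lambda_0$'' without further derivation. It is implied neither by the max-attainer definition of $\mathcal R^*(t)$ in Algorithm~\ref{alg:classify} nor by local angular balance.

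Your remark about new edges is exactly why this matters. Suppose a competing rotation is excluded at $t_0$ only because some non-critical robot cannot reach it. That robot's progress toward its target can then remove the obstruction.

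So to finish, you must either build (C) into the notion of criticality you use, or prove it for the configurations in question.
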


\begin{proof}
Fix any $t\in[t_0,t_c)$. By
Lemma~\ref{lem:mcr-target-invariance}, the target-point set remains
$\mathcal P_0$ throughout this interval. From the critical characterization used in the computation of $\lambda_0$, for every admissible regular target-point set
$\mathcal P(\theta)$ and every injective assignment
$g:\mathcal R^*(t_0)\rightarrow\mathcal P(\theta)$, we have
\[
\max_{r_i^*\in\mathcal R^*(t_0)}
d\bigl(r_i^*(t_0),g(r_i^*)\bigr)
\geq
\lambda_0.
\tag{C}
\]

Since $t<t_c$, every initially critical robot remains stationary.
Therefore, for each $r_i^*\in\mathcal R^*(t_0)$, $d\bigl(r_i^*(t),f_{\mathrm{opt}}(r_i^*)\bigr)
=
d\bigl(r_i^*(t_0),f_{\mathrm{opt}}(r_i^*)\bigr)
=
\lambda_0.$ Every initially non-critical robot has assigned distance strictly
smaller than $\lambda_0$. By the movement procedure, such a robot
either remains stationary or moves only toward its retained target.
Hence, for every $r_j\in\mathcal N(t_0)$, $d\bigl(r_j(t),f_{\mathrm{opt}}(r_j)\bigr)
\leq
d\bigl(r_j(t_0),f_{\mathrm{opt}}(r_j)\bigr)
<
\lambda_0.$ Consequently, $\max_{r_i\in\mathcal R(t)}
d\bigl(r_i(t),f_{\mathrm{opt}}(r_i)\bigr)
=
\lambda_0.$ Thus, the retained solution
$(\mathcal P_0,f_{\mathrm{opt}})$ remains feasible at value
$\lambda_0$, and therefore $\lambda_{\mathbf C}^*
\bigl(\bar{\mathcal A},\mathcal R(t)\bigr)
\leq
\lambda_0.$ Conversely, consider any admissible target-point set
$\mathcal P(\theta)$ and any assignment $f$ for $\mathcal R(t)$.
Its restriction to $\mathcal R^*(t_0)$ is an injective assignment.
Since the initially critical robots remain stationary,
Condition~$(\mathrm C)$ gives
\[
\begin{aligned}
\max_{r_i\in\mathcal R(t)}
d\bigl(r_i(t),f(r_i)\bigr)
&\geq
\max_{r_i^*\in\mathcal R^*(t_0)}
d\bigl(r_i^*(t),f(r_i^*)\bigr)\\
&\geq
\lambda_0.
\end{aligned}
\]
Hence, no solution having value smaller than $\lambda_0$ exists, and $\lambda_{\mathbf C}^*
\bigl(\bar{\mathcal A},\mathcal R(t)\bigr)
\geq
\lambda_0.$ Combining the two inequalities yields $\lambda_{\mathbf C}^*
\bigl(\bar{\mathcal A},\mathcal R(t)\bigr)
=
\lambda_0.$ Since $(\mathcal P_0,f_{\mathrm{opt}})$ attains this value,
$f_{\mathrm{opt}}$ remains optimal throughout $[t_0,t_c)$.
\end{proof}

\begin{lemma}[Persistence of the critical set]
\label{lem:critical-set-persistence}
Let $\mathcal R(t_0)\notin
\mathcal K_{\mathrm{mcr}}^{\mathrm{all}}$, and let $t_c$ be the first
time at which an initially critical robot starts moving. Then, for
every $t\in[t_0,t_c)$, $\mathcal R^*(t)=\mathcal R^*(t_0).$
\end{lemma}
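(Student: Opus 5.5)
The plan is to prove both inclusions $\mathcal R^*(t_0)\subseteq\mathcal R^*(t)$ and $\mathcal R^*(t)\subseteq\mathcal R^*(t_0)$ for a fixed $t\in[t_0,t_c)$, with $\mathcal R^*(t)$ understood as the paper's \textsc{ClassifyAndPlan} computes it, i.e.\ the robots at distance exactly $\lambda_{\mathbf C}^*$ from their targets under the selected assignment. The main inputs are the two preceding invariance results. Lemma~\ref{lem:mcr-optimal-solution-invariance} gives $\lambda_{\mathbf C}^*(\bar{\mathcal A},\mathcal R(t))=\lambda_0$ and says the retained assignment $f_{\mathrm{opt}}$ stays optimal on $\mathcal P_0$. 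In the unique non-centre case, Lemma~\ref{lem:unique-critical-invariance} gives the same conclusion; in the centre case, Lemma~\ref{lem:centre-invariant-reduction} gives it with $\lambda_0$ replaced by $\lambda_0-\sigma_t$, and there the claim is read with $r_c$ as the unique critical robot. Lemma~\ref{lem:mcr-target-invariance} gives $\mathcal P(t)=\mathcal P_0$.

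For the inclusion $\mathcal R^*(t_0)\subseteq\mathcal R^*(t)$, I would use that $t<t_c$, so every initially critical robot is stationary. Each such $r_i^*$ therefore has $d(r_i^*(t),f_{\mathrm{opt}}(r_i^*))=\lambda_0$, which equals the current optimum, so it is pivotal at time $t$ under the retained assignment. For the reverse inclusion, take any $r_j\in\mathcal N(t_0)$. By the movement rules of Cases I--IV and of $\mathcal K_{\mathrm{ucr}}$, $r_j$ either stays put or moves straight toward $f_{\mathrm{opt}}(r_j)$. Hence $d(r_j(t),f_{\mathrm{opt}}(r_j))\le d(r_j(t_0),f_{\mathrm{opt}}(r_j))<\lambda_0$, so $r_j$ is not critical at $t$.

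Next I must ensure that the robots recompute $f_{\mathrm{opt}}$ itself at time $t$, not some other optimal assignment that could make a different robot critical. I would argue that, since $\mathcal P(t)=\mathcal P_0$ and the anchor lights $\lambda_{\textsc{Anchor}}$, $\lambda^-_{\textsc{Anchor}}$ and $\textsc{TargetTrace}$ persist, the planning step is pinned to the retained assignment. Each moving robot keeps its target, the anchored robots keep theirs, and the deterministic tie-breaking by the one-axis agreement is applied to unchanged data. I would also check the minimality clause in the definition of critical set. Because the same stationary robots witness condition (C) from the proof of Lemma~\ref{lem:mcr-optimal-solution-invariance}, removing $\mathcal R^*(t_0)$ from $\mathcal R(t)$ still lowers the optimum. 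Removing any proper subset does not, since the remaining robots are non-critical and are either at their $t_0$ positions or closer to their targets.

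I expect that last step to be the main obstacle. A moving non-critical robot can in principle change which optimal assignment a fresh computation from the snapshot $\mathcal R(t)$ would return, even though the value $\lambda_0$ and the set $\mathcal P_0$ are unchanged. I would handle this by a monotonicity argument on the minimality clause, made separately for each class. Moving a robot toward its own target shrinks its distance to that target, so the removal test defining a critical set is unaffected. Combined with the light-based rule that retained targets are never recomputed, this means the computed critical set coincides with $\mathcal R^*(t_0)$.
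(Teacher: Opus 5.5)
Your two inclusions are the paper's proof. The initially critical robots are stationary on $[t_0,t_c)$, so they stay at distance $\lambda_0$ from their reserved targets, and this is still the optimum by Lemmas~\ref{lem:unique-critical-invariance} and~\ref{lem:mcr-optimal-solution-invariance}. Every initially non-critical robot is stationary or moves straight toward its retained target, so it stays strictly below $\lambda_0$. On the question of which assignment $\mathcal R^*(t)$ is evaluated against, the paper does what your third paragraph does. It takes the retained $f_{\mathrm{opt}}$, kept in force by the lights and by Lemmas~\ref{lem:mcr-target-invariance} and~\ref{lem:mcr-optimal-solution-invariance}, as the assignment in use, with no further argument.

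The gap is your last paragraph, which you offer as resolving the main obstacle. The obstacle is real. Condition~(C) only forces \emph{some} initially critical robot to sit at distance $\lambda_0$ under an arbitrary optimal assignment on $\mathcal P_0$. A moving non-critical robot can also pass through distance exactly $\lambda_0$ from another target. So a freshly recomputed optimal assignment could return a different set.

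Monotonicity cannot repair this. Moving robots closer to their retained targets can only lower optimal values. The minimality clause, however, needs the lower bound $\lambda_{\mathbf C}^*(\mathcal R(t)\setminus S)\geq\lambda_0$ for every proper subset $S\subsetneq\mathcal R^*(t_0)$. At $t_0$ this bound may be witnessed jointly with non-critical robots that have since moved, and shrinking their distances is exactly what could destroy it. Moreover, the removal test belongs to the minimal-critical-set definition, not to the set \textsc{ClassifyAndPlan} returns, so it would not address recomputation even if it held. Drop that step and state the retained-assignment convention explicitly as an assumption; your argument then coincides with the paper's.

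A minor point concerns the centre case. The move of $r_c$ toward $q_c$ is itself the first movement of an initially critical robot, so $\sigma_t=0$ throughout $[t_0,t_c)$. Hence Lemma~\ref{lem:centre-invariant-reduction} plays no role there: $r_c$ simply sits at $\mathcal O$, at distance $h=\lambda_0$ from every point of $\mathbf C$.
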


\begin{proof}
Let
$\lambda_0=\lambda_{\mathbf C}^*
(\bar{\mathcal A},\mathcal R(t_0))$.
By the applicable result among
Lemmas~\ref{lem:unique-critical-invariance}--%
\ref{lem:mcr-target-invariance}, the target-point set, the retained
assignment, and the optimal min--max value remain invariant before any
critical robot moves. Every $r_i^*\in\mathcal R^*(t_0)$ remains stationary during this
interval. Hence, for its reserved target $p_i^*$, $d\bigl(r_i^*(t),p_i^*\bigr)
=
d\bigl(r_i^*(t_0),p_i^*\bigr)
=
\lambda_0.$ Thus, every initially critical robot remains critical. For every initially non-critical robot $r_j$, its assigned distance is
strictly smaller than $\lambda_0$. If $r_j$ remains stationary, this
distance does not change; if it moves, it moves toward its fixed target
point, and hence the distance decreases. Therefore, $r_j$ cannot
become critical. By Lemmas~\ref{lem:mcr-target-invariance} and
\ref{lem:mcr-optimal-solution-invariance}, the target-point set, the
retained optimal assignment, and the optimal min--max value remain
unchanged until an initially critical robot starts moving. Consequently,
$\mathcal R^*(t)=\mathcal R^*(t_0)$ for every $t<t_c$.
\end{proof}

\subsubsection{Persistence of Candidates and Target Reservations}

\begin{lemma}[Invariance of the assigned target point]
\label{lem:assigned-target-invariance}
Let $(r_j,p_j)$ be the candidate--target pair selected at time $t_s$.
By the applicable result among
Lemmas~\ref{lem:unique-critical-invariance}--%
\ref{lem:mcr-target-invariance}, the target-point set remains invariant.
Under the movement procedure, $r_j$ moves only along
$[r_j(t_s),p_j]$ and retains its movement light. Then $p_j$ remains
assigned to $r_j$ until $r_j$ reaches it.
\end{lemma}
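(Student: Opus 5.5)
We argue by induction over the activations of $r_j$ in the interval $[t_s,t_r]$, where $t_r$ is the first time $r_j$ reaches $p_j$. The invariant to maintain is: at the Look phase of every activation of $r_j$ in this interval, the computed target equals $p_j$, and $r_j$ still displays its movement light ($\textsc{TargetTrace}$ or $\textsc{ReferenceTrace}$). The base case holds at $t_s$ by the selection rule itself. Definition~\ref{def:candidate} fixes $(r_j,p_j)$, and ties are broken by the one-axis agreement.

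For the inductive step, take a snapshot at time $t\in(t_s,t_r)$. First, by the applicable invariance lemma the target-point set equals $\mathcal P_0$: Lemma~\ref{lem:unique-critical-invariance} (or Lemma~\ref{lem:centre-invariant-reduction}) for $\mathcal K_{ucr}$, and Lemma~\ref{lem:mcr-target-invariance} for $\mathcal K_{mcr}$. So $r_j$ computes the same labelled point set. Second, since $r_j$ moves only along $[r_j(t_s),p_j]$, we have $d(r_j(t),p_j)\le d(r_j(t_s),p_j)$, with strict inequality after any positive move. Third, every other robot is stationary while a movement light is visible. Critical robots hold $\lambda_{\textsc{Anchor}}$, secondary critical robots hold $\lambda^-_{\textsc{Anchor}}$, and other robots stay $\textsc{Off}$, as in Algorithms~\ref{alg:ucrnc}--\ref{alg:mcriiio}. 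Hence no other robot can claim $p_j$: no new candidate is selected while $\textsc{TargetTrace}$ or $\textsc{ReferenceTrace}$ is displayed. Reserved points are also untouched, since $p_j\in\mathcal P_{\mathrm{can}}$ is disjoint from $\mathcal P^*\cup\mathcal P_{\mathrm{sec}}$, and the latter stay fixed because their owners do not move. Fourth, by Lemma~\ref{lem:mcr-optimal-solution-invariance} and Lemma~\ref{lem:critical-set-persistence}, the retained assignment $f_{\mathrm{opt}}$ stays optimal and $r_j$ stays non-critical. So re-running the assignment on the current snapshot, with $r_j$ identified by its light, still returns $f_{\mathrm{opt}}(r_j)=p_j$.

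The main obstacle is the fourth step: showing that the recomputed assignment actually coincides with the retained one, rather than merely that $f_{\mathrm{opt}}$ stays feasible. After $r_j$ moves, a different optimal assignment could exist, for example one swapping $r_j$ with another robot whose $\lambda_{\mathbf C}^*$-disk now also covers $p_j$. To handle this, I would rely on the light. A robot displaying $\textsc{TargetTrace}$ or $\textsc{ReferenceTrace}$ does not run candidate selection anew. It recomputes $\mathcal P_0$, which is invariant, and then picks the point of $\mathcal P_0$ lying on the ray from its current position that it has been following. Because $r_j$ lies on the segment $[r_j(t_s),p_j]$, the target can be identified as the unique point of $\mathcal P_0$ that minimises distance among points in the direction of travel. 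More robustly, since the other robots are frozen, rerunning the deterministic selection rule with $r_j$ restricted to $\mathcal P_{\mathrm{can}}$ gives $p_j$ again. The pair $(r_j,p_j)$ was the global argmin at $t_s$, and its distance has only decreased while every other pair distance is unchanged. So it remains the unique argmin, and ties cannot newly arise because the inequality becomes strict.

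Finally, for the $\mathcal K_{mcr}^{ct}$ case with two critical robots, invoke the unchanged dummy point $w$ from Subcase~III(b) to reconstruct the same labelling of $\mathcal P_0$. Once $r_j$ reaches $p_j$ it switches to $\textsc{TargetReached}$ and never moves, which closes the induction.
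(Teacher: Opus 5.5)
Your induction framework is reasonable, but the step that carries the lemma is not proved. In your ``more robust'' argument you conclude that $(r_j,p_j)$ remains the argmin because its distance decreased ``while every other pair distance is unchanged''. That is false for the pairs $(r_j,q)$ with $q\in\mathcal P_{\mathrm{can}}\setminus\{p_j\}$. As $r_j$ moves, $d(r_j(t),q)$ changes as well and can decrease, for instance for a target $q$ adjacent to $p_j$. Nothing you wrote rules out such a $q$ overtaking $p_j$ as the nearest available target of $r_j$. Handling exactly these pairs is the whole content of the paper's proof. Write $r_j(t)=(1-\alpha)r_j(t_s)+\alpha p_j$. The map $y\mapsto d^2(y,q)-d^2(y,p_j)$ is affine, so
\[
d^2\bigl(r_j(t),q\bigr)-d^2\bigl(r_j(t),p_j\bigr)=(1-\alpha)\bigl[d^2\bigl(r_j(t_s),q\bigr)-d^2\bigl(r_j(t_s),p_j\bigr)\bigr]+\alpha\, d^2(p_j,q),
\]
and this is strictly positive for $\alpha>0$. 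Equivalently, the triangle inequality gives $d(r_j(t),q)\ge d(r_j(t_s),q)-d(r_j(t_s),r_j(t))\ge d(r_j(t_s),p_j)-d(r_j(t_s),r_j(t))=d(r_j(t),p_j)$, with equality only when $q=p_j$. Once you have this fact, your global-argmin argument goes through. Pairs involving other robots are unchanged and were already at least $d(r_j(t_s),p_j)>d(r_j(t),p_j)$. The pairs $(r_j,q)$ are handled by the inequality above, so no new ties can arise.

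Your other two mechanisms do not close the gap. The ``direction of travel'' rule is not available to oblivious robots. The light $\textsc{TargetTrace}$ encodes a state, not a direction, so a snapshot cannot tell $r_j$ which ray it was following. The direction can be recovered only as the direction to the nearest available target, which is again the missing fact. Your fourth step, as you note yourself, shows only that $f_{\mathrm{opt}}$ stays optimal, not that recomputation returns it. It is also circular: Lemmas~\ref{lem:mcr-optimal-solution-invariance} and~\ref{lem:critical-set-persistence} already assume that a moving robot heads toward its fixed target, which is what this lemma establishes. Once the nearest-target fact is in place, you can drop step four entirely.
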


\begin{proof}
Let $\mathcal P_{\mathrm{av}}$ be the set of target points available to
$r_j$ at time $t_s$. By the candidate-selection rule,
$d(r_j(t_s),p_j)\leq d(r_j(t_s),q)$ for every
$q\in\mathcal P_{\mathrm{av}}$.

At any intermediate time $t$, there exists $\alpha\in[0,1]$ such that
$r_j(t)=(1-\alpha)r_j(t_s)+\alpha p_j$. For every
$q\in\mathcal P_{\mathrm{av}}\setminus\{p_j\}$, we have
\[
\begin{aligned}
&d^2\bigl(r_j(t),q\bigr)-d^2\bigl(r_j(t),p_j\bigr)\\
&\quad=(1-\alpha)
\left[
d^2\bigl(r_j(t_s),q\bigr)
-
d^2\bigl(r_j(t_s),p_j\bigr)
\right]
+\alpha d^2(p_j,q).
\end{aligned}
\]
The first term is non-negative, and the second term is strictly
positive whenever $\alpha>0$. Hence, after any positive movement,
$p_j$ is the unique closest available target point of $r_j$. At
$\alpha=0$, any initial tie is resolved by the prescribed one-axis
agreement. If the movement is interrupted, $r_j$ retains
$\textsc{TargetTrace}$, or $\textsc{ReferenceTrace}$ in the
corresponding case, and every other robot remains stationary.
Therefore, the available target-point set does not change, and every
subsequent activation of $r_j$ yields the same target point $p_j$.
Consequently, the assignment of $r_j$ remains unchanged until it
reaches $p_j$.
\end{proof}

\begin{lemma}[Persistence and exclusive reservation of candidate targets]
\label{lem:candidate-persistence-reservation}
Let $(r_j,p_j)$ be a candidate--target pair selected at time $t_s$.
Then $r_j$ retains $p_j$ as its assigned target until reaching it.
Moreover, distinct simultaneously selected candidates are assigned
distinct target points, and every such target remains reserved for its
candidate until it is reached.
\end{lemma}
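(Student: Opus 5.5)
The lemma has two parts: persistence of a single candidate's target, and exclusivity of targets among simultaneously selected candidates together with their reservation. I will handle them separately, relying on Lemma~\ref{lem:assigned-target-invariance} and on the target-set invariance lemmas.

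\emph{Persistence.} This is essentially Lemma~\ref{lem:assigned-target-invariance}, so I will reduce to it. First, invoke the applicable invariance result to fix the target-point set throughout the movement of $r_j$: Lemma~\ref{lem:unique-critical-invariance} or Lemma~\ref{lem:centre-invariant-reduction} in the unique-critical case, and Lemma~\ref{lem:mcr-target-invariance} in the multiple-critical case. Next, check that the set $\mathcal P_{\mathrm{av}}$ of targets available to $r_j$ cannot change while $r_j$ is moving.

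\begin{itemize}
\item Critical robots display $\lambda_{\textsc{Anchor}}$, and the secondary critical robot displays $\lambda^-_{\textsc{Anchor}}$. Both stay stationary until every candidate displays $\textsc{TargetReached}$, so $\mathcal P^*$ and $\mathcal P_{\mathrm{sec}}$ are unchanged.
\item By the procedures, no new candidate is selected while some robot displays $\textsc{TargetTrace}$ or $\textsc{ReferenceTrace}$. Hence no other robot reaches a target during this phase.
\item By Lemma~\ref{lem:critical-set-persistence}, the critical set does not change before the critical robots move.
\end{itemize}

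With $\mathcal P_{\mathrm{av}}$ fixed, the convexity identity in the proof of Lemma~\ref{lem:assigned-target-invariance} makes $p_j$ the unique nearest available target after any positive movement. Because the robot keeps its light, each reactivation recomputes $p_j$.

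\emph{Distinctness.} Simultaneous selection happens only when a symmetry-equivalent set $S$ is chosen, as in Case~I and Case~III. There, the selected robots are images of one another under the rotation $\rho$ of the configuration, which also maps $\mathcal P_0$ to itself. Their targets are the corresponding images of one robot's target $p_j$. Since $\rho^k$ acts freely on points other than $\mathcal O$, and no selected robot sits at $\mathcal O$, these images are pairwise distinct. In Case~I the targets are moreover values of the bijection $f_{\mathrm{opt}}$ on distinct robots, so distinctness also follows directly from injectivity.

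\emph{Reservation.} I must show no other robot can claim $p_j$ before $r_j$ arrives. Critical and secondary critical robots move only toward their own reserved targets, which lie in $\mathcal P^*\cup\mathcal P_{\mathrm{sec}}$ and are disjoint from $\mathcal P_{\mathrm{can}}$. Non-selected non-critical robots stay stationary while any movement light is visible. Any later candidate computation excludes targets occupied by robots displaying $\textsc{TargetReached}$.

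The main obstacle is the window after one member of $S$ reaches its target while others are still moving. The fix I would pursue is a case check on each procedure: no new selection occurs while any robot shows $\textsc{TargetTrace}$ or $\textsc{ReferenceTrace}$. Every still-moving member of $S$ shows such a light, so the selection rule stays frozen until all of $S$ has arrived. As a fallback, a moving robot's target is reconstructable from its position and light by the nearest-target argument, so it can be treated as excluded from $\mathcal P_{\mathrm{can}}$.
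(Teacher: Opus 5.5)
Your persistence and reservation arguments are essentially the paper's proof, with more detail. The paper reduces to Lemma~\ref{lem:assigned-target-invariance}, notes that no new candidate is selected while a \textsc{TargetTrace} or \textsc{ReferenceTrace} light is visible, and notes that a \textsc{TargetReached} robot never leaves its target; you do the same.

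The gap is in your distinctness step for Case~III. There you assume three things:
\begin{enumerate}
\item $S$ is an orbit of a rotation $\rho$;
\item $\rho(\mathcal P_0)=\mathcal P_0$;
\item the targets of $S$ are $\rho$-images of one another.
\end{enumerate}
You then use that a non-trivial rotation acts freely on $\mathbf C$. None of these assumptions is established, and under the paper's model it is the wrong symmetry. Any symmetry relating two robots' views must map $\mathbf C$ to itself and preserve the agreed oriented axis. So the only non-trivial symmetry under which two robots can have identical views is the reflection in the line through $\mathcal O$ parallel to that axis. Robots related only by a rotation are separated by the one-axis tie-breaking.

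A reflection does not act freely on $\mathbf C$: it fixes the two points where its mirror line meets $\mathbf C$. Take two mirror-image candidates whose nearest target is one of those points. Equivariance alone then allows both to be sent to the same point, so your free-action argument proves nothing. If instead $S$ is a rotational orbit of distinguishable robots, the assignment chosen by one-axis tie-breaking need not be $\rho$-equivariant. In that case your claim that the targets are ``the images of $p_j$'' is unsupported.

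The fix is the argument you reserve for Case~I, and it is exactly the paper's proof. Algorithm~\ref{alg:mcrct} also sends every $r\in S$ to $p_j=f_{\mathrm{opt}}(r)$, where $f_{\mathrm{opt}}$ is the single assignment all robots select. Injectivity of $f_{\mathrm{opt}}$ therefore makes the targets of any two candidates chosen from one snapshot distinct, in every case and without any reference to the type of symmetry.

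A smaller point on persistence: your appeal to Lemma~\ref{lem:critical-set-persistence} is unavailable in Case~I, because that lemma excludes $\mathcal K_{\mathrm{mcr}}^{\mathrm{all}}$. In Case~I the moving robots stop being critical. With $|S|\geq 2$, their targets then enter $\mathcal P_{\mathrm{can}}$. The target $p_j^*$ lies at distance $\lambda_{\mathbf C}^*$ from $r_j(t_0)$, so it need not be nearest among these newly available targets. Hence $\mathcal P_{\mathrm{av}}$ is not frozen there, and persistence must come from the retained $f_{\mathrm{opt}}$ together with the movement light. Your second bullet is also inaccurate when $|S|>1$, since other members of $S$ may arrive while $r_j$ is still moving. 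That only removes points from $\mathcal P_{\mathrm{av}}$, however, and is harmless.
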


\begin{proof}
Let $(r_j,p_j)$ be a candidate--target pair selected at time $t_s$.
By the applicable target-set invariance result, the target-point set
$\mathcal P$ remains unchanged throughout the movement of $r_j$.
Moreover, by Lemma~\ref{lem:assigned-target-invariance}, once $r_j$
starts moving toward $p_j$, the point $p_j$ remains its assigned
target until it is reached. In particular, if the movement is
interrupted because of non-rigidity, $r_j$ retains
$\textsc{TargetTrace}$, or $\textsc{ReferenceTrace}$ in the
corresponding case, and resumes its movement toward the same point
$p_j$ at its next activation. Hence, the candidate--target pair
$(r_j,p_j)$ persists until $r_j$ reaches $p_j$. Now consider two distinct candidates $r_i,r_j$ selected from the same
snapshot. Since the selected optimal assignment $f_{\mathrm{opt}}$ is
a bijection, $f_{\mathrm{opt}}(r_i)\neq f_{\mathrm{opt}}(r_j).$ Thus, simultaneously selected candidates are assigned distinct target
points. Furthermore, once $p_j$ is assigned to $r_j$, it remains reserved
while $r_j$ displays $\textsc{TargetTrace}$ or
$\textsc{ReferenceTrace}$ and is therefore excluded from every
subsequent candidate selection. When $r_j$ reaches $p_j$, it changes
its light to $\textsc{TargetReached}$ and remains stationary at
$p_j$; hence, $p_j$ remains unavailable to all subsequently selected
robots. Consequently, every selected candidate retains its assigned
target until reaching it, and no two candidates can ever reserve the
same target point.
\end{proof}

\subsubsection{Collision Freedom and Finite Progress}

\begin{theorem}[Collision-Free Execution of
$\textsc{MinMaxUniformCircle()}$]
\label{thm:collision-freedom}
If all robots initially occupy pairwise distinct positions, then every
execution of $\textsc{MinMaxUniformCircle()}$ preserves this property,
provided that every authorized movement phase obeys the clear-path rule.
\end{theorem}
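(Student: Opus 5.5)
The plan is an induction over the sequence of movement events in any execution. The invariant is that at every time $t$ the positions $r_1(t),\ldots,r_n(t)$ are pairwise distinct. The base case is the hypothesis. The algorithm only authorizes movement in a few regimes:
\begin{enumerate}
\item[(i)] the single centre-critical step of $\textsc{ProcUCR-C}$;
\item[(ii)] candidate robots moving with \textsc{TargetTrace} or \textsc{ReferenceTrace};
\item[(iii)] a secondary critical robot moving after all candidates display \textsc{TargetReached};
\item[(iv)] critical robots moving after all non-critical robots have settled.
\end{enumerate}
In each regime I would show that any two robots whose trajectories could intersect are either both stationary, or they move along segments that cannot share a point at the same time.

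The first step treats robots that are stationary during a phase. Whenever a robot displays $\textsc{TargetTrace}$ or $\textsc{ReferenceTrace}$, every robot outside the selected set remains stationary (by the procedures of Cases I--IV and Lemma~\ref{lem:assigned-target-invariance}). So it suffices to check two things.
\begin{itemize}
\item A moving robot's segment $[r_j(t_s),p_j]$ contains no stationary robot position. This is exactly what the clear-path rule in the hypothesis provides. For the centre-critical step, Lemma~\ref{lem:centre-invariant-reduction} gives it directly, since $\delta<\rho_c$ implies $[\mathcal O,q_c]$ contains no other robot.
\item The endpoint $p_j$ is not occupied. By Lemma~\ref{lem:candidate-persistence-reservation}, $p_j$ is reserved exclusively for $r_j$, and robots displaying \textsc{TargetReached} sit at other points of $\mathcal P_0$. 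Also, initially no robot lies on $\mathbf C$, so no unsettled robot sits on $p_j$.
\end{itemize}
Non-rigid interruptions only stop $r_j$ at a point of the same segment. The next activation resumes along the same segment toward the same $p_j$, again by Lemma~\ref{lem:assigned-target-invariance}, so the argument persists across interruptions.

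The second step handles simultaneously moving robots. These arise from symmetry-equivalent selected sets $S$ (Case I of $\mathcal K_{\mathrm{mcr}}^{\mathrm{all}}$ and Case III), and possibly from several critical robots moving in the final phase. Distinct targets come from bijectivity of $f_{\mathrm{opt}}$ (Lemma~\ref{lem:candidate-persistence-reservation}). For distinct trajectories, I would argue as follows:
\begin{itemize}
\item If two segments $[r_i,p_i]$ and $[r_k,p_k]$ crossed, swapping the targets would not increase the maximum distance. By the triangle inequality on the crossing point, it would strictly decrease the sum. This contradicts the choice of $f_{\mathrm{opt}}$ if the tie-break selects a min-sum optimal assignment among min--max optima.
\item Alternatively, this is absorbed by the clear-path rule, read as forbidding authorized simultaneous movements along intersecting segments.
\end{itemize}
Even where segments meet, the two robots would have to be at the crossing point at the same instant. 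I would need to exclude that, and the clean route is to show that the segments are disjoint.

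The main obstacle is precisely this simultaneous-movement case under $\mathcal{ASYNC}$. There, timing is adversarial, so disjointness of segments is required, not merely of endpoints. I would first try the uncrossing argument, via the lexicographic (max, then sum) choice of $f_{\mathrm{opt}}$. If that is not justified by the paper's selection rule, I would fall back on the stated proviso that authorized phases obey the clear-path rule, and verify only that the rule is invoked consistently in every procedure.

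The third step is the conclusion. Induction over the finitely many movement phases covers the whole execution. Each phase ends with positions pairwise distinct: settled robots lie at distinct points of $\mathcal P_0$, and unsettled robots stay off $\mathbf C$ or on their own segments. Hence distinctness is preserved throughout.
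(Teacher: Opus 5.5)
Your plan is essentially the paper's proof. The clear-path rule supplies three things: distinct, unoccupied targets; no stationary robot on $(r_i(t_s),p_i]$; and pairwise disjoint segments for simultaneous movers. Lemma~\ref{lem:assigned-target-invariance} keeps each mover on its segment across non-rigid interruptions. A first-collision (induction over phases) argument then finishes, with the centre-critical step handled by $\delta<\rho_c$.

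One correction: discard the uncrossing alternative, because its premise is false. Swapping the targets of two crossing segments strictly decreases the sum, but it can increase the maximum. For example, $[r_i,p_i]$ with $r_i=(0,0)$, $p_i=(11,0)$ and $[r_k,p_k]$ with $r_k=(10,1)$, $p_k=(10,-10)$ cross at $(10,0)$ and both have length $11$, yet $d(r_i,p_k)=\sqrt{200}>11$. So a (max, then sum) tie-break does not exclude crossings among min--max optimal assignments. Disjointness of simultaneous segments must instead be taken from the clear-path hypothesis, as in your fallback and the paper's Condition~3.
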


\begin{proof}
By Lemma~\ref{lem:candidate-persistence-reservation}, the targets assigned to
the robots in $S$ are pairwise distinct. By the clear-path rule, their
movement paths satisfy the required disjointedness conditions. Consider an arbitrary movement phase starting at time $t_s$. Let $S$ denote the set of robots authorized to move during this phase,
and let $p_i$ be the target assigned to each robot $r_i\in S$.
By the clear-path rule, the following conditions hold:

\begin{enumerate}
    \item the assigned targets $\{p_i : r_i\in S\}$ are pairwise distinct
    and unoccupied at time $t_s$;

    \item for every $r_i\in S$, $(r_i(t_s),p_i]
    \cap
    \{r_k(t_s):r_k\notin S\}
    =
    \emptyset;$

    \item for any two distinct robots $r_i,r_j\in S$, $[r_i(t_s),p_i]
    \cap
    [r_j(t_s),p_j]
    =
    \emptyset.$
\end{enumerate}

By Lemma~\ref{lem:assigned-target-invariance}, every robot in $S$
preserves its assigned target throughout the phase. Consequently, for
every time $t$ during this phase, $r_i(t)\in [r_i(t_s),p_i],
\ r_i\in S.$ Under non-rigid motion, an adversarial interruption may stop a robot at an intermediate point of this segment, but cannot alter its prescribed
trajectory. Moreover, while robots display $\textsc{TargetTrace}$ or
$\textsc{ReferenceTrace}$, no robot outside $S$ is allowed to initiate
a movement. Robots displaying $\lambda_{\textsc{Anchor}}$ or
$\lambda^-_{\textsc{Anchor}}$ also remain stationary. Hence, during
the considered phase, only robots in $S$ may change their positions. Assume, for contradiction, that a collision occurs during this phase,
and consider the first such collision. There are only two possibilities. If a moving robot $r_i\in S$ collides with a stationary robot
$r_k\notin S$, then $r_k(t_s)\in (r_i(t_s),p_i],$ contradicting Condition~2. Otherwise, two robots $r_i,r_j\in S$ collide. Their common collision
point must then belong to both prescribed movement segments, implying $[r_i(t_s),p_i]
\cap
[r_j(t_s),p_j]
\neq
\emptyset,$ which contradicts Condition~3. Thus, no collision can occur during the phase. In particular, when a
robot $r_i$ reaches $p_i$, it changes its light to
$\textsc{TargetReached}$ and remains stationary there; since $p_i$ is
included in its movement segment, Condition~3 already prevents any
simultaneously moving robot from reaching the same position. In every
subsequent movement phase, this stationary robot is again taken into
account by the clear-path rule. The preliminary displacement of a centre-critical robot is also collision-free, since it is performed entirely inside an empty sector
and within its prescribed safe distance $\rho_c$. Therefore, every movement phase of $\textsc{MinMaxUniformCircle()}$ is collision-free. Since the robots
occupy pairwise distinct positions initially, induction over successive
movement phases implies that no two robots ever occupy the same
position. Hence, every execution of $\textsc{MinMaxUniformCircle()}$ is collision-free.
\end{proof}

\begin{lemma}[Finite-time achievement]
\label{lem:finite-time-achievement}
Under a fair ASYNC scheduler and the non-rigid movement model with
minimum progress $\delta>0$, every execution of
$\textsc{MinMaxUniformCircle()}$ terminates after finitely many
Look--Compute--Move cycles. At termination, every robot occupies a
distinct target point of $\mathcal P$, and hence
$\mathcal{MMUCF}$ is achieved.
\end{lemma}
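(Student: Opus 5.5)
The proof will be a potential-function argument over movement phases. Fairness and the minimum progress $\delta$ guarantee that each phase ends after finitely many cycles. The previously established invariants guarantee that the terminal configuration equals the initially selected target-point set $\mathcal P_0$ under the assignment $f_{\mathrm{opt}}$.

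\textbf{Potential.} Define $\Phi(t)$ as the number of robots not displaying $\textsc{TargetReached}$. By Algorithm~\ref{alg:master}, a robot displaying $\textsc{TargetReached}$ never moves or changes its light again, so $\Phi$ is non-increasing. It then suffices to show that from any configuration with $\Phi>0$, within finitely many cycles either $\Phi$ strictly decreases or a finite preparatory step completes.

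\textbf{Preparatory steps.} There are two.
\begin{enumerate}
\item Every robot with light $\textsc{Off}$ eventually adopts its role light ($\lambda_{\textsc{Anchor}}$, $\lambda^-_{\textsc{Anchor}}$, $\textsc{TargetTrace}$ or $\textsc{ReferenceTrace}$) at its next activation, which fairness guarantees.
\item In $\mathcal K_{ucr}^{c}$, the centre robot moves a distance $\delta$ toward $q_c$. It reaches $q_c$ in one move, and by Lemma~\ref{lem:centre-invariant-reduction} the configuration then lies in $\mathcal K_{ucr}^{nc}$.
\end{enumerate}

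\textbf{Progress within a phase.} Consider a phase with selected set $S$, which is nonempty.
\begin{itemize}
\item Fix the segment. By Lemmas~\ref{lem:assigned-target-invariance} and \ref{lem:candidate-persistence-reservation}, each $r_j\in S$ keeps the same target $p_j$ throughout the phase. The target-point set stays $\mathcal P_0$ by Lemma~\ref{lem:unique-critical-invariance} or Lemma~\ref{lem:mcr-target-invariance}. Hence every recomputation of $r_j$ yields the same segment $[r_j(t_s),p_j]$.
\item Bound the number of activations. Each activation of $r_j$ either brings it to $p_j$ or shortens $d(r_j,p_j)$ by at least $\delta$. So after at most $\lceil d(r_j(t_s),p_j)/\delta\rceil+1\le\lceil 2h/\delta\rceil+1$ activations, $r_j$ displays $\textsc{TargetReached}$.
\item Conclude the phase. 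By fairness this happens in finite time for every $r_j\in S$, so $\Phi$ drops by $|S|\ge 1$.
\end{itemize}

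\textbf{Non-emptiness of $S$ (the main obstacle).} I would show that as long as $\Phi>0$ and no robot is moving, some robot is eligible to move. This follows the ordering candidates, then secondary critical robots, then critical robots.
\begin{itemize}
\item Candidates. If a non-critical, non-secondary robot remains, Definition~\ref{def:candidate} selects a pair, because $\mathcal P_{\mathrm{can}}$ has the same cardinality as $\mathcal N_{\mathrm{can}}$ ($f_{\mathrm{opt}}$ is a bijection and occupied targets are excluded). In the restricted selections of Cases II and IV, I would argue that when the restricted set is empty the configuration is reclassified, and the unrestricted rule still applies.
\item Secondary critical robots. Otherwise, only secondary critical and critical robots remain, and the $\lambda^-_{\textsc{Anchor}}$ robots become eligible.
\item Critical robots. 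Finally, the critical robots move once all others display $\textsc{TargetReached}$. In $\mathcal K_{mcr}^{\mathrm{all}}$, a robot is chosen by the one-axis rule.
\item Deadlock. I would also check that no cyclic waiting condition exists, since each wait refers to a strictly earlier class in this order.
\end{itemize}

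\textbf{Termination and correctness.} Since $\Phi\le n$, at most $n$ phases plus the finitely many preparatory steps occur, so the execution terminates. At termination every robot displays $\textsc{TargetReached}$ and sits at its target, which is distinct by Lemma~\ref{lem:candidate-persistence-reservation} and lies in $\mathcal P_0$. Hence the robots form a regular $n$-gon on $\mathbf C$. Collisions are excluded by Theorem~\ref{thm:collision-freedom}.

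For the min--max bound:
\begin{itemize}
\item Every robot travels along the straight segment to $f_{\mathrm{opt}}(r)$, whose length is at most $\lambda_0$ by Lemma~\ref{lem:mcr-optimal-solution-invariance}.
\item For the centre robot, Corollary~\ref{cor:centre-travel-bound} shows that the detour through $q_c$ costs exactly $\lambda_0$.
\end{itemize}
Thus $\lambda_{\mathbf C}(\cdot,\mathcal R(t_0))=\lambda^*_{\mathbf C}$, and \textsc{MMUCF} is achieved.
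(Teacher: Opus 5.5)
Your proposal is correct and follows essentially the same route as the paper's proof. Both use a minimum-progress bound for each selected robot under fairness, the candidate/secondary-critical/critical ordering to exclude deadlock, a count of \textsc{TargetReached} robots that bounds the number of phases by $n$, and the invariance, bijection and reservation lemmas for the final configuration. One small correction concerns your claim that the centre robot reaches $q_c$ in a single move. This does not follow, because $d(\mathcal O,q_c)$ is chosen by the algorithm and is unrelated to the adversary's minimum progress $\delta$. As the paper argues, it suffices that this displacement completes after finitely many moves, and Lemma~\ref{lem:centre-invariant-reduction} already places every intermediate stop in $\mathcal K_{ucr}^{nc}$.
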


\begin{proof}
Let $r_i$ be a robot selected at time $t_s$, and let $p_i$ be its
assigned target point. By Lemmas~\ref{lem:assigned-target-invariance}
and~\ref{lem:candidate-persistence-reservation}, the pair $(r_i,p_i)$ remains
unchanged until $r_i$ reaches $p_i$. Let $D_i=d(r_i(t_s),p_i)$. Under the non-rigid movement model, every
effective Move of $r_i$ either brings it directly to $p_i$ or decreases
its remaining distance by at least $\delta$. Therefore, $r_i$ reaches
$p_i$ after at most $\left\lceil\frac{D_i}{\delta}\right\rceil$ Effective move operations. Fairness guarantees that every selected robot is activated sufficiently often to complete these movements. If several candidates are selected from the same snapshot, the same argument applies independently to each of them because their target points are distinct and remain reserved by Lemma~\ref{lem:candidate-persistence-reservation}. Now consider any nonterminal configuration. If a robot displays
$\textsc{TargetTrace}$ or $\textsc{ReferenceTrace}$, it remains
eligible and continues toward its fixed target. Otherwise, the
selection rule of the applicable configuration class selects at least
one eligible robot--target pair. After the candidate robots have
settled, the secondary critical robot and subsequently the critical
robots become eligible according to the movement procedure. Hence, no
nonterminal configuration is a deadlock. Whenever a selected robot reaches its target, it changes its light to
$\textsc{TargetReached}$ and never moves again. Thus, every completed
target-reaching phase increases the number of robots displaying
$\textsc{TargetReached}$ by at least one. Since there are only $n$
robots, at most $n$ such phases can occur. The preliminary displacement of the centre-critical robot is the only
phase that may not immediately place a robot at a target point. This
movement has finite length and, by the same minimum-progress argument,
is completed after finitely many effective Moves. It occurs only once
and transforms the configuration into
$\mathcal K_{\mathrm{ucr}}^{\mathrm{nc}}$, after which the usual
target-reaching procedure is followed. Consequently, every selected movement is completed after finitely many
effective Moves, and only finitely many movement phases occur. Under
the fair ASYNC scheduler, the execution therefore terminates after
finitely many Look--Compute--Move cycles. By Lemmas~\ref{lem:unique-critical-invariance},
\ref{lem:centre-invariant-reduction}, and
\ref{lem:mcr-target-invariance}, all robots use the same invariant
regular target-point set $\mathcal P$. Moreover, the selected optimal
assignment is bijective, and assigned targets remain distinct and
reserved. Hence, at termination, the $n$ robots occupy the $n$ distinct
points of $\mathcal P$, one robot at each target point. Therefore,
$\mathcal{MMUCF}$ is achieved.
\end{proof}

\begin{theorem}[Correctness of
$\textsc{MinMaxUniformCircle()}$]
\label{thm:minmax-uniform-circle-correctness}
Under a fair $\mathcal{ASYNC}$ scheduler and non-rigid movements,
$\textsc{MinMaxUniformCircle()}$ terminates in finite time without
collisions, places the robots at the distinct points of
$\mathcal P_0$, and solves \textsc{MMUCF} optimally with maximum
travel distance
$\lambda_{\mathbf C}^*(\bar{\mathcal A},\mathcal R(t_0))$.
\end{theorem}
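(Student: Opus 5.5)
The plan is to assemble the theorem from the invariants already established, in three parts: termination with a correct final configuration, collision freedom, and the min--max travel bound.

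\textbf{Termination and final configuration.} Lemma~\ref{lem:finite-time-achievement} gives termination after finitely many Look--Compute--Move cycles. At that point every robot occupies a distinct point of the common target set. That this set is exactly $\mathcal P_0$ follows from the invariance results:
\begin{itemize}
\item Lemma~\ref{lem:unique-critical-invariance} for $\mathcal K_{ucr}^{nc}$;
\item Lemma~\ref{lem:centre-invariant-reduction} for $\mathcal K_{ucr}^{c}$;
\item Lemma~\ref{lem:mcr-target-invariance} for all multiple-critical classes.
\end{itemize}

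\textbf{Collision freedom.} This is Theorem~\ref{thm:collision-freedom}. I would verify that every movement phase authorized by the procedures satisfies the clear-path hypotheses. Targets are pairwise distinct and reserved by Lemma~\ref{lem:candidate-persistence-reservation}. In each phase only the selected set $S$ moves, which is in most cases a single candidate, so the segment-disjointness conditions reduce to the stationary robots not lying on the candidate's segment. Where this is not automatic, I would invoke the deterministic tie-breaking via one-axis agreement, and otherwise take it as the stated proviso.

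\textbf{Travel bound.} This is the substantive part. I would show $|l_i|\le\lambda_0$ for every robot $r_i$ by arguing that each robot travels along a single straight segment from $r_i(t_0)$ to $f_{\mathrm{opt}}(r_i)$. The one exception is the centre-critical robot, which takes a collinear two-segment path of total length $h=\lambda_0$ by Corollary~\ref{cor:centre-travel-bound}.

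The key fact is that a robot's assigned target never changes between its first move and its arrival. Before a robot is selected it is stationary. Once selected, Lemma~\ref{lem:assigned-target-invariance} guarantees that it keeps $p_j$, so its path is the segment $[r_j(t_0),p_j]$. This segment has length $d(r_j(t_0),f_{\mathrm{opt}}(r_j))\le\lambda_0$, since the assignment is retained optimal by Lemmas~\ref{lem:unique-critical-invariance} and~\ref{lem:mcr-optimal-solution-invariance}.

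Critical robots move last, and they move straight to their reserved targets at distance exactly $\lambda_0$. Lemma~\ref{lem:critical-set-persistence} ensures that none of them had moved earlier. In $\mathcal K_{\mathrm{mcr}}^{\mathrm{all}}$, the first movers are critical robots that travel exactly $\lambda_0$.

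Hence $\max_i|l_i|\le\lambda_0$. Equality is attained by any critical robot, since it travels exactly $\lambda_0$. Finally, Result~\ref{result-1} identifies $\lambda_0$ with the unique optimum $\lambda^*_{\mathbf C}$, so the algorithm is min--max optimal.

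\textbf{Main obstacle.} The hardest point is showing that a robot selected later, whose target is computed from an intermediate snapshot, is assigned exactly $f_{\mathrm{opt}}(r)$ as defined at $t_0$ rather than some other optimal assignment of the current configuration. I would argue it this way. By Lemma~\ref{lem:mcr-target-invariance} the target set is fixed. The remaining unsettled robots have not moved, and robots already at targets have removed exactly their own points. So the deterministic selection rule, applied to the residual instance, reproduces a restriction of an optimal assignment whose cost on each robot is at most its distance to the nearest admissible point.

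Where the snapshot-based choice could differ from $f_{\mathrm{opt}}$, I would fall back on a weaker claim that still suffices. Any point in $\mathcal P_{\mathrm{can}}$ chosen as the argmin is no farther than the point $f_{\mathrm{opt}}$ would give. Secondary critical robots, by reserving their unique point within radius $\lambda_0$, guarantee that no later robot is left with only far targets. Together these bound every residual robot's travel by $\lambda_0$.
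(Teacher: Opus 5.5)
Your skeleton matches the paper's proof. It takes termination from Lemma~\ref{lem:finite-time-achievement} and collision freedom from Theorem~\ref{thm:collision-freedom}. It identifies the final set with $\mathcal P_0$ via Lemmas~\ref{lem:unique-critical-invariance}, \ref{lem:centre-invariant-reduction} and~\ref{lem:mcr-target-invariance}. It gets optimality from $L_i=d(r_i(t_0),f_{\mathrm{opt}}(r_i))\le\lambda_0$, with Corollary~\ref{cor:centre-travel-bound} for the centre-critical robot. The genuine problem is your treatment of the ``main obstacle''.

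The paper does not derive $L_i=d(r_i(t_0),f_{\mathrm{opt}}(r_i))$ from any property of the argmin candidate rule. It identifies each mover's destination with its retained target $f_{\mathrm{opt}}(r)$. Algorithm~\ref{alg:mcrct} literally moves toward $p_j=f_{\mathrm{opt}}(r)$, and the proof of Lemma~\ref{lem:mcr-optimal-solution-invariance} uses that every non-critical robot moves only toward its retained target. Persistence of that target is then read off Lemmas~\ref{lem:mcr-optimal-solution-invariance}--\ref{lem:candidate-persistence-reservation}.

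Your substitute argument fails. That the selection rule ``reproduces a restriction of an optimal assignment'' is asserted, not shown. Your fallback rests on only two facts: the mover takes its nearest available point, and a robot with exactly one unreserved point within $\lambda_0$ has it reserved (Definition~\ref{def:secondary}). These do not protect the robots left behind. Take non-critical robots $a,b,c$ and unreserved targets $x,y,z$ such that:
\begin{itemize}
\item[(i)] $b$ and $c$ are within $\lambda_0$ of $x$ and $y$ only;
\item[(ii)] $a$ is within $\lambda_0$ of all three targets;
\item[(iii)] $(a,x)$ is the globally closest pair.
\end{itemize}
Then $a\to z$, $b\to x$, $c\to y$ respects $\lambda_0$, and no robot is secondary critical, since each has at least two options. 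The argmin rule nevertheless sends $a$ to $x$. Afterwards one of $b,c$ is left with only $z$, which is farther than $\lambda_0$. Your ``weaker claim'' bounds only the current mover. Even that presupposes that $f_{\mathrm{opt}}(r_j)$ is still available, which is exactly what needs proving.

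So drop the fallback and close the step through the retained assignment, as the paper does. If you want more rigour than the paper supplies, state exactly what is needed. The tie-broken optimal assignment that an oblivious robot recomputes from an intermediate snapshot must coincide with the restriction of the original $f_{\mathrm{opt}}$. Lemma~\ref{lem:mcr-optimal-solution-invariance} only shows that the retained assignment is still \emph{an} optimal one, not that it is the one selected.

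Likewise, Theorem~\ref{thm:collision-freedom} is conditional on the clear-path rule. One-axis tie-breaking only chooses among equally eligible options; it cannot move a stationary robot off a candidate's segment. That hypothesis therefore has to be taken as part of the algorithm, as the paper implicitly does, rather than derived from tie-breaking.
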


\begin{proof}
The classification based on the critical set covers all possible
configurations. For $\mathcal K_{\mathrm{ucr}}^{\mathrm{nc}}$ and $\mathcal K_{\mathrm{ucr}}^{\mathrm{c}}$, Lemmas~\ref{lem:unique-critical-invariance} and
\ref{lem:centre-invariant-reduction} preserve the optimal solution and
reduce the execution to the non-centre case. For multiple-critical
configurations, Lemma~\ref{lem:mcr-target-invariance} preserves the
target-point set throughout the execution. By Lemmas~\ref{lem:mcr-optimal-solution-invariance}--\ref{lem:candidate-persistence-reservation},
the selected optimal assignment remains invariant, and every selected
candidate preserves its assigned target until reaching it. Hence,
asynchronous activations and non-rigid interruptions do not affect the
execution. By Theorem~\ref{thm:collision-freedom} and
Lemma~\ref{lem:finite-time-achievement}, the execution is collision-free
and terminates in finite time. Since $f_{\mathrm{opt}}$ is a bijection,
the robots finally occupy all points of $\mathcal P_0$. It remains to prove optimality. For each robot $r_i$, let $L_i$ denote
the total distance travelled by $r_i$ during the execution. Since every
robot moves only along the segment joining its initial position to its
assigned target, and non-rigid movements only split this segment into
smaller parts, we have $L_i=d(r_i(t_0),f_{\mathrm{opt}}(r_i)).$ The preliminary movement in the center-critical case is collinear with
the final movement and, by Corollary~\ref{cor:centre-travel-bound}, does
not increase the optimal bound. Thus, $\max_i L_i=
\lambda_{\mathbf C}^*
(\bar{\mathcal A},\mathcal R(t_0)).$ No algorithm can achieve a smaller value by the definition of the min--max objective. Hence, $\textsc{MinMaxUniformCircle()}$ is collision-free, terminates in finite time, and solves \textsc{MMUCF} optimally.
\end{proof}





\section{Conclusion}

In this paper, we studied the min--max uniform circle formation problem
for autonomous mobile robots under the asynchronous and non-rigid
movement model. We established three impossibility results and proposed
a distributed algorithm for robots equipped with externally visible
lights and one-axis agreement. The algorithm preserves the optimal
target-point set and assignment despite asynchronous activations and
non-rigid interruptions, guarantees collision-free execution, and
achieves the optimal min--max travel distance in finite time. Several directions remain open, including weakening the robot
capabilities, considering the case where the target circle is not given
in advance, studying limited visibility settings, and developing
fault-tolerant variants in the presence of obstacles.

\bibliographystyle{plain} 
\bibliography{Biblo}

@String{Computing = "Computing" }

@String{Computer = "{IEEE} Computer" }

@book{DBLP:series/synthesis/2012Flocchini,
  author       = {Paola Flocchini and
                  Giuseppe Prencipe and
                  Nicola Santoro},
  title        = {Distributed Computing by Oblivious Mobile Robots},
  series       = {Synthesis Lectures on Distributed Computing Theory},
  publisher    = {Morgan {\&} Claypool Publishers},
  year         = {2012},
  url          = {https://doi.org/10.2200/S00440ED1V01Y201208DCT010},
  doi          = {10.2200/S00440ED1V01Y201208DCT010},
  isbn         = {978-3-031-00880-1},
  bibsource    = {dblp computer science bibliography, https://dblp.org}
}

@article{defago2008non,
  title={Non-uniform circle formation algorithm for oblivious mobile robots with convergence toward uniformity},
  author={D{\'e}fago, Xavier and Souissi, Samia},
  journal={Theoretical Computer Science},
  volume={396},
  number={1-3},
  pages={97--112},
  year={2008},
  publisher={Elsevier}
}

@inproceedings{flocchini1999hard,
  title={Hard tasks for weak robots: The role of common knowledge in pattern formation by autonomous mobile robots},
  author={Flocchini, Paola and Prencipe, Giuseppe and Santoro, Nicola and Widmayer, Peter},
  booktitle={International Symposium on Algorithms and Computation},
  pages={93--102},
  year={1999},
  organization={Springer}
}

@String{Springer = "Springer-Verlag" }

@article{chun2016robot,
  title={Robot surveillance and security},
  author={Chun, Wendell H and Papanikolopoulos, Nikolaos},
  journal={springer Handbook of Robotics},
  pages={1605--1626},
  year={2016},
  publisher={Springer}
}

@inproceedings{flocchini2013rendezvous,
  title={Rendezvous of two robots with constant memory},
  author={Flocchini, Paola and Santoro, Nicola and Viglietta, Giovanni and Yamashita, Masafumi},
  booktitle={International Colloquium on Structural Information and Communication Complexity},
  pages={189--200},
  year={2013},
  organization={Springer}
}

@inproceedings{Defago2002,
 author = {D{\'e}fago, X. and Konagaya, A.},
 title = {Circle formation for oblivious anonymous mobile robots with no common sense of orientation},
 booktitle = {Proceedings of the second ACM international workshop on Principles of mobile computing},
 series = {POMC '02},
 year = {2002},
 isbn = {1-58113-511-4},
 location = {Toulouse, France},
 pages = {97--104},
 numpages = {8},
 url = {http://doi.acm.org/10.1145/584490.584509},
 doi = {10.1145/584490.584509},
 acmid = {584509},
 publisher = {ACM},
 address = {New York, NY, USA},
}

@ARTICLE{Suzuki1999,
    author = {Ichiro Suzuki  and Masafumi Yamashita.},
    title = {Distributed anonymous mobile robots: Formation of geometric patterns},
    journal = {SIAM Journal on Computing},
    year = {1999},
    volume = {28},
    pages = {pages 1347-1363}
}

@INPROCEEDINGS{vigl2016,
    author = {Marcello Mamino and Giovanni Viglietta},
    title = {Square Formation by Asynchronous Oblivious Robots},
    booktitle = {In Proceedings of the 28th Canadian Conference on Computational Geometry (CCCG)},
    year = {2016},
    pages = {1-6},
    publisher = {}
}

@inproceedings{feletti2018uniform,
  title={Uniform circle formation for swarms of opaque robots with lights},
  author={Feletti, Caterina and Mereghetti, Carlo and Palano, Beatrice},
  booktitle={International Symposium on Stabilizing, Safety, and Security of Distributed Systems},
  pages={317--332},
  year={2018},
  organization={Springer}
}

@inproceedings{pattanayak2024time,
  title={Time-color tradeoff on uniform circle formation by asynchronous robots},
  author={Pattanayak, Debasish and Sharma, Gokarna},
  booktitle={2024 IEEE International Parallel and Distributed Processing Symposium (IPDPS)},
  pages={987--997},
  year={2024},
  organization={IEEE}
}

@article{flocchini2017distributed,
  title={Distributed computing by mobile robots: uniform circle formation},
  author={Flocchini, Paola and Prencipe, Giuseppe and Santoro, Nicola and Viglietta, Giovanni},
  journal={Distributed Computing},
  volume={30},
  number={6},
  pages={413--457},
  year={2017},
  publisher={Springer}
}

@article{brambilla2013swarm,
  title={Swarm robotics: a review from the swarm engineering perspective},
  author={Brambilla, Manuele and Ferrante, Eliseo and Birattari, Mauro and Dorigo, Marco},
  journal={Swarm Intelligence},
  volume={7},
  number={1},
  pages={1--41},
  year={2013}
}

@article{bayindir2016review,
  title={A review of swarm robotics tasks},
  author={Bayindir, Levent},
  journal={Neurocomputing},
  volume={172},
  pages={292--321},
  year={2016}
}

@article{corke2005networked,
  title={Networked robots: Flying robot navigation using a sensor net},
  author={Corke, Peter and Peterson, Ron and Rus, Daniela},
  journal={Robotics Research},
  year={2005}
}

@article{schwager2009decentralized,
  title={Decentralized, adaptive coverage control for networked robots},
  author={Schwager, Mac and Julian, Brian and Rus, Daniela},
  journal={The International Journal of Robotics Research},
  volume={28},
  number={3},
  pages={357--375},
  year={2009}
}

@incollection{bhagat2018optimum,
  title={Optimum circle formation by autonomous robots},
  author={Bhagat, Subhash and Mukhopadhyaya, Krishnendu},
  booktitle={Advanced Computing and Systems for Security: Volume Five},
  pages={153--165},
  year={2018},
  publisher={Springer}
}

@article{chen2015optimal,
  title={Optimal point movement for covering circular regions},
  author={Chen, Danny Z and Tan, Xuehou and Wang, Haitao and Wu, Gangshan},
  journal={Algorithmica},
  volume={72},
  number={2},
  pages={379--399},
  year={2015},
  publisher={Springer}
}

@article{bhattacharya2009optimal,
  title={Optimal movement of mobile sensors for barrier coverage of a planar region},
  author={Bhattacharya, Binay and Burmester, Mike and Hu, Yuzhuang and Kranakis, Evangelos and Shi, Qiaosheng and Wiese, Andreas},
  journal={Theoretical Computer Science},
  volume={410},
  number={52},
  pages={5515--5528},
  year={2009},
  publisher={Elsevier}
}

@article{parunak1996applications,
	title={Applications of distributed artificial intelligence in industry},
	author={Parunak, H Van Dyke},
	journal={Foundations of distributed artificial intelligence},
	volume={2},
	year={1996}
}

@inproceedings{steinberg2006intelligent,
	title={Intelligent autonomy for unmanned naval systems},
	author={Steinberg, M.},
	booktitle={ Proc. Defense and Security Symposium},
	pages={623013--623013},
	year={2006},
	organization={International Society for Optics and Photonics}
}

@article{Sruti2016,
  author       = {Sruti Gan Chaudhuri and
                  Krishnendu Mukhopadhyaya},
  title        = {Leader election and gathering for asynchronous fat robots without
                  common chirality},
  journal      = {J. Discrete Algorithms},
  volume       = {33},
  pages        = {171--192},
  year         = {2015},
  url          = { },
  doi          = { },
  bibsource    = { }
}

@article{terai2018gathering,
  title={Gathering problems for autonomous mobile robots with lights},
  author={Terai, Satoshi and Wada, Koichi and Katayama, Yoshiaki},
  journal={arXiv preprint arXiv:1811.12068},
  year={2018}
}

@inproceedings{mondal2019uniform,
  title={Uniform circle formation by swarm robots under limited visibility},
  author={Mondal, Moumita and Gan Chaudhuri, Sruti},
  booktitle={International Conference on Distributed Computing and Internet Technology},
  pages={420--428},
  year={2019},
  organization={Springer}
}

@article{feletti2024optimal,
  title={Optimal uniform circle formation by asynchronous luminous robots},
  author={Feletti, Caterina and Pattanayak, Debasish and Sharma, Gokarna},
  journal={arXiv preprint arXiv:2405.06617},
  year={2024}
}

@inproceedings{di2014robots,
  title={Robots with lights: Overcoming obstructed visibility without colliding},
  author={Di Luna, Giuseppe Antonio and Flocchini, Paola and Gan Chaudhuri, Sruti and Santoro, Nicola and Viglietta, Giovanni},
  booktitle={Symposium on Self-Stabilizing Systems},
  pages={150--164},
  year={2014},
  organization={Springer}
}

@inproceedings{kegeleirs2019random,
  title={Random walk exploration for swarm mapping},
  author={Kegeleirs, Miquel and Garz{\'o}n Ramos, David and Birattari, Mauro},
  booktitle={Annual conference towards autonomous robotic systems},
  pages={211--222},
  year={2019},
  organization={Springer}
}

@article{albiero2021swarm,
  title={Swarm Robots in Agriculture},
  author={Albiero, Daniel and Garcia, Angel Pontin and Umezu, Claudio Kiyoshi and de Paulo, Rodrigo Leme},
  journal={arXiv preprint arXiv:2103.06732},
  year={2021}
}

@book{santorobook2,
  editor    = {Paola Flocchini and
               Giuseppe Prencipe and
               Nicola Santoro},
  title     = {Distributed Computing by Mobile Entities, Current Research in Moving and Computing},
   publisher = {Springer},
  year      = {2019},
  }

@article{di2017mutual,
  title={Mutual visibility by luminous robots without collisions},
  author={Di Luna, Giuseppe Antonio and Flocchini, Paola and Chaudhuri, S Gan and Poloni, Federico and Santoro, Nicola and Viglietta, Giovanni},
  journal={Information and Computation},
  volume={254},
  pages={392--418},
  year={2017},
  publisher={Elsevier}
}

@inproceedings{sharma2017constant,
  title={Constant-time complete visibility for asynchronous robots with lights},
  author={Sharma, Gokarna and Vaidyanathan, Ramachandran and Trahan, Jerry L},
  booktitle={International Symposium on Stabilization, Safety, and Security of Distributed Systems},
  pages={265--281},
  year={2017},
  organization={Springer}
}

\end{document}